\documentclass[runningheads]{llncs}

\usepackage{mathtools,amsmath,amssymb,amsfonts}
\usepackage{stmaryrd}
\usepackage{hyperref}
\usepackage{xspace}

\usepackage{algorithm}
\usepackage[noend]{algpseudocode}
\usepackage{changepage}
\usepackage{thm-restate}
\usepackage{multirow}
\usepackage[capitalise]{cleveref}
\crefname{definition}{Def.}{Def.}
\crefname{problem}{Prob.}{Prob.}
\crefname{algorithm}{Algo.}{Algo.}
\crefname{theorem}{Thm.}{Thm.}
\crefname{lemma}{Lem.}{Lem.}
\crefname{appendix}{App.}{App.}
\crefname{line}{line}{line}
\crefname{section}{Sec.}{Sec.}
\crefname{corollary}{Cor.}{Cor.}
\crefname{proposition}{Prop.}{Prop.}
\usepackage{graphicx}
\usepackage{subcaption}
\usepackage[table,xcdraw]{xcolor}
\usepackage{longtable}
\usepackage{paralist}

\setlength{\marginparwidth}{2cm}
\usepackage[
colorinlistoftodos, %
textwidth=\marginparwidth, %
textsize=scriptsize, %
]{todonotes}

\usepackage{mdframed}

\usepackage{pgf}
\usepackage{tikz}
\usetikzlibrary{shapes,shapes.geometric,fit,trees,decorations,arrows,automata,shadows,positioning,plotmarks,backgrounds,tikzmark}
\usetikzlibrary{calc,matrix,fit,petri,decorations.markings,decorations.pathmorphing,patterns,intersections,decorations.text}

\usepackage{graphicx}
\usepackage[utf8x]{inputenc}
\usepackage[american]{babel}
\usepackage[babel]{csquotes}

\usepackage{algorithm, algpseudocode}

\usepackage{amscd}
\usepackage{amsmath}
\usepackage{amssymb}
\usepackage{mathtools}
\usepackage{array}

\usepackage{xspace}
\usepackage{paralist}
\usepackage{xifthen}
\usepackage{url}

\usepackage{pgf}
\usepackage{tikz}
\usepackage{lmodern}

\usepackage{MnSymbol,wasysym}
\usepackage{comment}

\colorlet{darkgreen}{green!80!black}
\colorlet{darkred}{red!80!black}
\usetikzlibrary{arrows, automata, shapes}
\tikzset{auto, >= stealth}
\tikzset{every edge/.append style={thick, shorten >= 1pt}}
\tikzset{initial/.style={draw, thick, <-, shorten <=1pt}}
\tikzset{player0/.style = {draw, thick, shape=circle, minimum size=5mm}}
\tikzset{player1/.style = {draw, thick, shape=rectangle, minimum size=5mm}}
\tikzset{bplayer0/.style = {draw, thick, shape=ellipse, minimum size=3mm,text width=0.65cm}}

\newcommand\hpos{1.8}
\newcommand\ypos{1.2}

\tikzset{
  bigger dots/.style={
    decoration={
      markings,
      mark=between positions 0.08 and 0.8 step 0.1 with {
        \fill[fill=white,draw=black] (0,0) circle (0.6pt); %
      },
    },
    postaction={decorate},
  },
}

\newcommand{\inodd}{\in_{\text{odd}}}
\newcommand{\ineven}{\in_{\text{even}}}
\newcommand{\abs}[1]{\left\lvert #1 \right\rvert}
\newcommand{\set}[1]{\left\lbrace #1\right\rbrace}
\newcommand{\tup}[1]{\left( #1\right)}

\newcommand{\true}{\mathsf{True}}

\newcommand{\LTLnext}{\bigcirc}
\newcommand{\LTLeventually}{\lozenge}
\newcommand{\LTLalways}{\square}

\newcommand{\lang}{\mathcal{L}}

\newcommand{\game}{\mathcal{G}}
\newcommand{\gamegraph}{\ensuremath{G}}
\newcommand{\play}{\ensuremath{\rho}}
\newcommand{\priority}{\mathbb{P}}

\newcommand{\auggame}{\ensuremath{\mathfrak{G}}}

\newcommand{\spec}{\ensuremath{\Phi}}

\newcommand{\p}[1]{\ensuremath{\text{Player}~#1}}
\newcommand{\pz}{\ensuremath{\p{0}}}
\newcommand{\po}{\ensuremath{\p{1}}}

\newcommand{\win}{\textit{Win}}
\newcommand{\winz}{\win_0}
\newcommand{\wino}{\win_1}
\newcommand{\wini}{\win_i}

\newcommand{\strat}{\ensuremath{\pi}}
\newcommand{\stratI}[1]{\ensuremath{\strat_{#1}}}
\newcommand{\stratz}{\ensuremath{\stratI{0}}}
\newcommand{\strato}{\ensuremath{\stratI{1}}}
\newcommand{\strati}{\ensuremath{\stratI{i}}}

\newcommand{\buchi}{\ifmmode B\ddot{u}chi \else B\"uchi \fi}
\newcommand{\cobuchi}{\ifmmode co\text{-}B\ddot{u}chi \else co-B\"uchi \fi}

\newcommand{\rabin}{\textit{Rabin}}
\newcommand{\parity}{\ensuremath{\textit{Parity}}}

\newcommand{\attr}[3]{\textsf{attr}\ensuremath{^{#3}\left(#1,#2\right)}}
\newcommand{\attri}[2]{\attr{#1}{#2}{i}}
\newcommand{\attro}[2]{\attr{#1}{#2}{1}}
\newcommand{\attrz}[2]{\attr{#1}{#2}{0}}
\newcommand{\Attr}[3]{\textsc{Attr}\ensuremath{^{#3}\left(#1,#2\right)}}
\newcommand{\Attri}[2]{\Attr{#1}{#2}{i}}

\newcommand{\Attrz}[2]{\Attr{#1}{#2}{0}}

\newcommand{\pre}{\textsf{pre}}

\newcommand{\livegroups}{\mathcal{H}}
\newcommand{\livegroupSingle}{H}
\newcommand{\livedges}{E^\ell}
\newcommand{\colivegroup}{E^{c}}

\newcommand{\assump}{\ensuremath{\psi}}

\newcommand{\assumplive}{\ensuremath{\assump_{\textsc{live}}}}
\newcommand{\assumpgrlive}{\ensuremath{\assump_{\textsc{glive}}}}
\newcommand{\assumpcolive}{\ensuremath{\assump_{\textsc{colive}}}}
\newcommand{\source}{\mathsf{src}}

\newcommand{\perslivegroups}{\Lambda}
\newcommand{\persedges}{\ensuremath{\mathtt C}}
\newcommand{\perssource}{\ensuremath{\mathtt S}}
\newcommand{\perstarget}{\ensuremath{\mathtt T}}
\newcommand{\assumpC}{\ensuremath{\assump_{\textsc{cont}}}}
\newcommand{\assumpPers}{\ensuremath{\assump_{\textsc{pers}}}}

\newcommand{\solve}{\textsc{Solve}}
\newcommand{\reachPers}{\textsc{AttrPers}}
\newcommand{\solveColive}{\textsc{SolveColive}}

\newcommand{\proj}{\mathrm{proj}}

\newcommand{\Ha}{\mathfrak{H}}

\newcommand{\AtrE}{\textsf{attr}^0_{\text{pers}}}
\newcommand{\AtrO}{\textsf{attr}^1_{\text{pers}}}
\newcommand{\AtrP}{\textsf{attr}^P_{\text{pers}}}
\newcommand{\AtrI}{\textsf{attr}^i_{\text{pers}}}
\newcommand{\Atr}[1]{\textsf{attr}^{#1}_{\text{pers}}}
\newcommand{\AtrPbar}{\textsf{attr}^{1-P}_{\text{pers}}}
\newcommand{\solvePers}{\textsc{SolvePers}}
\newcommand{\qsolve}{\textsc{QSolve}}

\newcommand{\wait}{\textsf{wait}}
\newcommand{\req}{\textsf{request}}
\newcommand{\grant}{\textsf{grant}}

\newcommand{\NP}{\mathrm{NP}}
\newcommand{\PTIME}{\mathrm{PTIME}}
\newcommand{\QP}{\mathrm{QP}}

\newcommand{\product}{\mathcal{P}}

\newcommand{\init}{\mathsf{init}}
\newcommand{\labeling}{\kappa}

\newcommand{\CNFname}{live CNF group\xspace}
\newcommand{\CNFnames}{live CNF groups\xspace}
\newcommand{\liveCNF}{\mathcal{F}}
\newcommand{\assumpCNFlive}{\ensuremath{\assump_{\textsc{clive}}}}

\graphicspath{{./img/}{./img/svg/}}

\begin{document}
\title{Solving Two-Player Games under Progress Assumptions\thanks{Authors are ordered randomly, denoted by
\textcircled{r}. The publicly verifiable record of the randomization is available at
\href{https://www.aeaweb.org/journals/policies/random-author-order/search?RandomAuthorsSearch\%5Bsearch\%5D=_buCtGKWs5t7}{www.aeaweb.org}.
}
}
\author{Anne-Kathrin Schmuck\inst{1} \textcircled{r}
K. S. Thejaswini\inst{2} \textcircled{r}
Irmak Sa\u{g}lam\inst{1}\textcircled{r}
Satya Prakash Nayak\inst{1}
}
\authorrunning{A-K. Schmuck \textcircled{r} K. S. Thejaswini \textcircled{r} I. Sa\u{g}lam \textcircled{r} S. P. Nayak}
\institute{Max Planck Institute for Software Systems (MPI-SWS), Kaiserslautern, Germany 
\email{\{akschmuck, isaglam, sanayak\}@mpi-sws.org}
\and Department of Computer Science, University of Warwick, UK
\email{thejaswini.raghavan.1@warwick.ac.uk}
}

\maketitle              %
\begin{abstract}
This paper considers the problem of solving infinite two-player games over finite graphs under various classes of \emph{progress assumptions} motivated by applications in cyber-physical system (CPS) design.

Formally, we consider a game graph $\gamegraph$, a temporal specification $\spec$ and a temporal assumption $\assump$, where both $\spec$ and $\assump$ are given as linear temporal logic (LTL) formulas over the vertex set of $\gamegraph$. We call the tuple $(\gamegraph,\spec,\assump)$ an \emph{augmented game} and interpret it in the classical way, i.e., winning the augmented game $(\gamegraph,\spec,\assump)$ is equivalent to winning the (standard) game $(\gamegraph,\assump\Rightarrow\spec)$. 
Given a reachability or parity game $\game = (\gamegraph,\spec)$ and some progress assumption $\assump$, this paper establishes whether solving the augmented game $\auggame = (\gamegraph,\spec,\assump)$ lies in the same complexity class as solving $\game$. 
While the answer to this question is negative for arbitrary combinations of $\spec$ and $\assump$, a positive answer results in more efficient algorithms, in particular for large game graphs.

We therefore restrict our attention to particular classes of CPS-motivated progress assumptions and establish the worst-case time complexity of the resulting augmented games. Thereby, we pave the way towards a better understanding of assumption classes that can enable the development of efficient solution algorithms in augmented two-player games.

\keywords{Synthesis  \and Graph Games \and Augmented Games \and Progress Assumptions.}
\end{abstract}

\section{Introduction}\label{section:introduction}

The automated and correct-by-design synthesis of control software for cyber-physical systems (CPS) has gained considerable attention in the last decades. Besides the numerous practical challenges that arise in CPS control, there are also unique theoretical challenges due to the interplay of low-level physical control loops and higher-level logical decisions. 
As a motivating example, consider a fully automated air-traffic control at an airport. While each modern airplane is equipped with feedback-controllers automatically regulating flight dynamics, the problem of assigning landing and starting spots along with non-intersecting flight corridors to each airplane is mostly solved manually by \emph{humans} these days.

The problem of designing a logical controller which automates these (higher-layer) logical decisions shares strong algorithmic similarities with the problem of \emph{reactive synthesis} from the formal methods community. In reactive synthesis, the interplay of available logical decisions with the external environment's reactions is modelled as a two-player game over a finite graph. Given such a game graph, one automatically computes a \emph{winning strategy} which takes logical decisions in response to any environment behavior such that a predefined specification always holds. %
While reactive synthesis is a mature field in computer science with rich tool support for solving games of various flavors, the application of these techniques to, e.g., the problem of air-traffic control, requires the construction of a two-player game graph, which reflects all possible interactions between the controller and the environment in an abstract manner.

Building a game graph with the `right' level of abstraction correctly modelling the interplay of low-level continuous (physical) dynamics, external interactions and logical decisions is a known severe challenge in CPS design. %
This challenge has been extensively addressed in the past decades under the term abstraction-based controller synthesis (for an overview see e.g.\ \cite{tabuada2009_book,sanfelice2020hybrid,calin_belta_book,belta2019formal}).  %
From a reactive synthesis perspective, the challenge amounts to finding 
(i) the `right' granularity of the abstract state space, i.e., the vertex set of the graph, and (ii) the `right' power of the environment player in the resulting abstract game. While the first one determines the size of the resulting graph, the second one ensures that a winning strategy for the controller does not fail to exist due to an unnecessary conservative overapproximation of environment uncertainty.

In this context, the notion of \emph{augmented} games appeared in different CPS control applications \cite{progress_groups,context-triggeredABCD,9576605,Liu21,MohajeraniMWLO21,MajumdarMSS21,NilssonOL17}. Here, the environment player is augmented with a restriction on its choice of moves, which is motivated by the physical laws governing the underlying CPS. Intuitively, these augmentations can for example abstractly model the fact that high disturbance spikes (e.g., strong wind) only occur sporadically, certain control actions (e.g., triggering a plane landing maneuver) will eventually result in a distinct system state (e.g., landing on the ground), or external logical decisions (e.g., resource allocations by a scheduler) have particular temporal poperies (e.g., are known to be fair). Here, a typical augmentation for modelling `eventual success' of a triggered control action asserts that whenever the source vertex is visited infinitely often, a dedicated transition (e.g., the one that is not-self looping at this vertex) is eventually taken by the environment player. These assumptions are also known
as \emph{strong transition fairness} \cite{baierbook,QS83,Francez} and are illustrated in \cref{fig:into-motivating}.

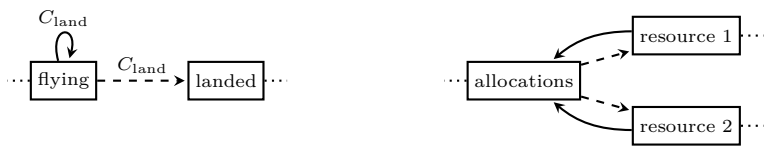
\begin{figure}[t]
    \scriptsize
    \centering
    \begin{tikzpicture}

        \node[player1] (f) at (2.1*\hpos, 0) {flying};
        \node[player1] (l) at (3.3*\hpos, 0) {landed};

        \draw[-,thick,dotted] (f.west) --+ (-0.3,0);
        \path[->] (f) edge[dashed] node[]{$C_{\text{land}}$} 
        (l) edge[loop above] node{$C_{\text{land}}$} ();
        \draw[-,thick,dotted] (l.east) --+ (0.3,0);

        \node[player1] (a) at (5.5*\hpos, 0) {allocations};
        \node[player1] (r1) at (6.7*\hpos, 0.5*\ypos) {resource 1};
        \node[player1] (r2) at (6.7*\hpos, -0.5*\ypos) {resource 2};

        \draw[-,thick,dotted] (a.west) --+ (-0.3,0);
        \path[->] (a) edge[dashed] (r1) edge[dashed] (r2);
        \path[->] (r1) edge[bend left=-20] (a);
        \path[->] (r2) edge[bend left=20] (a);
        \draw[-,thick,dotted] (r1.east) --+ (0.3,0);
        \draw[-,thick,dotted] (r2.east) --+ (0.3,0);
    \end{tikzpicture}
    \caption{Examples illustrating how physical phenomena can be abstracted by \emph{strong transition fairness assumptions} modelled by live edges (dashed) which need to be taken infinitely often if their source vertex is visited infinitely often.}\vspace{-0.5cm}%
    \label{fig:into-motivating}
\end{figure}

Formally, an \emph{augmented game} is a tuple $(\gamegraph,\spec,\assump)$, 
consisting of a finite two-player game graph $\gamegraph$, and temporal specifications $\spec$ and assumption $\assump$,
given as linear temporal logic (LTL) formulas over the vertex set of $\gamegraph$. An augmented game is typically interpreted in the classical way, i.e., winning the augmented game $(\gamegraph,\spec,\assump)$ is equivalent to winning the (standard) game $(\gamegraph,\assump\Rightarrow\spec)$ with modified specification $\Phi':=\assump\Rightarrow\spec$. 

As augmented games can always be solved via their \enquote{implication-form} $(\gamegraph,\assump\Rightarrow\spec)$, it is not immediately obvious that augmentations can fundamentally change the given reactive synthesis problem, in particular their complexity class. 
However, game graphs resulting from CPS abstractions tend to be very large while specifications are rather simple, i.e., typically induced by `reach-while-avoid' problems. Given for example a reachability game augmented with strong transition fairness assumptions, results in a Rabin game (with one Rabin pair per fair transition) when re-written in implication form, hence moving from a PTIME problem to solve the original (non-augmented) game to an NP-complete problem for its augmented version. Given a very large graph with augmentations on almost every vertex, this can very quickly lead to computational intractability.
On the other hand, augmentations are typically very structured and local (as e.g., strong transition fairness). Using these features, it was recently shown by Banerjee et al.~\cite{banerjee2022fast} that strong transition fairness assumptions can actually be handled `for free' in synthesis -- i.e., solving reachability games augmented with strong transition fairness assumptions can be solved in PTIME.

\begin{table}[b]
\centering
\begin{tabular}{|c||c|c|}
\hline
\textbf{Augmentation Type}                                         & \textbf{Reachability Games} & \textbf{Parity Games} \\ \hline\hline
no augmentations   & $\PTIME$~(\cref{thm:reach-ptime})       & $\QP$~(\cref{thm:parity-qp})          \\ \hline
live edges~(\cref{def:live-edges})      & $\PTIME$~(\cref{thm:live-edges})      & $\QP$~(\cref{thm:live-edges})       \\ \hline
co-live edges~(\cref{def:colive})    & $\PTIME$~(\cref{thm:co-live})       & $\QP$~(\cref{thm:co-live})           \\ \hline
live groups~(\cref{def:live-groups})     & \hspace{0.2cm} $\NP$-complete~(\cref{thm:NP-completeness-augmentedreachparity}) \hspace{0.2cm}& \hspace{0.2cm}$\NP$-complete~(\cref{thm:NP-completeness-augmentedreachparity}) \hspace{0.2cm}\\ \hline
\multirow{2}{*}{\shortstack{singleton-source\\ live groups~(\cref{section:CNF-edges})}} & \multirow{2}{*}{$\PTIME$~(\cref{thm:singleton-live-groups})} & \multirow{2}{*}{$\QP$~(\cref{thm:singleton-live-groups})} \\ 
& & \\ \hline
persistent live groups~(\cref{def:perslivegroups}) & $\PTIME$~(\cref{corollary:progress-reach-ptime} )      & $\QP$~(\cref{theorem:pers-solve_augmented_parity})          \\ \hline
\end{tabular}
\caption{Summary of complexity classes for solving the augmented games discussed in this paper.}
\label{table:complexity}
\end{table}

Motivated by this result, this paper investigates which other classes of augmentations can be \enquote{handled for free} in \emph{reachability and parity games}, as summarized in \cref{table:complexity}. In particular, we show that parity games (resp. reachability games) augmented with live edges (strong transition fairness), co-live edges (strong transition \emph{co-fairness}), singleton-source live groups or persistent live groups can be solved in quasi-polynomial time (resp. polynomial time), which largely extends the known class of efficiently solvable augmented games. In addition, we also prove NP-completeness of a slightly more general, but most frequent class of assumptions, called live groups (group transition fairness) assumptions. Unfortunately, this result shows that solving augmented games resulting from the product of different components augmented with strong transition fairness is an NP-hard problem, even for simple reachability objectives.

We note that most of the progress assumptions we consider in this paper solely restrict the moves of the environment player, and are hence non-falsifiable by the system player. This implies that, in most cases, the system player cannot vacuously win the implication-version $(\gamegraph,\assump\Rightarrow\spec)$ of an augmented game $(\gamegraph,\spec,\assump)$ by enforcing $\neg \assump$ (instead of $\spec$). This distinguishes our results from the large body of work investigating how winning strategies should be synthesized in augmented games such that (general LTL) assumptions are \enquote{handled correctly}, as e.g.\ in \cite{KleinPnueli-2010,DBPU10,chatterjee2010obliging,BCGHHJKK14,BloemEhlersKoenighofer_2015,AssumptionsInSynthesis,majumdar2019environmentally}. 
In addition, we are also not concerned with the problem of \emph{computing} assumptions that render non-realizable synthesis problems realizable, as e.g.\ in \cite{ChatterjeeHenzingerJobstmann_2008,maoz2019symbolic,permissiveAssumptions}. In this paper, assumptions are assumed to be \emph{given} and resulting from an abstraction process of existing component dynamics.

\section{Preliminaries}\label{section:prelim}
\noindent\textbf{Notation.}
Given two real numbers $a,b\in\mathbb{R}$ with $a<b$, we use $[a;b]$ to denote the set $\set{n\in\mathbb{N} \mid a\leq n\leq b}$ of all integers between $a$ and $b$.
For any given set $[a;b]$, we write $i\ineven [a;b]$ and $i\inodd [a;b]$ as shorthand for $i\in [a;b]\cap \set{0,2,4,\ldots}$ and $i\in [a;b]\cap \set{1,3,5,\ldots}$ respectively.
Given two sets $A$ and $B$, a relation $R\subseteq A\times B$, and an element $a\in A$, we write $R(a)$ to denote the set $\set{b\in B\mid (a,b)\in R}$.

\smallskip
\noindent\textbf{Languages.}
Let $\Sigma$ be a finite alphabet. $\Sigma^*$ and $\Sigma^\omega$ denote, respectively, the set of finite and infinite words over $\Sigma$, and $\Sigma^\infty$ is equal to $\Sigma^*\cup \Sigma^\omega$.
Given two words $u\in \Sigma^*$ and $v\in \Sigma^\infty$, their concatenation is written as the word $uv$.

\smallskip
\noindent\textbf{Game Graphs.}
A \emph{game graph} is a tuple $\gamegraph= \tup{V= V_0\uplus  V_1,E}$ where $(V,E)$ is a finite directed graph with \emph{vertices} $ V $ partitioned into two sets and \emph{edges} $ E \subseteq V\times V$.
For vertices $u,v\in V$, we write $(u,v)$ or $u\to v$ to denote an edge from $u$ to $v$.
For a set $E'\subseteq E$ of edges, we write $\source(E') = \{u \mid (u,v)\in E'\}$ to denote the sources of the edges in $E'$.
Furthermore, for each $i\in\{0,1\}$, we write $E_i$ to denote the set of edges originating from $V_i$.
A \emph{play} $\play$ originating at a vertex $v_0$ is a finite or infinite sequence of vertices where consecutive ones are connected by edges, denoted by $v_0v_1\ldots \in V^\infty$ or $v_0\to v_1\to \cdots\in V^\infty$.
We sometimes fix a designated initial vertex $v_0$ and disregard the subgraph not reachable from $v_0$. 

\smallskip
\noindent\textbf{Specifications/Objectives.}
Given a game graph $\gamegraph$, we consider specifications/objectives specified using a formula $\spec$ in \emph{linear temporal logic} (LTL) over the vertex set $V$, that is, we consider LTL formulas whose atomic propositions are sets of vertices $V$. 
In this case the set of desired infinite plays is given by the semantics of $\spec$ which is an $\omega$-regular language $\lang(\spec)\subseteq V^\omega$. 
Every game graph with an arbitrary $\omega$-regular set of desired infinite plays can be reduced to a game graph (possibly with a different set of vertices) with an LTL specification, as above\footnote{This is because every $\omega$-regular specification can be reduced to a parity specification~\cite{gamesbook}, which can be written as an LTL specification over $V$ as given later.}.
The standard definitions of $\omega$-regular languages and LTL are omitted for brevity and can be found in standard textbooks \cite{baierbook}. To simplify notation we use $ e=(u,v) $ in LTL formulas as syntactic sugar for $ u\wedge \bigcirc v $, with $ \bigcirc$ as the LTL \emph{next} operator. We further use a set of edges $E' = \set{e_i}_{i\in [0;k]}$ as an atomic proposition to denote $\bigvee_{i\in [0;k]} e_i$. 

\smallskip
\noindent\textbf{Games and Strategies.} A \emph{two-player (turn-based) game} is a pair $\game=\tup{\gamegraph,\spec}$ where $G $ is a game graph and 
$ \spec $ is a \emph{specification} over $\gamegraph$.
A \emph{strategy} of $\p{i},~i\in\{0,1\}$, is a function $\strati\colon  V^* V_i\to  V$ such that for every $\play v \in  V^* V_i$ holds that $\strati(\play v)\in  E(v)$. %
Furthermore, a strategy $\strati$ is \emph{memoryless}/\emph{positional} if $\strati(\play v) = \strat(v)$ for every $\play v\in V^*V_i$. We write such memoryless strategies as functions of the form $\strati\colon V_i\rightarrow V$.
Given a strategy $\strati$, we say that the play $\play=v_0v_1\ldots$ is \emph{compliant} with $\strati$ if $v_{k-1}\in  V_i$ implies $v_{k} = \strati(v_0\ldots v_{k-1})$ for all $k$.
We refer to a play compliant with $\strati$ and a play compliant with both $\stratz$ and $\strato$ as a \emph{$ \strati $-play} and a \emph{$ \stratz\strato $-play}, respectively. 
We collect all plays originating in a set $ S $ and compliant with $\strati$, (and compliant with both $\stratz$ and $\strato$) in the sets $\lang(S,\strati)$ (and $\lang(S,\stratz\strato)$, respectively). When $ S=V $, we drop the mention of the set in the previous notation, and when $ S $ is singleton $ \{v\} $, we simply write $ \lang(v,\strati) $ (and $ \lang(v,\stratz\strato) $, respectively). 

\smallskip
\noindent\textbf{Winning.}
Given a game $\game=(\gamegraph,\spec)$, a play $ \play $ in $ \game $ is \emph{winning for $ \p{0} $}, if $ \play\in\lang(\spec) $, and it is winning for $ \p{1} $, otherwise. A strategy $\strati$ for $\p{i}$ is \emph{winning from a vertex $ v\in V $} if all plays compliant with $ \strati $ and originating from $ v $ are winning for $ \p{i} $. 
We say that a vertex $v\in V$ is  \emph{winning for $\p{i}$}, if there exists a winning strategy $\strati$ from $ v $. %
We collect all winning vertices of $\p{i}$ in the \emph{$\p{i}$ winning region} $\wini\subseteq V$.
A (uniform) \emph{winning strategy} for $\p{i}$ is a strategy that is winning for $\p{i}$ from all vertices in $\wini$.
We always interpret winning w.r.t.\ $\p{0}$ if not stated otherwise.
Furthermore, in games with a designated initial vertex $v_0$, when we discuss the winner of the game we refer to the winner of $v_0$.

\smallskip
\noindent\textbf{Reachability Games.} 
A \emph{reachability game} is a game $\game=(\gamegraph,\spec)$ with reachability objective $\spec = \LTLeventually T$ for some set $T$ of vertices.
A play is winning for $ \pz $ in such a game if it visits any vertex in $T$.
It is well-known that reachability games can be solved in linear time in the size of the game graph.
\begin{lemma}[\cite{Thomas97}]\label{thm:reach-ptime}
    The problem of solving reachability games lies in $\PTIME$.
\end{lemma}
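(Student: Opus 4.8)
The plan is to solve a reachability game by computing the set of vertices from which \pz can force a visit to the target set $T$, using the standard \emph{attractor} fixpoint construction. I would define the controllable predecessor operator $\pre(X)$ that collects every vertex from which \pz can guarantee reaching $X$ in one step: a vertex $v\in V_0$ belongs to $\pre(X)$ if it has \emph{some} outgoing edge into $X$ (\pz chooses that edge), while a vertex $v\in V_1$ belongs to $\pre(X)$ only if \emph{all} its outgoing edges lead into $X$ (since the adversary \po picks the edge). I would then define the attractor $\mathsf{Attr}(T)$ as the least fixpoint of $X\mapsto T\cup\pre(X)$, computed by the ascending chain $X_0=T$, $X_{k+1}=X_k\cup\pre(X_k)$, which stabilizes in at most $\abs{V}$ iterations since each step that does not terminate adds at least one new vertex.

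The two key steps are the correctness claims establishing $\winz=\mathsf{Attr}(T)$. For \textbf{soundness} (every vertex in the attractor is winning for \pz), I would define the memoryless \pz-strategy that, from any $v\in V_0$ in layer $X_{k+1}\setminus X_k$, moves along an edge into a strictly lower layer $X_k$; I would then argue by induction on the layer index that every compliant play strictly decreases its layer until it reaches $X_0=T$, hence visits $T$. For \textbf{completeness} (every winning vertex lies in the attractor), I would show the contrapositive: on the complement $V\setminus\mathsf{Attr}(T)$, the fixpoint property guarantees that every $v\in V_1$ there has \emph{some} successor outside the attractor, and every $v\in V_0$ there has \emph{all} successors outside the attractor, so \po has a memoryless strategy trapping every play in $V\setminus\mathsf{Attr}(T)$ forever, which never meets $T$; thus no vertex outside the attractor is winning for \pz.

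For the \textbf{complexity bound}, I would observe that the fixpoint computation can be implemented so that each edge is examined a constant number of times: maintaining for each \po-vertex a counter of successors not yet in the attractor and processing each newly added vertex by relaxing its incoming edges yields an overall running time linear in $\abs{V}+\abs{E}$. This both places the problem in $\PTIME$ and recovers the well-known linear-time bound.

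The main obstacle is not any single step — each is routine — but rather the completeness direction, where one must be careful to exhibit a single \emph{memoryless} \po-strategy that works uniformly from all vertices outside the attractor and to verify it genuinely traps every compliant play; the fixpoint characterization of the complement is exactly what makes this construction well-defined, so the crux is stating that closure property cleanly before extracting the trapping strategy from it.
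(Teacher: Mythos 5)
Your proposal is correct and is exactly the classical attractor-based argument that the paper invokes by citing the literature (\cite{Thomas97}) rather than proving the lemma itself; the paper's own preliminaries already rely on this construction via the attractor function $\Attrz{\gamegraph}{T}$. Both the fixpoint characterization with the soundness/completeness directions and the linear-time counter-based implementation match the standard proof, so there is nothing to flag.
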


\smallskip
\noindent\textbf{Parity Games.} 
A \emph{parity game} is a game $\game=(\gamegraph,\spec)$ with parity objective $\spec = \parity(\priority)$, s.t.\
 $\textstyle\parity(\priority)\coloneqq \bigwedge_{i\inodd [0;d]} \left(\LTLalways\LTLeventually P_i \implies \bigvee_{j\ineven [i+1;d]} \LTLalways\LTLeventually P_j\right)$, 
with $ P_i=\{v\in V\mid \priority(v)=i \} $ for some priority function $ \priority: V\rightarrow [0;d] $ that assigns each vertex a priority. A play is winning (for $ \pz $) in such a game if the maximum of priorities seen infinitely often is even. 
There are several quasi-polynomial algorithms to solve such parity games giving us the following result.
\begin{lemma}[\cite{CaludeJKL017,quasiZielonka,LehtinenPSW22}]\label{thm:parity-qp}
    The problem of solving parity games lies in $\QP$.
\end{lemma}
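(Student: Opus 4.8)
The plan is to establish the quasi-polynomial bound via the \emph{small progress measure} method lifted onto \emph{universal trees}, which is the mechanism underlying the cited results. Write $n = |V|$ for the number of vertices and $d$ for the largest priority, and recall we may assume $d \le n$. The starting point is the classical characterization: $\pz$ wins from a vertex if and only if there is a \emph{progress measure}, i.e.\ a labelling $\mu$ of the vertices by elements of a totally ordered domain such that along some outgoing edge of every $\pz$-vertex, and along every outgoing edge of every $\po$-vertex, a local monotonicity condition holds that forbids odd priorities from dominating infinitely often. Such a measure exists exactly on $\winz$, and it is computed by the monotone \emph{lifting} procedure, which repeatedly raises the label of any vertex violating the local condition until a fixed point is reached. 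The running time of lifting is bounded by $n$ times the size of the label domain, so the entire question reduces to making that domain small.

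First I would fix the domain to be the nodes of an \emph{ordered tree}. The relevant combinatorial object is an $(n,h)$-\emph{universal tree}: an ordered tree of height $h$ into which every ordered tree with at most $n$ leaves and height at most $h$ embeds. Taking $h = \lceil d/2 \rceil$, one argues that valid progress-measure labels can be drawn from the branches of any $(n,h)$-universal tree, since a winning strategy induces a tree of ancestor relations among the dominating even priorities whose size is governed by $n$ and $h$. Hence the number of distinct labels required equals the size of a universal tree, and the cost of lifting becomes $n$ times that size.

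The crux, and the step I expect to be the main obstacle, is the construction of \emph{small} universal trees. The naive choice (a complete $n$-ary tree of height $h$) has size $n^{h} = n^{\bigO(d)}$, recovering only an exponential bound. The quasi-polynomial improvement rests on the Jurdzi\'nski--Lazi\'c succinct-counter / divide-and-conquer construction of an $(n,h)$-universal tree of size $n^{\bigO(\log d)}$, concretely on the order of $\binom{\lceil \log n\rceil + h}{\lceil \log n\rceil}$ nodes. Proving simultaneously that this object has quasi-polynomial size and that it is genuinely universal is the technical heart of the argument; the remaining steps are routine bookkeeping.

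Finally, I would assemble the pieces: run the lifting algorithm with labels ranging over a small universal tree of height $\lceil d/2\rceil$ and $n$ leaves. Termination follows from monotonicity, the iteration count is bounded by $n$ times the quasi-polynomial tree size, and each iteration costs polynomial time; thus $\winz$ is computed in time $n^{\bigO(\log d)} = \QP$, and $\wino$ follows by positional determinacy of parity games applied to the dual. An equivalent route, underlying Calude et al.~\cite{CaludeJKL017}, instead replaces the tree domain by a deterministic \emph{separating safety automaton} of quasi-polynomial size, whose product with $\gamegraph$ is a safety game solvable in polynomial time (dually to \cref{thm:reach-ptime}); either way the total running time is quasi-polynomial in the size of the game, establishing the claim.
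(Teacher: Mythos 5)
Your proposal is correct, but it establishes the lemma along a genuinely different line than the one the paper relies on. Note first that the paper gives no proof of \cref{thm:parity-qp} at all: it is quoted from the literature, citing the play-summary construction of Calude et al.~\cite{CaludeJKL017} and the quasi-polynomial variants of Zielonka's attractor-based recursion due to Parys~\cite{quasiZielonka} and Lehtinen et al.~\cite{LehtinenPSW22}. Your main argument instead follows the succinct-progress-measure/universal-tree route of Jurdzi\'nski and Lazi\'c (mentioning Calude et al.\ only as an alternative), which is an equally valid proof of the same bound; your sketch correctly isolates its technical heart, namely an $(n,\lceil d/2\rceil)$-universal tree with $n^{\bigO(\log d)}$ leaves, and the surrounding lifting machinery is indeed routine given positional determinacy. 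The difference is not innocent in the context of this paper, however: in \cref{sec:progress-groups-parity} and \cref{appendix:quasi-progress-groups} the authors reuse precisely the recursive algorithm of \cite{quasiZielonka,LehtinenPSW22}, replacing each attractor computation $\Attrz{\gamegraph}{T}$ by $\reachPers(\gamegraph,T,\perslivegroups)$ to obtain \cref{theorem:pers-solve_augmented_parity}; that substitution is possible only because the cited algorithms are built entirely from attractor calls, whereas a lifting/value-iteration algorithm such as yours does not decompose in this way and would not support the adaptation to persistent live groups. Conversely, what your route buys is orthogonal to the paper's needs: the least progress measure directly yields a positional winning strategy for $\pz$, the algorithm runs in small (quasi-linear) space, and the universal-tree formulation is the one for which matching quasi-polynomial lower bounds on this family of algorithms are known.
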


\smallskip
\noindent\textbf{Rabin Games.} 
A \emph{Rabin game} is a game $\game=(\gamegraph,\spec)$ with Rabin objective $\spec = \rabin(\Omega)$ defined by a set of \emph{Rabin pairs} $\Omega = \{(F_i,R_i)\subseteq V\times V\mid i\in[1;k]\}$ s.t.\
 $\textstyle\rabin(\Omega)\coloneqq \bigvee_{i\in [1;k]} \left(\LTLalways\LTLeventually F_i \wedge \neg\LTLalways\LTLeventually R_i\right)$.
A play is winning for $ \pz $ in such a game if there exists an $i\in [1;k]$ such that the plays visits the set $F_i$ infinitely often and $R_i$ only finitely often.
The dual of such an objective, i.e. $\neg\rabin(\Omega)$, is known as a \emph{Streett} objective.
It is well-known that solving Rabin games for $\pz$ is an $\NP$-complete problem whereas solving it for $\po$, which corresponds to solving Streett games for $\pz$, is $\mathrm{co}\text{-}\NP$-complete.
\begin{lemma}[\cite{EmersonJ91,EmersonJ99,Thomas97}]\label{thm:rabin-np-complete}
    The problem of solving Rabin games for $\pz$ is $\NP$-complete and for $\po$ it is $\mathrm{co}\text{-}\NP$-complete.
\end{lemma}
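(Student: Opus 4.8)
The plan is to prove the two claims separately, using as the central engine the classical fact that Rabin games are \emph{positionally (memoryless) determined for the Rabin player}: if \pz\ wins $\rabin(\Omega)$ from a vertex, then \pz\ has a \emph{memoryless} winning strategy from it (while the opponent, with the dual Streett objective, may genuinely require memory). I would take this structural theorem as given and build both the upper bound and the duality argument on top of it.

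For membership in \NP, I would decide whether a designated vertex $v_0$ lies in $\winz$ as follows. Guess a memoryless strategy $\stratz\colon V_0\to V$; this certificate has size linear in the graph. Fixing $\stratz$ yields a one-player graph in which only \po\ still chooses, and $\stratz$ is winning from $v_0$ precisely when no \po-play from $v_0$ violates the Rabin condition. Since the set of vertices visited infinitely often along any play is a reachable strongly connected subgraph, and every such subgraph can be realized as an infinity set by \po, the strategy $\stratz$ \emph{fails} iff some strongly connected subgraph $S$ reachable from $v_0$ satisfies the complementary (Streett) condition $\bigwedge_i \left(S\cap F_i=\emptyset \;\vee\; S\cap R_i\neq\emptyset\right)$. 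Testing for such an $S$ is the classical one-player Streett-nonemptiness procedure: decompose the reachable subgraph into SCCs; accept an SCC that already satisfies every pair; otherwise pick a pair $i$ with $S\cap F_i\neq\emptyset$ but $S\cap R_i=\emptyset$, delete the $F_i$-vertices, and recurse on the resulting sub-SCCs. This runs in polynomial time, so the verification is polynomial and the whole procedure is in \NP.

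For \NP-hardness I would reduce from SAT, following the classical Emerson--Jutla construction. Given a CNF formula $\phi$, I would build a game graph in which a memoryless \pz-strategy encodes a truth assignment (one positional decision per variable), while \po\ acts as a spoiler that repeatedly challenges variables and clauses; the Rabin pairs $\Omega$ are arranged so that, on \emph{every} \po-play, some pair is good exactly when the encoded assignment is consistent and satisfies all clauses. Here the disjunctive structure of $\rabin(\Omega)$ is what lets the reduction work, and getting the pairs to simultaneously certify clause-satisfaction and rule out the spoiler's attempts to expose an inconsistency is the delicate point. The conclusion is that $v_0\in\winz$ iff $\phi$ is satisfiable, which together with the upper bound gives \NP-completeness of solving Rabin games for \pz.

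The \po-side then follows by duality. By $\omega$-regular determinacy, $v_0\in\wino$ iff $v_0\notin\winz$, and by positional determinacy of the Rabin player the latter is equivalent to the universally quantified statement ``\emph{every} memoryless \pz-strategy fails from $v_0$'', whose verification is exactly the polynomial test above; hence deciding membership in $\wino$ lies in $\mathrm{co}\text{-}\NP$, and this is precisely the Streett problem for \pz. Matching $\mathrm{co}\text{-}\NP$-hardness comes from complementing the SAT reduction (equivalently, reducing from unsatisfiability). The main obstacle—doing the real work—is the positional-determinacy theorem for the Rabin player together with getting the Rabin-pair gadget in the hardness reduction exactly right; once these are in place, the polynomial verification and the duality argument are routine.
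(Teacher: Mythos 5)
The paper does not actually prove this lemma---it is quoted as a known result with citations to Emerson--Jutla and Thomas---and your reconstruction is precisely the classical argument from those sources: positional determinacy for the Rabin player yields the $\NP$ certificate, the polynomial SCC-based one-player Streett check yields verification, the Emerson--Jutla SAT reduction yields hardness, and determinacy transfers both bounds to the $\mathrm{co}\text{-}\NP$ statement for \po. Your proposal is therefore correct and takes the same route the paper relies on by citation; the one piece you leave unspecified (the exact Rabin-pair gadget in the SAT reduction) is exactly the piece the paper likewise delegates to the references.
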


\smallskip
\noindent\textbf{Computing attractors and winning regions.}
For a set $T$ of vertices, $\pre(T)\subseteq V$ is the set of vertices from which there is an edge to $T$.
Furthermore, the attractor function $\Attri{\gamegraph}{T}$ solves the (non-augmented) reachability game $(G,\LTLeventually T)$ for $\p{i}$, i.e., it returns the attractor set, i.e., winning region, $A:=\attri{\gamegraph}{T}\subseteq V$ and a memoryless attractor strategy, i.e., winning strategy, $\strat_A$ of $\p{i}$. Intuitively, $A$ collects all vertices from which $\p{i}$ has a strategy (i.e., $\strat_A$) to force every play starting in $A$ to visit $T$ in a finite number of steps.
Moreover, the function $\solve(\gamegraph,\spec)$ returns the winning region and a winning strategy (that is memoryless if possible) in a game $(\gamegraph,\spec)$.
Both the functions $\textsc{Attr}$ and $\solve$ (for a parity objective) solve classical synthesis problems with standard algorithms (see, e.g.\ \cite{gamesbook}).

\section{Problem Statement}

In this paper we consider the problem of efficiently solving reachability and parity games $(\gamegraph,\spec)$ which are augmented with assumptions, as formalized next.

\begin{definition}
  An \emph{augmented game} $\auggame=(\gamegraph,\spec,\assump)$ is a game $(\gamegraph,\spec)$ augmented with an \emph{assumption} $\assump$ given as an LTL formula over the vertex set $V$ of $\gamegraph$. $\auggame$ is interpreted as
the (non-augmented) game $\game = (\gamegraph, \assump\Rightarrow\spec)$,
i.e., $\p{i}$ wins in $\auggame$ if s/he wins in $\game$.
  An augmented game with reachability (resp. parity) objective~$\spec$ is called an augmented reachability (resp. parity) game.
\end{definition}

\begin{problem}\label{problem:main}
  Given a class of assumptions, does the problem of solving the respective augmented reachability (resp. parity) game lie in $\PTIME$ (resp. $\QP$)?
\end{problem}
It is easy to see that the answer to \cref{problem:main} is negative for arbitrary combinations of $\spec$ and $\assump$. A simple example is a reachability game $\auggame$ augmented with a Streett assumption $\assump$.
Solving such games is an $\NP$-complete problem, as it is equivalent to solving Rabin games, due to the duality of Rabin and Streett objectives (\cref{thm:rabin-np-complete}), while solving (non-augmented) reachability games lies in $\PTIME$ (\cref{thm:reach-ptime}).
We therefore restrict our attention to particular classes of \emph{progress} assumptions, explain their relevance in the context of CPS design and establish the worst-case time complexity of the resulting augmented games. Our results are summarized in \cref{table:complexity}.
On a higher level, the goal of this paper is to pave the way towards a more comprehensive understanding of assumption classes that allow for efficient solution algorithms of augmented games.

\section{Strong Transition Fairness Assumptions}\label{section:live-edges}
Fairness assumptions have proven to be useful to prevent reactive synthesis problems to become unrealizable for `uninteresting reasons'. For example, synthesizing a mutual exclusion protocol might fail because a process that entered the critical section might decide to never leave it, allowing no other process which requested access to enter. This can be circumvented by a fairness constraint which asserts that every process will eventually leave the critical section again (which must be ensured by its local implementation). This reasoning analogously holds for CPS, where a triggered landing maneuver will lead to successfully landing (assuming the low-level controller being correct), or attempting to grasp an object will eventually succeed (assuming the robot to be programmed well). 

Depending on the game graph used to model the overall synthesis problem, the outlined `fair progress' can be captured by a local fairness notion called \emph{strong transition fairness} \cite{QS83,Francez,baierbook}, as used for resource management \cite{CAFMR13}, path following \cite{MMSS2021,DIRS18,NOL17} or robot manipulators \cite{AGR20}.
This assumption class is given by a set of \emph{live edges} and requires that whenever the source vertex of a live edge is visited infinitely often along a play, the edge itself is traversed infinitely often along the play as well (see \cref{fig:into-motivating} for different illustrations). 
Whenever strong transition fairness is used as an \emph{assumption}, it is solely restricting the environment player in the resulting game, as formalized next.

\begin{definition}\label{def:live-edges}
Given a game graph $\gamegraph = (V,E)$, \emph{strong transition fairness assumptions} $\assumplive(\livedges)$ are represented by a set $\livedges \subseteq E_1$ of \emph{live edges} and captured by the LTL formula
\begin{equation}\label{eq:assumpLive}
   \textstyle\assumplive(\livedges) \coloneqq \bigwedge_{e = (u,v)\in \livedges} \left(\LTLalways\LTLeventually u \Rightarrow \LTLalways\LTLeventually e\right).
\end{equation}
We call $\auggame = (\gamegraph,\spec,\assumplive(\livedges))$ a game augmented with live edges $\livedges$.
\end{definition}

\begin{figure}[t]
    \centering
    \begin{tikzpicture}
        \node[player0,label={below:\wait}] (w) at (0, 0) {$w$};
        \node[player1,label={below:\req}] (r) at (\hpos, 0) {$r$};
        \node[player0,label={below:\grant}] (g) at (2*\hpos, 0) {$g$};

        \draw[<-,thick] (w.west) --+ (-0.5,0);
        \path[->] (w) edge[bend left=20] (r) edge[loop above] ();
        \path[->] (r) edge[bend left=20] (w);
        \path[->,dashed] (r) edge[bend left=20] node{$e$} (g);
        \path[->] (g) edge[bend left=20] (r) edge[bend left=-40] (w);
    \end{tikzpicture}
    \caption{A reachability game with $\po$ (squares) vertices, $\pz$ (circles) vertices, and specification $\spec = \LTLeventually g$ augmented with (dashed) live edge $\livedges = \{e\}$.}\label{fig:live-edges}
\end{figure}
\begin{example}\label{example:live-edges}
    A simple example of a game augmented with live edges is shown in \cref{fig:live-edges}. Intuitively, in this game graph, from every $\pz$ vertex, she can choose to either wait (by going to vertex~$w$) or request (by going to vertex~$r$), and from $\po$ vertex $r$, he can either grant the request (by going to vertex~$g$) or make $\pz$ wait (by going to vertex~$w$).
    Furthermore, starting from vertex $w$, $\pz$'s objective, i.e., $\spec = \LTLeventually g$, is to finally get her request granted by visiting vertex~$g$.
    It is easy to see that without any assumption, $\pz$ does not have a winning strategy from $w$, as $\po$ can always choose to make $\pz$ wait.
    Now consider the assumption $\assumplive(\livedges)$ for live edges $\livedges = \{e\}$, which says that if $\pz$ requests infinitely often, then $\po$ grants infinitely often.
    Under this assumption, now, $\pz$ can win by requesting again and again until $\po$ grants. Hence, $\pz$ has a winning strategy in this augmented game with assumption $\assumplive(\livedges)$. 
\end{example}

It turns out, that for this assumption class the answer to  \cref{problem:main} is positive.

\begin{theorem}\label{thm:live-edges}
    Reachability (resp. parity) games augmented with live edges can be solved in $\PTIME$ (resp. $\QP$).
\end{theorem}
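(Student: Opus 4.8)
The plan is to treat reachability as the two-priority special case of parity and to give a uniform fixpoint characterization of the \pz winning region of $\auggame=(\gamegraph,\spec,\assumplive(\livedges))$ that has exactly the shape of the usual fixpoint for $\spec$, but with the one-step controllable-predecessor operator $\pre$ replaced by a \emph{live predecessor} operator $\pre^{\ell}$ that additionally lets \pz profit from the live edges. The guiding intuition is that, although the implication form $\assumplive(\livedges)\Rightarrow\spec$ is literally a disjunction of a Rabin condition (from $\neg\assumplive$) with $\spec$, the Rabin pairs arising from $\neg\assumplive$ are \emph{local}: the pair for a live edge $e=(u,v)$ has ``good set'' equal to the single vertex $u$ and ``bad set'' equal to the opponent edge $e$ leaving $u$. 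Exploiting this locality is what lets us avoid the generic Streett-to-parity blowup and remain in the same complexity class as solving $\spec$ alone.

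First I would make precise how \pz may use a live edge $e=(u,v)$. Since $\assumplive$ only constrains \po and is non-falsifiable by \pz, the key observation is: if \pz can force the play to stay in a region $Y$ and to visit the source $u$ infinitely often, then by strong transition fairness either $e$ is eventually taken, landing the play in $v$, or the play visits $u$ infinitely often without ever taking $e$, which violates $\assumplive$ and hence lets \pz win $\assumplive\Rightarrow\spec$ vacuously. Thus ``repeatedly returning to $u$'' acts, for \pz, like a guaranteed move to $v$. I would encode this as $\pre^{\ell}$: the set of vertices from which \pz can force, while staying in $Y$, either to reach the current good set by an ordinary $\pre$ step, or to keep revisiting a live-edge source $u$ whose target already lies in the good set, which is a \buchi-type guarantee toward the set of such sources.

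For the reachability case $\spec=\LTLeventually T$, this yields $\Win = \mu R.\,\mathsf{Win}^{\ell}(R)$, the outer least fixpoint that adds, in each round, all vertices from which \pz wins the \buchi-flavored game ``reach $T$, or visit infinitely often a live-edge source $u$ with $(u,v)\in\livedges$ and $v\in R$.'' Each inner computation is an ordinary \buchi/attractor computation running in polynomial time, and the outer fixpoint stabilizes after at most $|V|$ rounds, giving the $\PTIME$ bound via \cref{thm:reach-ptime} and \buchi solving. For the parity case, I would plug $\pre^{\ell}$ into the standard Zielonka-style parity fixpoint for $\parity(\priority)$: its alternation structure, and in particular the number of priorities, is inherited unchanged from $\parity(\priority)$, precisely because the live-edge guarantee is realized by the reachability/\buchi operator $\pre^{\ell}$ rather than by introducing new parity pairs. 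To obtain the stated $\QP$ bound as a black box, I would phrase this as a reduction: build an arena $\gamegraph'$ of size polynomial in $|\gamegraph|$ and $|\livedges|$ whose ordinary controllable predecessor simulates $\pre^{\ell}$, keeping the priority range at $[0;d]$ up to an additive constant, and then invoke the quasi-polynomial parity algorithm of \cref{thm:parity-qp} on $\gamegraph'$.

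The two obligations carrying the weight of the proof are correctness and the complexity accounting. Correctness splits into \emph{soundness} (a \pz strategy assembled from $\pre^{\ell}$-witnesses wins $\assumplive\Rightarrow\spec$; here one must argue that the \buchi ``return to $u$'' guarantees, composed across the nested fixpoint, together either make genuine progress via live edges or force a fairness violation) and \emph{completeness} (from outside $\Win$, \po has a strategy that both respects $\assumplive$ and defeats $\spec$, i.e.\ the assumption never helps \pz beyond what $\pre^{\ell}$ already grants). I expect the main obstacle to be the completeness direction together with the priority bookkeeping: one must show that the many local Streett obligations can be discharged by a single \buchi-type guard folded into $\pre^{\ell}$, equivalently into the arena $\gamegraph'$, without an index-appearance-record-style blowup in the number of priorities. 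This is exactly the step where the locality of the live-edge pairs, and the fact that they constrain only \po, must be used, and it is what separates this tractable case from the $\NP$-complete live-\emph{groups} case of \cref{thm:NP-completeness-augmentedreachparity}.
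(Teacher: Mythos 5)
Your fixpoint characterization is not merely hard to prove complete --- it is genuinely incomplete, so the proof fails. The flaw is that your operator $\pre^{\ell}$ (and hence $\Win=\mu R.\,\mathsf{Win}^{\ell}(R)$) only credits \pz with a live edge whose source she can herself \emph{force} the play to revisit (a B\"uchi-type guarantee). But $\assumplive(\livedges)$ also constrains \po along plays that \pz cannot force at all, and these obligations \emph{chain}: fairness of one live edge can compel \po to visit the source of another live edge infinitely often. Concretely, take $V_0=\{b,t\}$, $V_1=\{a,u\}$, edges $b\to a$, $t\to t$, $a\to b$, $a\to u$, $u\to b$, $u\to t$, live edges $\livedges=\{(a,u),(u,t)\}$, and $\spec=\LTLeventually \{t\}$. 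Any play that never reaches $t$ stays in $\{a,b,u\}$ and visits $a$ infinitely often (every visit to $b$ or $u$ returns to $a$ within two steps); fairness of $(a,u)$ then forces $u$ to be visited infinitely often, and fairness of $(u,t)$ then forces the play into $t$ --- contradiction. So \pz wins from \emph{every} vertex, with any strategy. Your fixpoint instead computes only $\{t\}$: $\mathsf{Win}^{\ell}(\emptyset)=\{t\}$, and in the B\"uchi game \enquote{reach $\{t\}$ or visit $u$ infinitely often} \po wins from $\{a,u,b\}$ by always moving to $b$, so $\mathsf{Win}^{\ell}(\{t\})=\{t\}$ and the iteration stops. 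Hence $\mu R.\,\mathsf{Win}^{\ell}(R)$ is a strict under-approximation; your soundness argument and the strategy-composition step (using that $\assumplive$ is a tail property) are fine, but the characterization misses winning vertices, and the same flaw is inherited by your proposed parity arena $\gamegraph'$.

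The missing ingredient is precisely what the paper gets by citing Banerjee et al.~\cite{banerjee2022fast}: the modified predecessor must be of \enquote{almost-sure} type. Writing $\mathsf{Cpre}(X)$ for the usual controllable predecessor (\pz vertices with some edge into $X$, \po vertices with all edges into $X$), the correct operator is $\mathsf{Apre}(Y,X)=\mathsf{Cpre}(X)\cup\{v\in\source(\livedges)\mid \text{some live target of } v \text{ is in } X \text{ and \emph{all} successors of } v \text{ are in } Y\}$, used inside a characterization with an \emph{outer greatest} fixpoint, e.g.\ $\nu Y.\mu X.\,[T\cup\mathsf{Cpre}(X)\cup\mathsf{Apre}(Y,X)]$ for reachability. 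No forcing by \pz is involved: a live source is added as soon as its live target is already won and all its successors stay in the candidate region $Y$, which is exactly how chained obligations propagate (on the example above the inner iteration gives $\{t\}\to\{t,u\}\to\{t,u,a\}\to V$, as required); an outer $\mu$ over \pz-forced B\"uchi games cannot express this. The paper itself does not re-derive any of this: it obtains \cref{thm:live-edges} as a corollary of the fixpoint characterization of \cite{banerjee2022fast} (whose nesting depth matches that of the non-augmented game) combined with quasi-polynomial solvers for nested fixpoint equations, giving $\QP$ for parity and $\PTIME$ for reachability, where the nesting depth is constant.
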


\cref{thm:live-edges} is actually a corollary of known results, due to the following observations.
First, it was recently shown by Banerjee et al.~\cite{banerjee2022fast} that both reachability and parity
games augmented with live edges can be solved using a symbolic fixed-point
algorithm. Their algorithm encodes winning regions of such games as the solutions of fixpoint equations, and shows that one can solve the equations obtained with almost the same computational (worst case) complexity as the standard fixed-point algorithm for the corresponding \emph{non-augmented} games. 

The winning region of parity games augmented with live edges can be encoded further as the solution of \emph{nested fixpoint equations}, where the nesting depth of the obtained equations depend only on the number of priorities~\cite[Sec.3.4]{banerjee2022fast}. 
Following the quasi-polynomial algorithms for parity games, several results show that these techniques can be extended to solve arbitrary nested fixpoint equations in quasi-polynomial time~\cite{ANP21,HS21,JMT22}. 
As a corollary of these results, the existence of quasi-polynomial time algorithms for such augmented parity games follows.
Furthermore, as reachability games augmented with live edges can be encoded by a nested fixpoint equation whose nesting depth is fixed, they can be solved in polynomial time.

As a byproduct of this result, qualitative winning in stochastic parity games can also be decided in similar time complexity by reducing them to augmented games with live edges (see \cite[Sec.5]{banerjee2022fast} for the formal reduction). 
Stochastic two-player games (also known as $2\frac{1}{2}$-player games) generalize two-player graph games with an additional category of ``random'' vertices: whenever the game reaches a random vertex, a random process picks one of the outgoing edges (uniformly at random, w.l.o.g.). The qualitative winning problem asks whether a vertex of the game graph is almost surely (with probability $1$) winning for $\pz$.

\begin{corollary}
   The qualitative winning problem in stochastic parity games can be solved in $\QP$.
\end{corollary}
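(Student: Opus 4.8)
The plan is to reduce the qualitative winning problem in stochastic parity games to solving a two-player parity game augmented with live edges, and then invoke \cref{thm:live-edges}. Concretely, given a stochastic parity game whose vertices are partitioned into $\pz$, $\po$, and random vertices, I would build a two-player game graph $\gamegraph'$ by reinterpreting every random vertex as a $\po$ vertex and collecting all edges outgoing from the former random vertices into a live-edge set $\livedges \subseteq E_1$; the priority function and the parity objective $\spec$ are carried over unchanged. Since the random vertices become $\po$ vertices, their outgoing edges indeed lie in $E_1$, so $\assumplive(\livedges)$ is well-defined. This transformation is computable in time linear in the size of the stochastic game, and the resulting augmented game $\auggame' = (\gamegraph', \spec, \assumplive(\livedges))$ has size linear in the original.

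The crux is the correctness of this reduction, namely that a vertex $v$ is almost-surely winning for $\pz$ in the stochastic parity game if and only if $v$ lies in the winning region of $\pz$ in $\auggame'$. This equivalence is established formally in \cite[Sec.5]{banerjee2022fast}, so I would cite it rather than reprove it; the underlying intuition is as follows. By the second Borel--Cantelli lemma, a purely random resolution of the random vertices satisfies the strong transition fairness assumption $\assumplive(\livedges)$ with probability $1$: every edge out of a random vertex that is visited infinitely often is traversed infinitely often almost surely. Hence any strategy of $\pz$ that wins $\auggame'$ \emph{surely}, i.e.\ against all $\po$ behaviours respecting $\assumplive(\livedges)$, in particular wins with probability $1$ against the random process, which gives the ``if'' direction. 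For the converse, one uses the classical correspondence (for parity objectives) between almost-sure winning and sure winning against a fair adversary: an almost-surely winning strategy must in fact prevail against every resolution of the random vertices that takes each enabled edge infinitely often, and these are exactly the $\po$ plays admitted by $\assumplive(\livedges)$.

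With the reduction in hand, the corollary follows immediately. \cref{thm:live-edges} states that parity games augmented with live edges can be solved in $\QP$, and since $\auggame'$ is obtained in polynomial time and has size linear in the input, computing the winning region of $\pz$ in $\auggame'$, which coincides with the almost-sure winning region of the stochastic game, can be done in quasi-polynomial time overall.

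I expect the main obstacle to lie entirely in the correctness of the reduction, and specifically in its ``only if'' direction: arguing that an almost-surely winning strategy yields a strategy that wins surely against every fair, live-edge-respecting adversary requires the nontrivial probabilistic fact that, for $\omega$-regular (here parity) objectives, almost-sure winning against random play coincides with sure winning against the fair restriction of the environment. The subtlety is that a probability-$1$ winning strategy a priori only guarantees winning on \emph{some} measure-one set of plays, and one must show this set can be taken to contain \emph{all} fair plays. Since the statement delegates this to \cite[Sec.5]{banerjee2022fast}, the remaining work is purely the bookkeeping of the size-linear, polynomial-time construction and the appeal to \cref{thm:live-edges}.
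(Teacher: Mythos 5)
Your proposal is correct and follows essentially the same route as the paper: the paper also obtains this corollary by reducing the qualitative winning problem to parity games augmented with live edges (turning random vertices into $\po$ vertices whose outgoing edges are live), deferring the correctness of that reduction to \cite[Sec.5]{banerjee2022fast}, and then invoking \cref{thm:live-edges}. Your added discussion of the Borel--Cantelli intuition and the almost-sure-versus-sure subtlety is a nice elaboration, but it is not a departure from the paper's argument.
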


\section{Strong Transition Co-Fairness Assumptions}\label{section:colive-edges}
Even though fairness assumptions have arguably received much more attention in the reactive synthesis community, their dual -- \emph{co-fairness assumptions} -- also provide a simple and local abstraction of (non)-progress behaviour. An examples are capacity or energy restrictions, which prevent a robot manipulator to place infinitely many pieces into a buffer before it is emptied. %
Formally, strong transition \emph{co-fairness} requires a particular set of \emph{co-live edges} to be taken only finitely often in every play, as formalized next.

\begin{definition}\label{def:colive}
Given a game graph $\gamegraph = (V,E)$, \emph{strong transition co-fairness assumptions} are represented by a set $\colivegroup \subseteq E_1$ of \emph{co-live edges} and captured by the LTL formula
\begin{equation}\label{eq:assumpColive}
   \textstyle\assumpcolive(\colivegroup) \coloneqq \bigwedge_{e\in \colivegroup} \neg\LTLalways\LTLeventually e.
\end{equation}
We call $\auggame = (\gamegraph,\spec,\assumpcolive(\colivegroup))$ a game augmented with co-live edges $\colivegroup$. 
\end{definition}

The concept of co-live edges along with their induced assumption on the environment player was recently introduced by Anand et al.~\cite{permissiveAssumptions}. In their work, the problem of computing adequately permissive assumptions which render a given non-realizable (non-augmented) synthesis problem realizable, was studied. In contrast, we are interested in the problem where such assumptions are \emph{given} (as a result of the modelling process) and need to be utilized in synthesis. %

\begin{algorithm}[b]
    \caption{$ \solveColive(\gamegraph, \spec, \colivegroup) $}\label{alg:colive}
    \begin{algorithmic}[1]
        \Require Augmented parity game $\auggame = (\gamegraph = (V,E),\spec,\assumpcolive(\colivegroup))$
        \Ensure Winning region and winning strategy in the augmented game $\auggame$  
        \State $\gamegraph' \gets (V,E\setminus\colivegroup)$
        \State $\winz',\stratz' \gets \solve({\gamegraph',\spec})$; $\wino'\gets V\setminus \winz'$\label{alg:colive:solve1}
        \State $A \gets \attro{\gamegraph}{\wino'}$; $B\gets V\setminus A$\label{alg:colive:attractor}
        \If{$B = \emptyset$ or $B = V$}\label{alg:colive:if}
             \Return $B,\stratz'$\label{alg:colive:return1}
        \Else
            ~\Return $\solveColive(\game|_B)$\label{alg:colive:return2}
        \EndIf
    \end{algorithmic}
\end{algorithm}

\cref{alg:colive} provides an algorithm to solve parity games augmented with co-live edges. The idea of the algorithm is to first solve the parity game w.r.t. the game graph obtained by removing the co-live edges (\cref{alg:colive:solve1}) to get the current winning regions $(\winz',\wino')$. As $\po$ can use the co-live edges finitely many times to reach his winning region, we compute the $\po$ attractor set $A$ for $\po$'s current winning region $\wino'$ (\cref{alg:colive:attractor}), which gives us a subset of his complete winning region. If $A\subseteq V$ is a non-trivial (i.e., non-empty and strict subset), then we re-solve the game on the game restricted to its complement $B = V\setminus A$ (\cref{alg:colive:return2}). This is formalized in the following theorem and proven in \cref{appendix:co-live}. 

\begin{restatable}{theorem}{restatecolive}\label{thm:colive}
Given an augmented parity game  $\auggame = (\gamegraph,\spec,\assumpcolive(\colivegroup))$ with game graph $\gamegraph = (V,E)$ and co-live edges $\colivegroup$, the algorithm $\solveColive(\gamegraph, \spec, \colivegroup)$ returns the winning region and a memoryless winning strategy in game $\auggame$.
\end{restatable}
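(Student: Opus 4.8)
The plan is to prove correctness by induction on $|V|$, mirroring the attractor decomposition performed by the algorithm. Two global facts will be used repeatedly. First, the augmented objective $\assumpcolive(\colivegroup)\Rightarrow\spec$ is $\omega$-regular, so the game $\auggame$ (and every restriction of it) is determined. Second, the parity part $\spec$ is prefix-independent, so a finite prefix of a play does not affect who wins it; in particular the finitely many co-live edges that may be traversed before the play ``settles'' are irrelevant to the parity outcome. I would also record at the outset that since $\colivegroup\subseteq E_1$, the only edges deleted in $\gamegraph'=(V,E\setminus\colivegroup)$ belong to $\po$; hence $\pz$ has exactly the same outgoing edges in $\gamegraph$ and in $\gamegraph'$, which makes it easy to transplant $\po$'s $\gamegraph'$-strategies into $\gamegraph$.

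The central step is the \emph{key claim}: $A=\attro{\gamegraph}{\wino'}$ is contained in $\po$'s winning region of $\auggame$. From any vertex of $A$, $\po$ first follows the attractor strategy, reaching $\wino'$ in finitely many steps while using co-live edges only finitely often. He then switches to his (memoryless) winning strategy for the non-augmented parity game $(\gamegraph',\spec)$ solved on \cref{alg:colive:solve1}. Because that strategy selects only edges of $\gamegraph'$ and $\pz$ anyway has no co-live edges, the resulting tail is a genuine play of $(\gamegraph',\spec)$ that stays in $\wino'$ and therefore violates $\spec$, and after the switch \emph{no} co-live edge is ever taken. Consequently every co-live edge occurs only finitely often, so $\assumpcolive(\colivegroup)$ holds while $\spec$ fails; thus $\assumpcolive(\colivegroup)\Rightarrow\spec$ is false and $\po$ wins from every vertex of $A$.

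Next I would record two structural facts about $B=V\setminus A$. Since $A$ is $\po$'s attractor in the \emph{full} graph $\gamegraph$, its complement $B$ is a trap for $\po$: every edge out of a $\po$-vertex of $B$ stays in $B$, and every $\pz$-vertex of $B$ retains at least one edge inside $B$. Moreover, as every co-live edge originates in $V_1$ and no edge from $V_1\cap B$ can reach $A$ (otherwise its source would lie in the attractor), all co-live edges with source in $B$ stay inside $B$. Hence $\game|_B$ is a well-defined augmented parity game that agrees with $\auggame$ on every play remaining in $B$. The two base cases follow immediately: if $B=V$ then $\wino'=\emptyset$, so $\stratz'$ wins from every vertex of $\gamegraph'$, and by prefix-independence it still wins in $\auggame$ no matter how often $\po$ uses co-live edges (finitely often $\Rightarrow$ the tail satisfies $\spec$; infinitely often $\Rightarrow$ $\assumpcolive(\colivegroup)$ is false); if $B=\emptyset$ then $A=V$ lies in $\po$'s winning region by the key claim, so the returned empty region is correct.

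For the inductive step ($\emptyset\neq B\neq V$), the recursion runs on the strictly smaller $\game|_B$, and by the induction hypothesis it returns $\win_0(\game|_B)$ together with a memoryless winning strategy; it remains to show $\win_0(\auggame)=\win_0(\game|_B)$. For ``$\supseteq$'', a $\pz$-winning strategy of $\game|_B$ keeps every play inside $B$ (the trap property plus $\pz$'s own moves staying in $B$), where $\auggame$ and $\game|_B$ coincide, so it wins in $\auggame$ and transfers the memoryless strategy. For ``$\subseteq$'', the key claim gives $\win_0(\auggame)\subseteq B$; and for any $v\in B$ won by $\po$ in $\game|_B$ (determinacy), $\po$ wins $\auggame$ from $v$ by playing his $\game|_B$-strategy while $\pz$ stays in $B$ and switching to the key-claim strategy the moment $\pz$ escapes into $A$, so such $v\notin\win_0(\auggame)$. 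Combining the inclusions closes the induction. I expect the key claim of the second paragraph to be the main obstacle, as it is the one point where the co-live assumption and the prefix-independence of parity must be reconciled precisely; once it is in place, the remainder is the standard attractor-decomposition bookkeeping.
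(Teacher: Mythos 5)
Your proof is correct and follows essentially the same route as the paper's: the key claim that $A=\attro{\gamegraph}{\wino'}$ is won by $\po$ (attractor, then a co-live-edge-free winning strategy of $(\gamegraph',\spec)$), induction on $\abs{V}$, the tail-property argument for the case $B=V$, and the recursive case via $\game|_B$. If anything, you spell out details the paper leaves implicit (determinacy, the trap property of $B$, and the explicit switching strategy showing $B\setminus\winz$ is won by $\po$ in $\auggame$), which only strengthens the write-up.
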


As in each iteration, $\solveColive$ (\cref{alg:colive}) restricts the game graph to a smaller vertex set, the algorithm terminates within $\abs{V}$ iterations. 
As reachability games can be reduced to simple parity games~\cite{gamesbook},
and since non-augmented parity (resp. reachability) games can be solved in quasi-polynomial (resp. polynomial) time (\cref{thm:parity-qp,thm:reach-ptime}), \cref{alg:colive} can solve parity (resp. reachability) games augmented with co-live edges in quasi-polynomial (resp. polynomial) time.
\begin{corollary}\label{thm:co-live}
    Reachability (resp. parity) games augmented with strong transition co-fairness assumptions can be solved in $\PTIME$ (resp. $\QP$).
\end{corollary}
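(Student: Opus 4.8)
The plan is to obtain \cref{thm:co-live} as a running-time analysis of $\solveColive$ (\cref{alg:colive}), whose correctness is already guaranteed by \cref{thm:colive}. Since \cref{thm:colive} establishes that $\solveColive(\gamegraph,\spec,\colivegroup)$ returns the correct winning region and a memoryless winning strategy, it only remains to bound (i) the number of recursive calls and (ii) the cost of a single call, and then combine the two.

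First I would bound the recursion depth by $\abs{V}$. The recursive call on \cref{alg:colive:return2} is taken only in the \textbf{else} branch, i.e., when $B = V\setminus A$ is neither empty nor all of $V$. In particular $B\neq V$ forces the $\po$-attractor $A = \attro{\gamegraph}{\wino'}$ to be nonempty, so the subgame $\game|_B$ is played on a strictly smaller vertex set $\abs{B}<\abs{V}$. Hence each recursive call removes at least one vertex and the recursion bottoms out after at most $\abs{V}$ calls. (That $\game|_B$ is again a deadlock-free game graph is the standard fact that the complement of a $\po$-attractor is $\po$-closed, which is in any case needed for \cref{thm:colive}.)

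Next I would bound the per-call cost and combine. Apart from a constant amount of overhead, one call is dominated by the parity solve $\solve(\gamegraph',\spec)$ on \cref{alg:colive:solve1} over the subgraph $\gamegraph' = (V,E\setminus\colivegroup)$ and the attractor computation on \cref{alg:colive:attractor}; the latter is linear, while the former costs $\QP$ by \cref{thm:parity-qp}. Multiplying by the at most $\abs{V}$ calls and using that $\QP$ is closed under a polynomial multiplicative factor (as $n\cdot 2^{\bigO(\log^c n)} = 2^{\bigO(\log^c n)}$ for $c\geq 1$) gives the $\QP$ bound for augmented parity games. For the reachability case I would note that deleting co-live edges and passing to the subgame $\game|_B$ both preserve a reachability objective (the target set is simply intersected with the current vertices), so every invocation of $\solve$ is on a reachability game and runs in $\PTIME$ by \cref{thm:reach-ptime}; the same $\abs{V}$-fold multiplication then keeps the total in $\PTIME$. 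I expect the only point needing genuine care to be this preservation of objective type across the recursion — without it one would merely reduce reachability to parity and settle for $\QP$ — whereas the depth bound and the $\QP$/$\PTIME$ closure are routine.
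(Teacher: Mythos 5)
Your treatment of the parity case is correct and coincides with the paper's: correctness from \cref{thm:colive}, a recursion depth of at most $\abs{V}$ because each recursive call strictly shrinks the vertex set, per-call cost dominated by the $\solve$ call of \cref{alg:colive:solve1} (in $\QP$ by \cref{thm:parity-qp}), and closure of $\QP$ under a polynomial factor. The gap is in the reachability half, and it sits exactly at the step you identified as ``the only point needing genuine care'' --- but the danger is the opposite of what you anticipated. \cref{thm:colive} is stated and proven only for \emph{parity} objectives, and its proof genuinely uses that parity is a tail property: to show that every vertex of $A = \attro{\gamegraph}{\wino'}$ is winning for $\po$, the argument lets $\po$ first traverse the attractor to $\wino'$ (passing through arbitrary vertices) and only then play a $\spec$-violating strategy; prefix-independence is what allows the conclusion that the whole play violates $\spec$. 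For $\spec = \LTLeventually T$ this breaks down, because the attractor phase may cross $T$. Indeed, running \cref{alg:colive} directly on a reachability objective is unsound: take vertices $u \to t \to w$ with a self-loop at $w$, all owned by $\po$, target $T = \set{t}$, and $\colivegroup = \emptyset$ (so the augmented game is just the plain game). Then $\winz' = \set{u,t}$, $\wino' = \set{w}$, and $A = \attro{\gamegraph}{\set{w}} = V$, so $B = \emptyset$ and the algorithm declares $\pz$'s winning region empty, although $\pz$ in fact wins from $u$ and $t$.

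The paper avoids this precisely by the route you dismissed: it first reduces the reachability game to a \emph{simple} parity game (target vertices become absorbing sinks of priority $2$, all other vertices get priority $1$), so that \cref{thm:colive} applies verbatim, and with absorbing targets the tail-property argument is sound. Contrary to your worry, this reduction does not force you to ``settle for $\QP$'': the resulting parity game has only two priorities, so every internal $\solve$ call is a B\"uchi-type game solvable in $\PTIME$, and multiplying by the at most $\abs{V}$ recursive calls keeps the reachability case in $\PTIME$. In short, the objective-type preservation you tried to arrange is not only unnecessary for the $\PTIME$ bound --- insisting on it, without re-proving \cref{thm:colive} for (non-absorbing) reachability, makes the argument incorrect.
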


\section{Group Transition Fairness Assumptions}\label{section:live-groups}
In the previous sections, we saw that augmenting reachability or parity games with live or co-live edges allows for solution algorithms with the same worst-case time complexity as for the respective \emph{non-augmented} games. However, in many scenarios, live and co-live edges are not expressive enough to capture the intended progress assumption, as illustrated in the following example.

\begin{figure}[t]
    \centering
    \begin{tikzpicture}
        \node[player0] (w) at (0, 0) {$w_1$};
        \node[player1] (r) at (\hpos, 0) {$r_1$};
        \node[player0] (g) at (2*\hpos, 0) {$g_1$};
        \node[player1] (r2) at (3*\hpos, 0.5*\ypos) {$r_2$};
        \node[player0] (w2) at (3*\hpos, -0.5*\ypos) {$w_2$};
        \node[player0] (g2) at (4*\hpos, 0) {$g_2$};

        \draw[<-,thick] (w.west) --+ (-0.5,0);
        \path[->] (w) edge[bend left=20] (r) edge[loop above] ();
        \path[->] (r) edge[bend left=20] (w);
        \path[->,dashed] (r) edge node{$e_1$} (g);
        \path[->] (g) edge (r2) edge (w2);
        \path[->] (r2) edge[dashed] node{$e_2$} (g2) edge[bend left=-30] (w);
        \path[->] (w2) edge[bend left=20] (r) edge[bend left=30] (w);
    \end{tikzpicture}
    \caption{Augmented reachability game $(\gamegraph_2,\spec_2,\assumpgrlive(\livegroups))$ with $\po$ vertices (squares), $\pz$ vertices (circles), specification $\spec_2 = \LTLeventually g_2$ and live group $\livegroups = \{\{e_1,e_2\}\}$. This game is equivalent to the augmented game $(\gamegraph,\spec_1,\assumplive(\livedges))$, with $\gamegraph$, $\livedges = \{e\}$ as depicted in \cref{fig:live-edges} and $\spec_1:=\LTLeventually (g \wedge \LTLnext\LTLnext g)$.}\vspace{-0.5cm}\label{fig:live-groups}
\end{figure}
\begin{example}\label{example:live-groups}
    Consider again the game graph $\gamegraph$ as shown in \cref{fig:live-edges} with a new specification $\spec_1 =  \LTLeventually (g \wedge \LTLnext\LTLnext g)$. 
    Intuitively, $\pz$ wants to ensure that at some point she gets two grants consecutively.
    However, under the live edge assumption $\assumplive(\livedges)$ with $\livedges = \{e\}$, $\pz$ does not have a winning strategy. This live edge only ensures infinitely many grants, but does not ensure consecutive grants. 
    As the specification is not a reachability/parity objective, the standard way to solve the game $(\gamegraph,\spec_1)$ is to translate $\spec_1$ into a parity game $\game'$ and then take the product of $\game'$ with $\gamegraph$. In the given example, the result turns out to be the reachability game $(\gamegraph_2,\spec_2)$ with $\spec_2 = \LTLeventually g_2$, as depicted \cref{fig:live-groups}. 
    In this product game graph, live edge $e$ of $\gamegraph$ has been split into two copies, i.e., edge $e_1$ and $e_2$.  
    If we consider the live edge assumption $\assumplive(\livedges_2)$ with $\livedges_2 = \{e_1,e_2\}$, then $\pz$ can satisfy $\spec_2$ by requesting every time -- this will force $\po$ to take edge $e_1$ infinitely often and then finally take edge $e_2$.
    However, $\pz$ should actually not have a winning strategy as she does not have one in game $(\gamegraph,\spec_1)$.
    This tells us that the live edge $e$ in $G$ does not translate to live edges $\{e_1,e_2\}$ in the product game.
    In particular, in order to satisfy the fairness assumption inherited from the original game, it is sufficient for $\po$ to take only one of the edges in $\{e_1,e_2\}$ infinitely often if their sources are visited infinitely often.
    This disjunction over live edges can be expressed by a \emph{live group} $\livegroupSingle = \{e_1,e_2\}$, as formalized next. With this, the augmented game $(\gamegraph_2,\spec_2,\assumpgrlive(\{\livegroupSingle\}))$ becomes equivalent to $(\gamegraph,\spec_1,\assumplive(\livedges))$, and is therefore again unrealizable.
\end{example}

\subsection{Live Group Assumptions}\label{subsec:live-group-def}

\emph{Live group} assumptions, a generalization of live edge assumptions, are defined by a finite set $\livegroups$ of edge groups $\livegroupSingle$ in a game. The assumptions require that for each live group $\livegroupSingle \in \livegroups$, if at least one source vertex in $\livegroupSingle$ is visited infinitely often along a play, at least one of the edges in $\livegroupSingle$ is traversed infinitely often as well.

\begin{definition}\label{def:live-groups}
Given a game graph $\gamegraph = (V,E)$, the assumptions represented by a set $\livegroups$ of live groups $\livegroupSingle$ are captured by the LTL formula
\begin{equation}\label{eq:assumpLiveGroup}
    \textstyle\assumpgrlive(\livegroups) \coloneqq \bigwedge_{\livegroupSingle\in \livegroups} \left(\LTLalways\LTLeventually\source(\livegroupSingle) \Rightarrow \LTLalways\LTLeventually \livegroupSingle\right).
\end{equation}
We call $\auggame = (\gamegraph,\spec,\assumpgrlive(\livegroups))$ a game augmented with live groups $\livegroups$.
\end{definition}

\noindent Unfortunately, it turns out that for such games the answer to \cref{problem:main} is negative.

\begin{restatable}{theorem}{restatenpcompletenesslivegrps}\label{thm:NP-completeness-augmentedreachparity}
     Solving reachability and parity games augmented with live groups is $\NP$-complete.%
\end{restatable}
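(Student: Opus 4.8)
The plan is to prove $\NP$-membership for both objective types at once, via a Rabin reformulation, and $\NP$-hardness for the reachability case, which transfers to the parity case since a reachability objective $\LTLeventually T$ is equivalent to a parity objective.

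For membership, I would rewrite $\pz$'s implication-form objective $\assumpgrlive(\livegroups)\Rightarrow\spec$ as a Rabin objective. Observe that $\assumpgrlive(\livegroups)$ is literally a Streett condition with one pair $(\source(\livegroupSingle),\livegroupSingle)$ per live group, so that its negation $\neg\assumpgrlive(\livegroups)=\bigvee_{\livegroupSingle\in\livegroups}\left(\LTLalways\LTLeventually\source(\livegroupSingle)\wedge\neg\LTLalways\LTLeventually\livegroupSingle\right)$ is a Rabin condition whose $\livegroupSingle$-th pair has Büchi set $\source(\livegroupSingle)$ and co-Büchi (edge) set $\livegroupSingle$. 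Since any parity objective is Rabin-expressible and $\LTLeventually T$ becomes a single Rabin pair after redirecting $T$ into an absorbing sink, and since Rabin conditions are closed under disjunction (by taking the union of their pairs), the full objective $\neg\assumpgrlive(\livegroups)\vee\spec$ is a Rabin objective with polynomially many pairs. Moving edge-acceptance to vertex-acceptance by splitting each live edge keeps the graph polynomial, so \cref{thm:rabin-np-complete} places the problem in $\NP$.

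For hardness, I would reduce an $\NP$-complete problem --- concretely $\mathrm{SAT}$, which is the combinatorial core of the $\NP$-hardness of solving Rabin games for $\pz$ (\cref{thm:rabin-np-complete}) --- to solving augmented reachability games with live groups. The guiding structural fact, which also fixes the direction of the reduction, is that $\pz$'s objective above is exactly a disjunctive (Rabin) condition: her existential ``some live group $\livegroupSingle$ has $\source(\livegroupSingle)$ visited infinitely often while $\livegroupSingle$ is taken only finitely often'' is the right shape to carry the existential content of satisfiability. Because the membership argument already forbids the dual alignment (a reduction in which $\pz$ wins iff the formula is \emph{unsatisfiable} would place a $\mathrm{co}\text{-}\NP$-hard problem into $\NP$), I must build the instance so that $\pz$ wins iff the formula is satisfiable, turning each clause (resp.\ variable) into one live group $\livegroupSingle$ and exploiting that the source $\source(\livegroupSingle)$ carries additional, non-group outgoing edges --- this is precisely what makes the event $\LTLalways\LTLeventually\source(\livegroupSingle)\wedge\neg\LTLalways\LTLeventually\livegroupSingle$ a genuinely forceable win for $\pz$ rather than a vacuous one.

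The main obstacle is the faithful and polynomial encoding into the live-group format, whose single pair per group is \emph{coupled}: the Büchi trigger $\source(\livegroupSingle)$ and the co-Büchi response $\livegroupSingle$ are not independent sets, but the sources and the edges of one and the same group. A naive pair-by-pair simulation of arbitrary vertex-based Rabin pairs would need a separate monitor per pair and blow up exponentially, so the reduction must instead realize the disjunctive structure of satisfiability directly inside the coupled format and --- since $\assumpgrlive(\livegroups)$ restricts only $\po$ --- must arrange that it is $\pz$, not $\po$, who profits when a satisfying assignment exists. Verifying that the infinitely-often bookkeeping of this gadget matches the clause/variable semantics, and that the inserted (forced $\po$-)vertices grant neither player spurious strategies, is where the bulk of the argument lies; the membership direction, by contrast, is immediate once the Rabin reformulation is in hand.
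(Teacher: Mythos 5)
Your membership argument is correct and is essentially the paper's own (\cref{lem:NPinclusion-augmentedparity}): the paper likewise encodes the assumption as Rabin pairs $(\source(\livegroupSingle),\livegroupSingle)$, encodes the parity objective as pairs $(P_{2i},\cup_{j>2i}P_j)$, and splits each edge through a fresh vertex so that edge-acceptance becomes vertex-acceptance, after which \cref{thm:rabin-np-complete} gives membership in $\NP$; your phrasing via closure of Rabin conditions under disjunction is the same construction seen from the implication form. The problem is the hardness half: what you have written is a plan for a reduction, not a reduction. You state the properties the instance must have (that $\pz$ wins iff the formula is satisfiable, that the coupled trigger/response structure of a live group must be exploited, that the gadget must not create spurious strategies), and you even flag that ``the bulk of the argument lies'' in verifying the gadget --- but no gadget is ever defined. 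There is no game graph, no specification of which vertices belong to which player, no definition of the live groups (your parenthetical ``each clause (resp.\ variable) into one live group'' leaves even this choice open; in the paper's construction it is \emph{variables} that induce live groups, two per variable, while clauses are encoded as $\pz$ vertices), and no correctness argument in either direction. This is precisely the content of \cref{lem:NPhardness-augmentedreachability}, which builds, from a 3-SAT instance, a game in which $\po$ repeatedly challenges with a clause, $\pz$ answers with a literal of that clause, and the live groups $H_i^1=\{(x_i,\cdot),(\neg x_i',v_0)\}$, $H_i^2=\{(\neg x_i,\cdot),(x_i',v_0)\}$ tie each literal's escape edge to the target with the opposite literal's return edge to $v_0$.

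Beyond the missing construction, your plan also omits an ingredient that the paper's unsatisfiability direction cannot do without: to show that $\po$ wins when $\varphi$ is unsatisfiable, one must defeat \emph{every} $\pz$ strategy, including memoryful ones. The paper handles this by first observing (via the very Rabin reformulation you use for membership) that the augmented game is a Rabin game for $\pz$, hence half-positional, so it suffices to refute positional $\pz$ strategies; for any positional strategy, unsatisfiability forces a conflict $\stratz(C_i)=y$, $\stratz(C_j)=\neg y$, and $\po$ wins by alternating between the challenges $C_i$ and $C_j$, which satisfies all live group assumptions while never reaching the target. Your proposal never addresses how arbitrary strategies are to be handled, so even granting a gadget of the shape you describe, the ``unsatisfiable $\Rightarrow$ $\po$ wins'' direction would remain open. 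In short: membership is complete and matches the paper; hardness is a statement of intent whose essential mathematical content --- the gadget, the live-group definitions, and the two correctness directions (one of which needs half-positionality) --- is absent.
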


In order to prove this theorem, we will introduce \cref{lem:NPinclusion-augmentedparity} and \cref{lem:NPhardness-augmentedreachability}. %
Since reachability specifications give one of the easiest infinite games, this result can be perceived as a negative result for all meaningful games augmented with live group assumptions. However, 
the $\NP$-completeness result also holding for parity implies that going from reachability to parity does not add up to the complexity.

\begin{restatable}{lemma}{restatelivegrouplemma}\label{lem:NPinclusion-augmentedparity}
    Parity games augmented with live groups can be solved in $\NP$.
\end{restatable}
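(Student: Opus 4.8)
The plan is to exhibit an $\NP$ procedure that guesses a winning strategy for $\pz$ and verifies it in polynomial time. The first step is to observe that the objective of the implication game $(\gamegraph,\assumpgrlive(\livegroups)\Rightarrow\spec)$ is in fact a \emph{Rabin} objective for $\pz$. Indeed, $\assumpgrlive(\livegroups)\Rightarrow\parity(\priority)$ equals $\neg\assumpgrlive(\livegroups)\vee\parity(\priority)$. The negated assumption $\neg\assumpgrlive(\livegroups)=\bigvee_{\livegroupSingle\in\livegroups}(\LTLalways\LTLeventually\source(\livegroupSingle)\wedge\neg\LTLalways\LTLeventually\livegroupSingle)$ is a disjunction of Rabin pairs (reading the edge set $\livegroupSingle$ as an atomic proposition, i.e., with transition-based colors), and $\parity(\priority)$ is itself expressible as a Rabin condition with $O(d)$ pairs. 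Since Rabin conditions are closed under disjunction (take the union of the pair families), $\assumpgrlive(\livegroups)\Rightarrow\spec$ is a Rabin objective with $O(\abs{\livegroups}+d)$ pairs. Transition-based colors can be removed at the cost of at most doubling the graph, so this reduction is polynomial.

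The second step invokes the fact that Rabin games are memoryless-determined for the Rabin player~\cite{EmersonJ91}: the player whose objective is the Rabin condition (here $\pz$) has a memoryless winning strategy on her entire winning region, regardless of the number of pairs. Hence, to decide whether a designated vertex $v_0$ is winning for $\pz$, it suffices to \emph{guess} a memoryless strategy $\stratz\colon V_0\to V$, whose description has size $O(\abs{V})$.

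The third step is the polynomial-time verification. Fixing $\stratz$ turns $\auggame$ into a one-player graph $\gamegraph_{\stratz}$ in which only $\po$ has choices. The strategy $\stratz$ is winning from $v_0$ iff no play from $v_0$ in $\gamegraph_{\stratz}$ satisfies the complement objective $\neg(\assumpgrlive(\livegroups)\Rightarrow\spec)=\assumpgrlive(\livegroups)\wedge\neg\parity(\priority)$, which is a Streett condition (the dual of the Rabin objective above). The infinitely-visited vertices and edges of any play form a reachable strongly connected subgraph, so it suffices to search for a reachable strongly connected subgraph $C$ of $\gamegraph_{\stratz}$ such that (i) the maximal priority occurring in $C$ is odd, and (ii) for every live group $\livegroupSingle\in\livegroups$, either no source of $\livegroupSingle$ lies in $C$ or some edge of $\livegroupSingle$ lies in $C$. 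This is an instance of one-player Streett nonemptiness, solvable in polynomial time by the standard SCC-based procedure: iterate over each odd $k\in[0;d]$ as the intended maximum priority, restrict to the subgraph on vertices of priority at most $k$, and repeatedly remove from each strongly connected component that contains a source of some $\livegroupSingle$ but no edge of $\livegroupSingle$ those source vertices (recomputing components), until either a reachable nontrivial component containing a priority-$k$ vertex survives, witnessing a violation, or none does. If no such $C$ exists for any $k$, then $\stratz$ is winning and the algorithm accepts.

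Combining the steps, the guessed strategy has polynomial size and the verification runs in polynomial time, so the whole procedure is in $\NP$. The main obstacle is the clean reduction of $\assumpgrlive(\livegroups)\Rightarrow\spec$ to a Rabin objective so as to import memoryless determinacy for $\pz$ (including the bookkeeping for transition-based acceptance), together with arguing that the complementary one-player Streett check is genuinely polynomial; once these are in place the $\NP$ bound is immediate. Note that guessing $\stratz$ rather than a strategy for $\po$ is essential, since the Streett player generally requires memory, whereas the Rabin player $\pz$ does not.
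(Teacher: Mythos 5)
Your proposal is correct, and its core is the same reduction the paper uses: you encode the implication objective as a Rabin condition whose pairs are $(\source(\livegroupSingle),\livegroupSingle)$ for each $\livegroupSingle\in\livegroups$ together with the standard parity-as-Rabin pairs, handling the edge-based sets by subdividing edges -- exactly the paper's construction of the game on $V\uplus E$. Where you diverge is in what you do with this observation: the paper stops there and invokes the known $\NP$-completeness of Rabin games (its \cref{thm:rabin-np-complete}), whereas you unfold that citation into an explicit $\NP$ procedure -- guess a memoryless $\pz$ strategy (justified by memoryless determinacy for the Rabin player, pulled back through the subdivision to the original graph), then verify it in polynomial time by the SCC-based one-player Streett emptiness check, iterating over the intended maximal odd priority $k$ and repeatedly deleting sources of live groups whose groups contribute no edge to the current component. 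Both routes are sound; the paper's is shorter and modular, delegating all positional-determinacy and verification issues to the cited Rabin result, while yours is self-contained, exhibits a concrete certificate and verifier, and has the minor advantage that the verification runs natively on the original game graph (the Rabin view is needed only to license the memoryless guess). One presentational nit: your removal loop should be run to stabilization before testing for a surviving reachable nontrivial component containing a priority-$k$ vertex, since an intermediate component containing such a vertex may still violate some live-group constraint; this is clearly your intent, but worth stating explicitly.
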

\noindent \textit{Proof sketch [Full proof in \cref{appendix:lem:NPinclusion-augmentedparity}].} The intuition is that we can encode both the parity conditions and the live group conditions as Rabin pairs.
For an augmented parity game $\auggame =(\gamegraph = (V,E),\spec = \parity(\priority), \assumpgrlive(\livegroups) )$, 
an equivalent Rabin game is $\game' = ((V', E'), \spec' = \rabin(\Omega_1 \cup \Omega_2))$ with $V' = V \uplus E$ and $E' = \{ (u,e), (e,v) \mid (u,v) \in E\}$ where
the Rabin pairs in $\Omega_1 = \{(\source(\livegroupSingle),\livegroupSingle) \mid \livegroupSingle\in\livegroups\}$ represent the live group conditions and
the ones in $\Omega_2 = \{(P_{2i},\cup_{j>2i} P_j)\mid 0\leq 2i\leq d\}$ where $ P_i=\{q\in Q\mid \priority(q)=i \}$ represent the parity conditions.\qed

\begin{restatable}{lemma}{restatelivegrouplemmab}\label{lem:NPhardness-augmentedreachability}
    Solving reachability games augmented with live groups is $\NP$-hard.
\end{restatable}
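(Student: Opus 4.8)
The plan is to prove $\NP$-hardness by reducing from a known $\NP$-complete problem. Since live group assumptions naturally express a \emph{disjunction} (``at least one edge in the group is taken infinitely often''), the most natural source problem is one whose hardness stems from a disjunctive/choice structure. I would reduce from $3$-\textsc{Sat} (or directly from the satisfiability of a boolean formula), encoding the choice of a truth assignment into the environment player's freedom in satisfying the live group assumption, while the system player's reachability objective forces the assumption to actually encode a satisfying assignment.

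First I would set up the gadget as follows. Given a $3$-\textsc{Sat} instance with variables $x_1,\dots,x_n$ and clauses $C_1,\dots,C_m$, I build a game graph in which a play cycles through the variables, and for each variable $x_j$ the environment is offered two edges, one tagged ``$x_j$ true'' and one tagged ``$x_j$ false''. I group these so that a live group forces the environment to commit, infinitely often, to one polarity per variable; the key trick is to arrange the sources and the grouping so that a \emph{consistent} infinite assignment is what the environment is compelled to produce. For each clause $C_i$ I place a gadget that the system player can route the play through in order to ``check'' that clause: the system player wins (reaches the target $T$) only if, for the assignment that the environment is forced to realize infinitely often, every clause it chooses to visit is satisfied. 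The live group assumption on the variable edges guarantees that if a variable source is seen infinitely often, one of its two polarity edges is taken infinitely often, and the graph is built so both sources are indeed seen infinitely often under any reasonable play.

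The correctness argument then splits into the two standard directions. For the forward direction, if the formula is satisfiable, I exhibit an environment-restricted play consistent with the assumption that encodes a satisfying assignment, and show the system player has a strategy reaching $T$ along it; conversely, a system strategy that is winning against every assumption-compliant environment yields, by reading off which polarity edges are taken infinitely often, a consistent assignment satisfying all clauses, because otherwise the environment could comply with the live group assumption yet steer the play to avoid $T$ (by routing through an unsatisfied clause gadget forever). Care is needed to ensure the environment's escape is genuinely assumption-compliant: the group structure must permit the ``bad'' polarity choice while still respecting $\assumpgrlive(\livegroups)$, which is exactly why live \emph{groups} (disjunctions), rather than individual live edges, are essential to the hardness.

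The main obstacle I expect is the \emph{consistency} requirement: a live group only guarantees that \emph{some} edge of the group is taken infinitely often, not that a single polarity is chosen persistently, so a naive encoding allows the environment to alternate between $x_j$ true and $x_j$ false infinitely often, which does not correspond to any boolean assignment. Overcoming this requires a gadget design that makes vacillation exploitable by the system player (so the environment is forced into a stable choice to avoid losing), or an encoding where only eventually-stable polarity choices can keep the play out of $T$; getting this gadget right, and verifying that both directions of the reduction hold with a polynomially-sized graph, is where the real work lies. Combined with the $\NP$ membership of \cref{lem:NPinclusion-augmentedparity}, this would establish \cref{thm:NP-completeness-augmentedreachparity}.
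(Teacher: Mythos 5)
Your proposal correctly identifies the right source problem (3-SAT) and the right intuition---that the disjunctive nature of live groups drives the hardness---but it stops exactly where the actual proof content is needed. You place the truth assignment on the environment's side: live groups are supposed to ``force the environment to commit, infinitely often, to one polarity per variable.'' As you yourself observe, a live group cannot do this: it is a purely infinitary disjunctive condition, so the environment may alternate polarities forever while remaining assumption-compliant, and no boolean assignment can be read off. You flag this consistency problem and defer it (``getting this gadget right \dots is where the real work lies''), so the proposal is missing its central construction rather than containing a completed-but-flawed one; worse, within your framing there is no obvious fix, because nothing expressible in $\assumpgrlive(\livegroups)$ can force eventual stability of a choice.

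The paper's construction resolves this by a role reversal that makes the consistency problem evaporate. The assignment is encoded not in the environment's infinitary behavior but in the \emph{system player's positional strategy}: $\po$ challenges with a clause vertex $C_i$, and $\pz$ answers by moving to a literal $y$ occurring in $C_i$; since the augmented game is equivalent to a Rabin game (\cref{lem:NPinclusion-augmentedparity}) and Rabin games are half-positional, one may assume $\pz$ plays positionally, so her strategy \emph{is} a (possibly inconsistent) assignment. The live groups then couple each literal with its negation, $H_i^1 = \{(x_i,\smiley{}),(\neg x'_i,v_0)\}$ and $H_i^2=\{(\neg x_i,\smiley{}),(x'_i,v_0)\}$: the environment can discharge the obligation created by visiting $x_i$ infinitely often either by granting the target or by taking the return edge from $\neg x'_i$, which requires $\neg x_i$ to be visited infinitely often as well. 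If $\pz$'s positional choices come from a satisfying assignment, negated literals are never visited, so the environment is forced into the target and $\pz$ wins; if her choices conflict, say $\strat(C_{i_1})=x_j$ and $\strat(C_{i_2})=\neg x_j$, the environment alternates between those two clauses and satisfies both live groups via the return edges without ever reaching the target. The vacillation you worried about is thus not something to be prevented; it is precisely the environment's winning exploit against an inconsistent system strategy, and half-positionality is the tool that turns ``no consistent winning choice exists'' into ``$\varphi$ is unsatisfiable.''
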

\noindent \textit{Proof sketch [Full proof in \cref{appendix:lem:NPinclusion-augmentedparity}].} 
    We will give a polynomial-time reduction from the 3-SAT problem to reachability games augmented with live groups.
    Towards this goal, we consider the 3-SAT instance 
    $\varphi = C_1\wedge C_2 \wedge\cdots \wedge C_k$ where $C_i = (y_{1i} \vee y_{2i} \vee y_{3i})$ for $i \in [1;k]$
    and each $y_{1i},y_{2i},y_{3i}$ is a literal from the set $\{x_1, x_2, \ldots, x_m, \neg x_1, \neg x_2, \ldots, \neg x_m\}$.
    We construct the augmented reachability game $\auggame^\varphi =(\gamegraph = (V,E),\spec = \LTLeventually\smiley{}, \assumpgrlive(\livegroups))$ 
    with vertex set 
    \begin{equation*} V = \{v_0\} \cup \{\tiny{C_i} \mid i \in [1; k]\} \cup \{y, y' \mid y \in \{x_i, \neg x_i\} \quad\text{\large{for}}\quad  i \in [1;m]\} \cup \{\large{\smiley{}}\} 
    \end{equation*}
    where $\p{0}$ vertices consists of $\{\tiny{C_i} \mid i \in [1;m] \}$ and $\large{\smiley{}}$; edge set
    \footnote{The edges $x_i \to x'_i$ mainly serve illustrative purposes, and the live outgoing edge of $x'_i$ can actually be attributed directly to $x_i$.
    Further, distributing live edges to separate vertices underscores the result's validity for live edges with disjoint sources.  
    }     
    \begin{align*}
        E = &\{ (v_0, C_i) \mid i \in [1;k]\} \cup \{(y, \large{\smiley{}}), (y, y') \mid y \in \{x_i, \neg x_i\} \text{ for } i \in [1;m]\} \cup \\
        &\{(C_j, y) \mid j \in [1;m], y \in \{x_i, \neg x_i\} \text{ for } i \in [1;m] \text{ and } y \text{ is in clause }C_j\} \cup \{(\large{\smiley{}}, \large{\smiley{}})\};
    \end{align*}
    and live groups $\livegroups = \{ H_i^1, H_i^2 \mid i \in [1;m]\}$ where $H_i^1 := \{(x_i, \large{\smiley{}}), (\neg x'_i, v_0)\}$ and $ H_i^2 := \{(\neg x_i, \large{\smiley{}}), (x'_i, v_0)\}$.
    The game $\auggame^\varphi$ for the 3-SAT formula $\varphi = C_1 \wedge C_2 \wedge C_3$ with $C_1 = (x_1 \vee x_2 \vee \neg x_3)$, $C_2 = (\neg x_1 \vee x_2 \vee \neg x_3)$, $C_3 = (\neg x_1 \vee \neg x_2 \vee x_3)$ is given in~\cref{fig:grouplivenessNPhardness-example} for illustration. %
    \definecolor{myblue}{RGB}{0,0,255}      %
    \definecolor{mygreen}{RGB}{0,128,0}     %
    \definecolor{mypink}{RGB}{255,0,255}    %
    \definecolor{myorange}{RGB}{255,165,0}  %
    \definecolor{mypurple}{RGB}{128,0,128}  %
    \definecolor{myred}{RGB}{255,0,0}       %
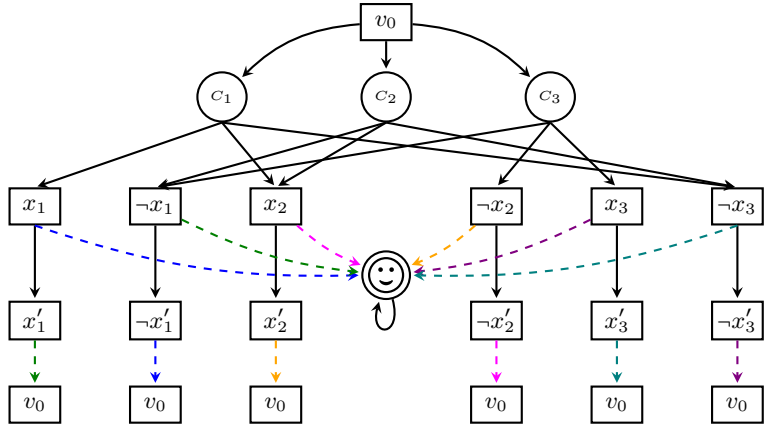
\begin{figure}[t]
    \centering
    \begin{tikzpicture}[player1/.style={draw, thick, rectangle, text width=13pt, text height=6.5pt, align=center}
        ]
        \node[player1] (v0) at (current page.center) {\phantom{$v_0$}};
        
        \node[player0, below=0.4 cm of v0] (C2) {\tiny{$C_2$}};
        \node[player0, left=1.5 cm of C2] (C1) {\tiny{$C_1$}};
        \node[player0, right=1.5 cm of C2] (C3) {\tiny{$C_3$}};
        
        \path[->] (v0) edge[bend right=20] (C1);
        \path[->] (v0) edge (C2); 
        \path[->] (v0) edge[bend left=20] (C3); 

        \node[below=1cm of C2] (middle) {};
        \node[player1, left=1cm of middle] (x2) {\phantom{$x_2$}};
        \node[player1, left=9mm of x2] (nx1) {\phantom{$\neg x_1$}};
        \node[player1, left=9mm of nx1] (x1) {\phantom{$x_1$}};

        \node[player1, right=1cm of middle] (nx2) {\phantom{$\neg x_2$}};
        \node[player1, right=9mm of nx2] (x3) {\phantom{$x_3$}};
        \node[player1, right=9mm of x3] (nx3) {\phantom{$\neg x_3$}};

        \node[xshift=-0.12mm] at (nx3) {$\neg x_3$};
        \node[xshift=-0.12mm] at (nx2) {$\neg x_2$};
        \node[xshift=-0.12mm] at (nx1) {$\neg x_1$};

        \node[] at (v0) {$v_0$};
        \node[] at (x1) {$x_1$};
        \node[] at (x2) {$x_2$};
        \node[] at (x3) {$x_3$};
        
        \path[->] (C1.south) edge (x1.north);
        \path[->] (C1.south) edge (x2.north);
        \path[->] (C1.south) edge (nx3.north);

        \path[->] (C2.south) edge (nx1.north);
        \path[->] (C2.south) edge (x2.north);
        \path[->] (C2.south) edge (nx3.north);

        \path[->] (C3.south) edge (nx1.north);
        \path[->] (C3.south) edge (nx2.north);
        \path[->] (C3.south) edge (x3.north);
 
        \node[player1, below=1cm of x1] (x1p) {\phantom{$x'_1$}};
        \node[player1, below=1cm of x2] (x2p) {\phantom{$x'_2$}};
        \node[player1, below=1cm of x3] (x3p) {\phantom{$x'_3$}};
        \node[player1, below=1cm of nx1] (nx1p) {\phantom{$\neg x'_1$}};
        \node[player1, below=1cm of nx2] (nx2p) {\phantom{$\neg x'_2$}};
        \node[player1, below=1cm of nx3] (nx3p) {\phantom{$\neg x'_3$}};

        \node[player0, below=1.7cm of C2] (smiley) {\phantom{\smiley{}}};
        \node[] at (smiley) {\huge{\smiley{}}};

        \node[xshift=-0.12mm] at (nx1p) {$\neg x'_1$};
        \node[xshift=-0.12mm] at (nx2p) {$\neg x'_2$};
        \node[xshift=-0.12mm] at (nx3p) {$\neg x'_3$};

        \node[] at (x1p) {$x'_1$};
        \node[] at (x2p) {$x'_2$};
        \node[] at (x3p) {$x'_3$};

        \path[->] (x1.south) edge (x1p.north);
        \path[->] (x2.south) edge (x2p.north);
        \path[->] (x3.south) edge (x3p.north);
        \path[->] (nx1.south) edge (nx1p.north);
        \path[->] (nx2.south) edge (nx2p.north);
        \path[->] (nx3.south) edge (nx3p.north);

        \path[thick] (smiley) edge[loop below] (smiley);

        \path[->, thick, dashed, myblue] (x1.south) edge[bend right=10] (smiley.west);     
        \path[->, thick, dashed, mygreen] (nx1) edge[bend right=10] (smiley); 
        \path[->, thick, dashed, mypink] (x2) edge[bend right=10] (smiley); 
        \path[->, thick, dashed, myorange] (nx2) edge[bend left=10] (smiley); 
        \path[->, thick, dashed, mypurple] (x3) edge[bend left=10] (smiley); 
        \path[->, thick, dashed, teal] (nx3.south) edge[bend left=10] (smiley.east);     

        \node[player1, below=0.6cm of x1p] (x1tov0) {\phantom{$v_0$}};
        \node[player1, below=0.6 of nx1p] (nx1tov0) {\phantom{$v_0$}};
        \node[player1, below=0.6 of x2p] (x2tov0) {\phantom{$v_0$}};
        \node[player1, below=0.6 of nx2p] (nx2tov0) {\phantom{$v_0$}};
        \node[player1, below=0.6 of x3p] (x3tov0) {\phantom{$v_0$}};
        \node[player1, below=0.6 of nx3p] (nx3tov0) {\phantom{$v_0$}};

        \node[] at (x1tov0) {$v_0$};
        \node[] at (nx1tov0)  {$v_0$};
        \node[] at (x2tov0)  {$v_0$};
        \node[] at (nx2tov0)  {$v_0$};
        \node[] at (x3tov0)  {$v_0$};
        \node[] at (nx3tov0)  {$v_0$};

        \path[->, thick, dashed, mygreen] (x1p.south) edge (x1tov0.north);
        \path[->, thick, dashed, myblue] (nx1p.south) edge (nx1tov0.north);
        \path[->, thick, dashed, myorange] (x2p.south) edge (x2tov0.north);
        \path[->, thick, dashed, mypink] (nx2p.south) edge (nx2tov0.north);
        \path[->, thick, dashed, teal] (x3p.south) edge (x3tov0.north);
        \path[->, thick, dashed, mypurple] (nx3p.south) edge (nx3tov0.north);

    \end{tikzpicture}\caption{Game $\auggame^\varphi$. %
    Each live group is denoted by dashed edges of a different color, e.g. %
     $H^1_1$ is blue. $\varphi$ has a satisfying assignment $L = \{x_1, x_2, x_3\}$ and thus, the positional strategy with $\stratz(C_1) = x_1, \stratz(C_2)= x_2$ and $\stratz(C_3)= x_3$ is winning.
     }\label{fig:grouplivenessNPhardness-example}
\end{figure}
    
    Game $\auggame^\varphi$ starts from the $\p{1}$ vertex $v_0$. $v_0$ has an outgoing edge to each $\p{0}$ vertex $C_i$, each representing the respective clause in the 3-SAT formula. By taking the edge $(v_0, C_i)$, $\p{1}$ challenges $\p{0}$ to satisfy clause $C_i$. $C_i$ has an outgoing edge to every literal in the clause. By taking edge $(C_i, y)$, $\p{0}$ decides through which literal it will satisfy $C_i$. 
    Then from each $\p{1}$ vertex $y$, there exist two edges: one to $\large{\smiley{}}$, and the other to $y'$. From vertex $y'$ there is only one outgoing edge, i.e. to $v_0$.
    The live groups assert the following condition: Whenever a literal or its negation (say, $y$ or $y'$) is visited infinitely often in the game, $\p{1}$ must take one of the edges $(y, \large{\smiley{}})$ or $(\neg y', v_0)$ infinitely often. This condition forces $\p{1}$ to either for each literal the game visits infinitely often, also visit the negation of the literal infinitely often; or, to go to $\large{\smiley{}}$. %
    
    If $\varphi$ has a satisfying assignment, then the game is won by the positional strategy $\stratz$ that sends each $C_i$ to a literal satisfied in $C_i$ (see~\cref{fig:grouplivenessNPhardness-example}). If $\varphi$ is unsatisfiable, then 
    every positional $\pz$ strategy (it is sufficient to consider positional strategies for $\pz$ as $\auggame^\varphi$ can be viewed as a Rabin game -- see the proof of~\cref{lem:NPinclusion-augmentedparity},  and Rabin games are half positional) there exist $C_i, C_j, y$ with $\stratz(C_i) = y$ and $\stratz(C_j) = \neg y$, or else, $\stratz$ gives a satisfying assignment for $\varphi$. 
    For each such $\stratz$, $\po$ has a winning strategy $\strato$: %
    Whenever $v_0$ is visited it alternates between $\strato(v_0) = C_i$ and $\strato(v_0)= C_j$ and never takes an edge to $\large{\smiley{}}$. A $\stratz \strato$-play neither violates any assumptions nor visits $\large{\smiley{}}$. Thus, $\pz$ has no winning strategy in $\auggame^\varphi$ and therefore, $\po$ wins $\auggame^\varphi$.
    \qed

\smallskip
Since reachability games are a special case of parity games, \cref{lem:NPinclusion-augmentedparity} implies reachability games augmented with live groups lies in $\NP$ and~\cref{lem:NPhardness-augmentedreachability} implies parity games augmented with live groups is $\NP$-hard. 
Hence, combining these results with the ones from \cref{lem:NPinclusion-augmentedparity} and \cref{lem:NPhardness-augmentedreachability} proves \cref{thm:NP-completeness-augmentedreachparity} (see \cref{appendix:thm:NP-completeness-augmentedreachparity} for the complete proof).

\subsection{Singleton-Source Live Groups}\label{section:CNF-edges}

The proof for $\NP$-completeness in \cref{subsec:live-group-def} shows this hardness result already for live groups where each group only contains two edges (indicated by different colors in \cref{fig:grouplivenessNPhardness-example}). Hence, one of the simplest possible generalization of live edges to live groups makes the problem already harder to solve. However, these edges had two different source vertices. The next theorem shows that this distinction in source vertices is actually necessary for $\NP$-hardness. In particular, it shows that \emph{singleton-source live groups}, i.e., live groups $\livegroups$ s.t.\ $\abs{\source(\livegroupSingle)} = 1$ for all $\livegroupSingle\in \livegroups$, can be reduced to games annotated with live edges, and hence can be solved efficiently. The corresponding construction is illustrated with an example in \cref{fig:boolean-live-groups} and proven in \cref{appendix:thm:singleton-live-groups}. 

\begin{restatable}{theorem}{restatesingletonlivegrps}\label{thm:singleton-live-groups}
    Reachability (resp. parity) games augmented with singleton-source live groups can be solved in $\PTIME$ (resp. $\QP$).
\end{restatable}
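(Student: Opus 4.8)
The plan is to reduce any augmented game with singleton-source live groups to an augmented game with \emph{ordinary} live edges on a game graph of linear size, and then invoke \cref{thm:live-edges}. The gadget is local. Fix a singleton-source live group $\livegroupSingle \in \livegroups$ with $\source(\livegroupSingle) = \{u\}$; since live groups restrict the environment, $u \in V_1$. I introduce a fresh \po vertex $u_\livegroupSingle$, delete every group edge $(u,v) \in \livegroupSingle$ from $E$, add the single edge $(u, u_\livegroupSingle)$ and, for each former target $v$ of a group edge, an edge $(u_\livegroupSingle, v)$; all non-group outgoing edges of $u$ are kept untouched. In the resulting graph $\gamegraph'$ I declare each fresh edge $(u,u_\livegroupSingle)$ to be a live edge, collecting all of them into $\livedges$. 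For parity objectives I assign each $u_\livegroupSingle$ priority $0$, which (as priorities are nonnegative) cannot change the maximal priority seen infinitely often along any play; for reachability objectives the fresh vertices are non-target and hence irrelevant. This yields an augmented game $\auggame' = (\gamegraph', \spec, \assumplive(\livedges))$ whose size is $\bigO(|V| + \sum_{\livegroupSingle \in \livegroups}|\livegroupSingle|)$, i.e.\ linear in that of $\auggame$.

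Next I establish a play correspondence witnessing that $\auggame$ and $\auggame'$ have the same winner. Every play of $\auggame$ lifts to a play of $\auggame'$ by replacing each traversal of a group edge $(u,v)$ with the two-step detour $u \to u_\livegroupSingle \to v$, and every play of $\auggame'$ projects back by contracting these detours. Because each $u_\livegroupSingle$ is a transient pass-through vertex reachable \emph{only} from $u$ and fully controlled by \po, the correspondence preserves the set of original vertices visited infinitely often and whether any target is reached; together with the priority-$0$ / non-target assignment this shows it preserves satisfaction of $\spec$. The crux is that it also matches the two assumptions. In the forward direction, if the live-edge assumption holds along a lifted play and $u$ is visited infinitely often, then $(u,u_\livegroupSingle)$ is taken infinitely often, so $u_\livegroupSingle$ is visited infinitely often; since $u_\livegroupSingle$ has finite out-degree, a pigeonhole argument gives some edge $(u_\livegroupSingle, v)$ — i.e.\ some group edge of $\livegroupSingle$ — taken infinitely often, which is exactly $\assumpgrlive(\{\livegroupSingle\})$ on the projected play. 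Conversely, if some group edge $(u,v)$ is taken infinitely often in the original play, then $(u,u_\livegroupSingle)$ is taken infinitely often in the lift, so the live-edge assumption is met. Applying this simultaneously over all groups shows that $\assumpgrlive(\livegroups)$ holds on a play iff $\assumplive(\livedges)$ holds on its lift.

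From the play correspondence I then transfer strategies. Because $u$ and every $u_\livegroupSingle$ lie in $V_1$, \pz's choices are untouched, while a \po choice of a group edge $(u,v)$ in $\auggame$ corresponds exactly to \po first moving $u \to u_\livegroupSingle$ and then $u_\livegroupSingle \to v$ in $\auggame'$; this translation maps memoryless strategies to memoryless strategies in both directions. Consequently the \pz winning region of $\auggame$ equals the restriction to $V$ of the \pz winning region of $\auggame'$, and a (memoryless) winning strategy for either player in $\auggame'$ yields one in $\auggame$ by projection. Since $\gamegraph'$ is of linear size, \cref{thm:live-edges} solves $\auggame'$ in $\PTIME$ (reachability objective) resp.\ $\QP$ (parity objective), and the reduction adds only linear overhead, giving the claimed bounds for $\auggame$.

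The main obstacle I anticipate is not the construction but the bookkeeping in the equivalence proof: carefully justifying that inserting the transient vertices $u_\livegroupSingle$ leaves the value of $\spec$ unchanged under the priority-$0$ / non-target assignment (in particular handling the $\LTLnext$-sensitivity of the general LTL reading of $\spec$ by working directly with the parity/reachability acceptance condition), and making the pigeonhole step airtight when several groups share a source or a target edge, so that the \emph{global} equivalence of $\assumpgrlive(\livegroups)$ and $\assumplive(\livedges)$ — rather than merely the per-group statement — holds along corresponding plays.
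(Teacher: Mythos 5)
Your proposal is correct and is essentially the paper's own proof: the paper uses exactly the same gadget (a fresh \po vertex $a_\livegroupSingle$ per group, priority $0$ for parity, the group edges replaced by normal edges out of $a_\livegroupSingle$ and a single live edge $(a,a_\livegroupSingle)$), followed by the same equivalence argument and an appeal to \cref{thm:live-edges}. Your write-up is merely more explicit about the play correspondence, the pigeonhole step, and strategy transfer than the paper's brief justification.
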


\begin{figure}[t]
    \centering
    \begin{tikzpicture}
        \node[player1] (a) at (0, 0) {$a$};
        \node (b1) at (0.9*\hpos,1*\ypos) {$b_1$};
        \node (b2) at (1*\hpos,0.35*\ypos) {$b_2$};
        \node (b3) at (1*\hpos,-0.35*\ypos) {$b_3$};

        \draw[->,thick,dashed] (a) -- node[above]{$e_1$} (b1);
        \draw[->,thick,dashed] (a) -- node[above,xshift=0.2cm,yshift=-0.03cm]{$e_2$} (b2);
        \draw[->,thick] (a) -- node[yshift=-0.1cm]{$e_3$} (b3);
    \end{tikzpicture}
    \hspace*{2cm}
    \begin{tikzpicture}
        \node[player1] (a) at (0, 0) {$a$};
        \node (b1) at (2*\hpos,0.6*\ypos) {$b_1$};
        \node[player1] (al) at (1*\hpos,0.35*\ypos) {$a_\livegroupSingle$};
        \node (b2) at (2*\hpos,0.*\ypos) {$b_2$};
        \node (b3) at (1*\hpos,-0.35*\ypos) {$b_3$};

        \draw[->,thick] (al) -- node[above]{$e_1$} (b1);
        \draw[->,thick] (al) -- node[above,xshift=0.2cm,yshift=-0.07cm]{$e_2$} (b2);
        \draw[->,thick,dashed] (a) -- node{$e$} (al);
        \draw[->,thick] (a) -- node[yshift=-0.1cm]{$e_3$} (b3);
    \end{tikzpicture}
    \caption{Part of a game augmented with live group $\livegroupSingle=\{e_1,e_2\}$ with $\source(e_1)=\source(e_2)=a$ (left), part of an equivalent game augmented with live edge $e$ (right).}\vspace{-0.5cm}\label{fig:boolean-live-groups}    
\end{figure}

Intuitively, a live group $\livegroupSingle$ with a singleton source models a disjunctive form of transition fairness, while the classical definition of live edge assumptions (as in \cref{def:live-edges}) amounts to a conjunctive form of transition fairness. The construction in \cref{fig:boolean-live-groups} shows that both are equally expressive. It is therefore not surprising that the complexity result from \cref{thm:singleton-live-groups} generalizes to combinations of conjunctions and disjunctions of live edges from the same source in CNF (Conjunctive Normal Form). This is formalized and proven in \cref{appendix:CNFassumptions}.

\subsection{Live Groups in Product Games}\label{sec:product games}

As motivated in the introduction as well as in \cref{example:live-groups}, the use of progress assumptions, especially in the context of CPS control, typically stems from a modelling step where such annotations help to capture progress ensured by the underlying physical system in an abstract manner. In this context, the high-level synthesis game is typically obtained by, first, constructing a game graph $G$ annotated with assumption $\assump$ (as an abstraction of the underlying dynamics), and second, taking the product of the annotated graph $(G,\assump)$ with a parity game $\game'$ obtained from the translation of the  specification $\spec$, which is typically given as an arbitrary LTL formula over the vertex set $V$ of $G$. This process was examplified in \cref{example:live-groups}, illustrating the need for live group assumptions. %

Further, even if the specification $\spec$ is, for example, a simple reachability objective, a product construction might still be needed prior to synthesis, if the controlled system consists of multiple interacting components, some of which annotated with \emph{live edge} assumptions. Then the product of all component models, and hence the final reachability game, is annotated with \emph{live group} assumption. %

While we have seen in the previous subsection that live group assumptions render the respective solution problem  $\NP$-complete (even for reachability games), one could hope that the progress assumptions stemming from the product construction described above, is a more tractable subclass of live group assumptions. 
This would be a reasonable anticipation as the liveness assumptions arising from the product construction can easily be represented as live group assumptions, and these assumptions appear to carry more structure, as they inherit the edges in each live group from a single live edge in one automaton/game during the product construction.
Unfortunately, in what is to come we show that this is not the case. 

Before we move on to \cref{thm:livegroups-product-equivalence} stating the main result of this section, we introduce some concepts needed to state the theorem. In particular, we introduce alternations and labels for games, which are the de-facto standard in synthesis games stemming from CPS design problems, to easily formalize the above discussed product construction in the usual way.

\smallskip
\noindent \textbf{Alternating Games.} 
    A game graph $G = (V, E)$ is called \emph{alternating} if for each $(u,v) \in E$ either $u \in V_0$ and $v \in V_1$ or vice versa. A game $(G, \spec)$ or an augmented game $(G, \spec, \assump)$ is called alternating if $G$ is alternating.%
\footnote{We note that all games can be converted to an equivalent alternating game by at most doubling the size of the vertex and edge sets of the game graph.}

\smallskip
\noindent \textbf{Labeled Games.} 
Let $G = (V, E)$ be a game graph, then the tuple $(G, \labeling: E \to \Sigma)$ is called \emph{the game graph $G$ labeled with $\labeling$}, where the function $\labeling$ assigns an action $a$ to each edge from the finite alphabet $\Sigma$. Similarly, the tuples $(\game = (G, \spec), \labeling)$ and $(\auggame = (G, \spec, \assump), \labeling)$ are called \emph{the games $\game$ and $\auggame$ labeled with $\labeling$}.
For brevity, we denote these tuples by $G_\labeling, \game_\labeling$ and $\auggame_\labeling$, respectively.
In what is to come, we will also look at labeled game graphs augmented with assumptions $(G = (V, E), \assump, \labeling)$, and denote them by $G_{\assump, \labeling}$. 
Intuitively, in a labeled game as given above, $\p{i}$ moves to a vertex $v$ from a vertex $u \in V_i$ with action $a$ (denoted by $u \xrightarrow{a} v$) if and only if there exists an edge $(u,v) \in E$ with $\labeling(u,v) = a$.

\smallskip
\noindent \textbf{Product Games.} 
Let $\game_{\labeling^1} = ((V^1, E^1) , \spec,\labeling^1: E^1 \to \Sigma) $ be an alternating labeled game with initial vertex $v^1_0 \in V^1_1$ and  $G_{\assumplive(E^\ell), \labeling^2}= ((V^2, E^2), \assumplive(E^\ell), \labeling^2 : E^2 \to \Sigma)$ be an alternating labeled augmented game graph with initial vertex $v^2_0 \in V^2_1$. The product $\game_\labeling \times G_{\assumplive(E^\ell), \labeling^2}$ is defined in the usual way to be the alternating augmented game $\product =( ( (V_\product, E_\product), \spec_\product ), \assumpgrlive(\livegroups_\product))$ s.t.\
\begin{compactitem}
    \item $V_\product$ is the cartesian product $V^1_0 \times V^2_0 \cup V^1_1 \times V^2_1$,%
    \item the initial vertex of $\product$ is $(v^1_0, v^2_0)$,
    \item a vertex $(v, v') \in V_\product$ belongs to $\p{0}$ if and only if $(v, v') \in V^1_0 \times V^2_0$, 
    \item an edge $(v, v') \to (w, w')$ is in $E_\product$ iff there exists an action $a \in \Sigma$ for which $v \xrightarrow{a} w \in E^1$ and $v' \xrightarrow{a} w' \in E^2$, %
    \item for each $(v', w') \in E^\ell \subseteq E^2$, there exists a live group $\livegroupSingle_{(v',w')} \in \livegroups_\product$ such that $H_{(v',w')} = \{(v, v')\to (w, w') \in E_\product \mid v, w \in V^1\}$, and
    \item a play $(v_1, v'_1) (v_2, v'_2)  \ldots $ satisfies the specification $\spec_\product $ if and only if the play $v_1 v_2 \ldots$ satisfies $\spec$.
\end{compactitem}

\noindent For a play $\rho = (v_1, v'_1)(v_2, v'_2) \ldots$, we denote the projection of $\rho$ to $V^1$ by $\proj^1(\rho) = v_1 v_2 \ldots$ and the projection to $V^2$ by $\proj^2(\rho) = v'_1 v'_2 \ldots$.

\bigskip
\noindent With this, we are now ready to use the formalized product construction and state the main result of this subsection. 

\begin{restatable}{theorem}{restateproducteq}\label{thm:livegroups-product-equivalence}
 Any alternating augmented game $\auggame = ((V, E), \spec, \assumpgrlive(\livegroups))$ where $\livegroups = \{\livegroupSingle_i\mid i \in [1;m]\}$ with initial vertex $v_\init \in V_1$ can be obtained
 as a product of a labeled non-augmented game $\game_{\labeling^1}$ of same size, and a labeled game graph $G_{\assumplive(E^\ell), \labeling^2}$ of size $m+1$ augmented with live edges $E^\ell$. 
\end{restatable}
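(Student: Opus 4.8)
The plan is to reverse-engineer the product construction of the preceding paragraph. I would take the first factor $\game_{\labeling^1}$ to be the underlying \emph{non-augmented} game of $\auggame$ — the same graph $(V,E)$, the same ownership $V_0,V_1$, and the same specification $\spec$ — so that the ``same size'' requirement holds by definition, and I would build the second factor $G_{\assumplive(\livedges),\labeling^2}$ as a small \emph{monitor} with $m+1$ vertices and exactly $m$ live edges $\livedges=\{\ell_1,\dots,\ell_m\}$, one per live group, chosen so that the product group $H_{\ell_i}$ induced by $\ell_i$ recovers $\livegroupSingle_i$. The two labelings $\labeling^1,\labeling^2$ are the only free parameters: they must synchronize each edge of $\auggame$ with a unique monitor transition, and that monitor transition must be the live edge $\ell_i$ exactly for the edges of $\livegroupSingle_i$ and a non-live transition for every remaining edge.

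Concretely, I would first fix the monitor. Since the products here are alternating and $\livedges\subseteq E^2_1$, the monitor is an alternating deterministic automaton with $v^2_0\in V^2_1$ in which every live edge leaves a $\po$-vertex; I would designate $m$ of its vertices to the groups (carrying the live edges $\ell_i$) and reserve the remaining ``$+1$'' vertex, together with additional non-live transitions, to route precisely those moves of $\auggame$ that lie in no group. The labeling $\labeling^1$ then tags each edge $e$ with an action $a_i$ when $e\in\livegroupSingle_i$ and with a neutral action otherwise, while $\labeling^2$ is set so that reading $a_i$ forces the monitor to take $\ell_i$ and reading a neutral action forces a non-live transition. I would define a phase map $\phi\colon V\to V^2$ sending every source of $\livegroupSingle_i$ to the source of $\ell_i$ and every target to the target of $\ell_i$, show that $v\mapsto(v,\phi(v))$ is a player-preserving embedding onto the reachable part of the product, and then check the six defining clauses of the product one by one: vertices, initial vertex $(v_\init,v^2_0)$, ownership, edges (each $e\in E$ lifts through its action to exactly one product edge, and determinism of $\labeling^2$ rules out spurious ones), the live-group clause $H_{\ell_i}=\livegroupSingle_i$, and the fact that $\spec_\product$ depends only on $\proj^1$, whence $\assumpgrlive(\livegroups_\product)$ coincides with $\assumpgrlive(\livegroups)$ and the winning region of the product agrees with that of $\auggame$.

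The main obstacle is making the live-group correspondence \emph{exact} under the tight budget of $m+1$ vertices. Collapsing every edge of a group onto a single live edge forces all of its sources (resp.\ targets) to share one monitor vertex; this is consistent only because $\auggame$ is alternating and the groups sit on $\po$-edges, but it is delicate to guarantee \emph{simultaneously} that every edge outside all groups is mapped to a genuinely non-live monitor transition, so that it is not silently absorbed into some $\livegroupSingle_i$. I expect to spend most of the argument designing the monitor's neutral vertex, its non-live edges, and the two labelings so that this separation holds for all edges at once, and proving that the induced $\phi$ is well defined (independent of the history by which a vertex is reached). Establishing that $m+1$ vertices suffice — exploiting the bijection between groups and live edges and the full freedom in $\labeling^1$, rather than spending one auxiliary vertex per group — is the quantitatively sharp step and the heart of the proof.
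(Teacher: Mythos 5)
Your construction is essentially the paper's: the first factor is the underlying non-augmented game $((V,E),\spec)$ with every edge of $\livegroupSingle_i$ labelled by a group-specific letter $h_i$ and all remaining edges by a neutral letter, and the second factor is a star-shaped monitor with one live edge per group plus neutral routing (the paper takes a single $\po$ hub $u_0$, live edges $(u_0,x_i)$ to fresh $\pz$ vertices $x_1,\dots,x_m$, and one extra neutral $\pz$ vertex $u_*$; note this monitor actually has $m+2$ vertices, so even the paper's own construction does not meet the literal bound you treat as the ``quantitatively sharp'' heart of the proof --- the exact count is not where the difficulty lies). The genuine gap is in how you prove equivalence. Your phase map $\phi\colon V\to V^2$ is not well defined, and cannot be made so: the monitor component of a reachable product vertex is determined by the \emph{label of the edge just traversed}, not by the current vertex of $\auggame$. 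If a $\pz$ vertex $w$ is the target of an edge in $\livegroupSingle_1$ and also of an edge in $\livegroupSingle_2$ (or of an edge lying in no group), the reachable part of $\product$ contains the distinct vertices $(w,x_1)$ and $(w,x_2)$ (or $(w,u_*)$), all projecting to $w$. Hence the reachable product is in general strictly larger than $\auggame$, and there is no player-preserving embedding $v\mapsto(v,\phi(v))$ of $\auggame$ onto it. The step you flag as delicate but provable --- history-independence of $\phi$ --- is simply false, so the embedding route collapses.

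What does work, and what the paper does instead, is an equivalence at the level of \emph{plays} rather than vertices. Because the monitor is deterministic on labels (from $u_0$ the letter $h_i$ forces the live edge to $x_i$, the neutral letter forces the move to $u_*$, and every $\pz$ monitor vertex returns deterministically to $u_0$), every play of $\auggame$ lifts to a unique play of $\product$, and every play of $\product$ projects via $\proj^1$ to a play of $\auggame$, with the same player to move at each step (both games are alternating with $\po$ initial vertices). One then checks that a play $\rho$ of $\product$ violates the induced live group $\livegroupSingle_{(u_0,x_i)}$ if and only if $\proj^1(\rho)$ violates $\livegroupSingle_i$, and that $\spec_\product$ holds on $\rho$ iff $\spec$ holds on $\proj^1(\rho)$; this bijection of plays (hence of strategies) preserving the winner is what yields the theorem, multiple copies of $w$ notwithstanding, since all copies $(w,\cdot)$ have identical outgoing behaviour. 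Your two factors and labellings are right; the proof needs the play-correspondence argument substituted for the embedding argument.
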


\noindent\textit{Proof sketch [Full proof in \cref{appendix:thm:livegroups-product-equivalence}].}
We construct a non-augmented game and the labeled augmented game graph from a given $\auggame = ((V, E), \spec, \assumpgrlive(\livegroups))$ and show that their product is equivalent to the original game. Towards this goal we define $\Sigma:=\{a\}\cup\{h_i|i\in[0;m]\}$ and 
$\game_{\labeling^1} := (((V, E), \spec), \labeling^1: E \to \Sigma)$ s.t.\ $v^1_0 := v_\init$, %
 \begin{equation*}
    \labeling^1:E \to \Sigma \text{ where }\labeling^1(u,v) =\begin{cases} h_i, \text{ if } (u,v) \in H_i \text{ for some } i \in [1;m],\\
     a, \text{ otherwise,} \end{cases}
 \end{equation*}
  where $a$ is a fixed letter different from each $h_i$. Further, define \\ $G_{\assumplive(E^\ell), \labeling^2} := ((V^2, E^2), \assumplive(E^\ell), \labeling^2 :E^2 \to \Sigma)$ s.t.\ $v^2_0 = u_0 \in V^2_1$ and 
 \begin{itemize}
    \item $V^2 = \{u_0\} \cup \{u_*\} \cup \{x_1, \ldots, x_m\}$, where only $u_0$ is a $\p{1}$ vertex,
    \item $E^2 = \{(u_0, x_i), (x_i, u_0) \mid i \in [1;m]\} \cup \{(u_0, u_*), (u_*, u_0)\}$
    \\ where $E^\ell = \{(u_0, x_i) \mid i \in [1;m]\} $,
    \item $\labeling^2(u_0, x_i) = h_i$ and all other edges in $E^2$ are labeled with the letter $a$
 \end{itemize}
(see \cref{fig:productgamegraph} for an illustration). 
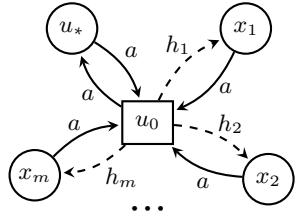
\begin{figure}[t]
    \centering
    \begin{tikzpicture}[player1/.style={draw, thick, rectangle, text width=13pt, text height=9pt, align=center}
        ]
        \node[player1] (v0) {\phantom{$u_0$}};
        
        \node[player0, above left=1cm of v0, xshift=3mm] (v*) {\phantom{$x_1$}};
        \node[player0, above right=1cm of v0] (x1) {\phantom{$x_1$}};
        \node[player0, below right=1cm of v0, yshift=5mm, xshift=3mm] (x2) {\phantom{$x_1$}};
        \node[player0, left=1cm of v0, yshift=-7mm, xshift=2mm] (xm) {\phantom{$x_1$}};
        \node[below=0.7cm of v0] (dots) {\LARGE{$\cdots$}};

          \node[] at (v*) {$u_*$};
          \node[] at (v0)  {$u_0$};
          \node[] at (x1)  {$x_1$};
          \node[] at (x2)  {$x_2$};
          \node[] at (xm)  {$x_m$};

        \path[->, dashed] (v0) edge[bend left=20] node[above, xshift=-0.8mm] {$h_1$} (x1);
        \path[->] (x1) edge[bend left=20] node[right] {$a$} (v0);
        \path[->] (v0) edge[bend left=20] node[below, yshift=0.5mm, xshift=-1mm] {$a$} (v*);
        \path[->] (v*) edge[bend left=20] node[right, yshift=2mm, xshift=-1mm] {$a$} (v0);
        \path[->, dashed] (v0) edge[bend left=20] node[above, xshift=2mm, yshift=-1mm] {$h_2$} (x2);
        \path[->] (x2) edge[bend left=20] node[below] {$a$} (v0);
        \path[->, dashed] (v0) edge[bend left=20] node[right, yshift=-1.6mm] {$h_m$} (xm);
        \path[->] (xm) edge[bend left=20] node[left, xshift=0.8mm, yshift=1.2mm] {$a$} (v0);
    \end{tikzpicture}\vspace{-0.3cm}
    \caption{$G_{\assumplive(E^\ell), \labeling^2}$ : Labeled alternating game graph $(V^2,E^2)$ augmented with live edges (shown by dashed lines) labeled $h_1, \ldots, h_m$.}\vspace{-0.3cm}\label{fig:productgamegraph}
\end{figure}
In order to prove the equivalence of $\auggame$ and $\product$, it is enough to prove the equivalence of plays, i.e. a play
$\rho = (v_0, v'_0) (v_1, v'_1)\ldots$ is winning in $\product$ iff $\proj^1(\rho)$, as a play on the identical game graph $(V,E)$ of $\game_{\labeling^1}$ and $\auggame$, is winning in $\auggame$. 
Let $\rho$ be winning. Then $\proj^1(\rho)$ satisfies $\spec$ or $\rho$ violates a live group assumption 
$H_{(u_0, x_i)}$. This implies that there exists a $w \to v \in H_i$ with $(w, u_0) \to (v, x_i) \in H_{(u_0, x_i)}$ that is enabled infinitely often in $\proj^1(\rho)$, but no edges in $H_i$ are taken infinitely often. Therefore, $\proj^1(\rho)$ violates the group liveness assumptions in $\auggame$, and thus, it is winning in $\auggame$.

Now let $\rho$ be winning for $\po$. Then $\proj^1(\rho)$ violates $\spec$ and $\rho$ satisfies all live group assumptions $H_{(u_0, x_i)}$. This translates to $\proj^1(\rho)$ similar to before: for each $i \in [1;m]$, if an edge $w \to v \in H_i$ is visited infinitely often, then an edge in $H_i$ is taken infinitely often. Thus, $\proj^1(\rho)$ is winning for $\po$ in $\auggame$.\qed

\smallskip
With \cref{thm:NP-completeness-augmentedreachparity} and \cref{thm:livegroups-product-equivalence} in place, it follows that games resulting from the product of multiple graphs -- some of which annotated with live edges -- renders the resulting augmented synthesis problem $\NP$-complete.

\subsection{A Remark on Co-Live Groups}
As we considered live groups in the last sections, it would be natural to also consider co-live groups, i.e., groups of edges where the assumption is to ensure that at least one of them is co-live.
However, unlike in the case of live edge assumptions, taking the product of a game with co-live edges and another non-augmented game graph does actually result in a game augmented with co-live \emph{edges} not co-live \emph{groups}.
This is due to the fact that \emph{every} edge in the product game which originates from a co-live edge in the original game, is only allowed to be taken finitely often.
Furthermore, given a live group $\livegroupSingle$, its dual, $\neg\LTLalways\LTLeventually\livegroupSingle$ can actually be expressed by a set of co-live edges, since
\[\neg\LTLalways\LTLeventually\livegroupSingle
= \neg\LTLalways\LTLeventually \bigvee_{e\in \livegroupSingle} e
= \neg \bigvee_{e\in \livegroupSingle} \LTLalways\LTLeventually e
= \bigwedge_{e\in \livegroupSingle} \neg\LTLalways\LTLeventually e.\]
Hence, co-live groups do not seem relevant in the considered context.

\section{Persistent Live Group Assumptions}\label{section:progress-group}
Given the practical relevance of product constructions in CPS synthesis problems, as outlined in \cref{sec:product games}, the negative result of \cref{thm:NP-completeness-augmentedreachparity} and \cref{thm:livegroups-product-equivalence} is rather discouraging, as the more tractable subclass in \cref{section:CNF-edges} seems rather restrictive.

In contrast, this section discusses a different version of live groups, called \emph{persistent live groups}, which originally appeared in the work of Ozay et al.~\cite{progress_groups,NilssonOL17} in the context of abstraction-based control design (ABCD) and which (i) have very nice computational properties while (ii) being closed under product constructions. Before formally defining this assumption class and formalizing its properties, we give an example\footnote{See \cite{context-triggeredABCD} for a more in-depth version of this example.} from ABCD to motivate their relevance.

\begin{figure}[t]
    \scriptsize
    \centering
    \begin{tikzpicture}
        \node[player1] (i) at (0.3*\hpos, 0) {init};
        \node[bplayer0] (xa1) at (\hpos, 0.5*\ypos) {$\neg T,A_1$};
        \node[bplayer0] (xa2) at (\hpos, -0.5*\ypos) {$\neg T, A_2$};
        \node[player1] (c1) at (2*\hpos, 0.5*\ypos) {$C_1$};
        \node[player1] (c2) at (2*\hpos, -0.5*\ypos) {$C_2$};
        \node[bplayer0] (t1a1) at (3*\hpos, 0.75*\ypos) {$T_1,A_1$};
        \node[bplayer0] (t2a1) at (3*\hpos, 0.25*\ypos) {$T_2,A_2$};
        \node[bplayer0] (t1a2) at (3*\hpos, -0.25*\ypos) {$T_1,A_2$};
        \node[bplayer0] (t2a2) at (3*\hpos, -0.75*\ypos) {$T_2,A_2$};

        \draw[<-,thick] (i.west) --+ (-0.3,0);
        \path[->] (i) edge (xa1) edge (xa2);
        \path[->] (xa1) edge (c1);
        \path[->] (xa2) edge (c2);
        \path[->] (c1) edge (t1a1) edge[bend left=-10] (t1a2);
        \path[->] (c2) edge[bend left=10] (t2a1) edge (t2a2);
        \path[->] (t1a1) edge[bend left=-20] (c1);
        \path[->] (t2a1) edge (c1);
        \path[->] (t1a2) edge (c2);
        \path[->] (t2a2) edge[bend left=20] (c2);

        \node[player1] (i) at (3.8*\hpos, 0) {init};
        \node[bplayer0] (xa1) at (4.5*\hpos, 0.5*\ypos) {$\neg T,A_1$};
        \node[bplayer0] (xa2) at (4.5*\hpos, -0.5*\ypos) {$\neg T, A_2$};
        \node[player1] (c1) at (5.5*\hpos, 0.5*\ypos) {$C_1$};
        \node[player1] (c2) at (5.5*\hpos, -0.5*\ypos) {$C_2$};
        \node[bplayer0] (t1a1) at (6.5*\hpos, 0.75*\ypos) {$T_1,A_1$};
        \node[bplayer0] (t2a1) at (6.5*\hpos, 0.25*\ypos) {$T_2,A_2$};
        \node[bplayer0] (t1a2) at (6.5*\hpos, -0.25*\ypos) {$T_1,A_2$};
        \node[bplayer0] (t2a2) at (6.5*\hpos, -0.75*\ypos) {$T_2,A_2$};

        \draw[<-,thick] (i.west) --+ (-0.3,0);
        \path[->] (i) edge (xa1) edge (xa2);
        \path[->] (xa1) edge (c1);
        \path[->] (xa2) edge (c2);
        \path[->] (c1) edge (t1a1) edge[bend left=-10] (t1a2);
        \path[->] (c2) edge[bend left=10] (t2a1) edge (t2a2);
        \path[->] (t1a1) edge[bend left=-20] (c1);
        \path[->] (t2a1) edge (c1);
        \path[->] (t1a2) edge (c2);
        \path[->] (t2a2) edge[bend left=20] (c2);

        \path[->] (c1) edge[bend left=-20] (xa1) edge (xa2);
        \path[->] (c2) edge[bend left=20] (xa2) edge (xa1);
    \end{tikzpicture}
    \caption{Two abstract game graphs to model the robot motion control problem from \cref{example:progress-groups}: without persistent live groups (left) and with persistent live groups $(\perssource_i,\persedges_i,\perstarget_i) = \left(V,E\cap (V\times \{C_i\}),\{(T_i,\cdot)\}\right)$ for each $i\in[1;2]$ (right).}\vspace{-0.5cm}\label{fig:progress-groups}
\end{figure}
\begin{example}\label{example:progress-groups}
    Consider a robot that should be controlled to react to changes in its desired goal location. For simplicity, suppose there are only two (non-intersecting) target locations $T_1$ and $T_2$ and there exist (already designed) low-level controllers $C_1$ and $C_2$ which ensure that, whenever a target is persistently activated, the robot will reach the respective target region. 
    The high-level synthesis problem is to design a strategy to trigger the low level controllers $C_i$ in response to the currently activated target (signaled by proposition $A_i$ set to 'true' by the environment), such that the target region $T_i$ is actually reached, if the respective target is persistently activated. 
While the correct strategy is obvious in this case -- choose $C_i$ iff $A_i$ is true -- constructing a correct abstract game graph that returns this strategy and allows to conclude that $T_i$ is actually reached under this strategy, is surprisingly cumbersome. 

Two possible abstract game graphs are depicted in \cref{fig:progress-groups}. Here, $\po$ chooses the activated target (i.e., which (single) $A_i$ is 'true') and in which region of the state space the robot currently is (either in one of the target regions $T_i$ or in the non-target region $\neg T$). Further, $\pz$ chooses which controller $C_i$ to activate. In favor of readability, we only depict the $\pz$ edges corresponding to the strategy of choosing $C_i$ iff $A_i$ is activated. 

In \cref{fig:progress-groups} (left) the controller only allows the environment to activate a new target after it reached a target location, while \cref{fig:progress-groups} (right) models the more realistic scenario that the environment might activate a different target at any point, which then allows $\pz$ to activate the other controller, even if the robot is still in $\neg T$.
    However, with the additional edges in \cref{fig:progress-groups} (right), $\po$ can force the play to loop between state $(\neg T, A_i)$ and $C_i$, thereby preventing the play from ever reaching $T_i$ even if $A_i$ is persistently active. In the physical system, however, we know that the robot will reach $T_i$ if $C_i$ is persistently used.     
    Intuitively, this assumption can be expressed by a \emph{persistent live group} which models that in the source region $\perssource_i = V$ (complete vertex set in this case), if $\pz$ persistently chooses edges from $\persedges_i = E\cap (V\times \{C_i\})$ (representing the choice of controller $C_i$), then the play will eventually reach the target region $\perstarget_i = \{(T_i,\cdot)\}$. This augmentation allows to synthesize the correct strategy from the abstract augmented game.
\end{example}

\begin{definition}\label{def:perslivegroups}
Given a game graph $\gamegraph = (V,E)$, a \emph{persistent live group} is a tuple $(\perssource,\persedges,\perstarget)$ consisting of sets $\perssource,\perstarget\subseteq V$ and $\persedges\subseteq E_0$ such that $\perstarget\subseteq \perssource$.
The assumption represented by such a persistent live group is expressed by the LTL formula
\begin{equation}\label{eq:assumpPers}
    \assumpPers(\perssource,\persedges, \perstarget) \coloneqq \LTLalways \big(\LTLalways (\perssource\wedge \assumpC(\persedges)) \Rightarrow \LTLeventually \perstarget\big),  
\end{equation}
where $\assumpC(\persedges) \coloneqq \bigwedge_{(u,v)\in\persedges} u \Rightarrow \LTLnext v$.
Furthermore, the assumptions represented by a set $\perslivegroups$ of persistent live groups is denoted by $\assumpPers(\perslivegroups) \coloneqq \bigwedge_{(\perssource,\persedges,\perstarget)\in\perslivegroups} \assumpPers(\perssource,\persedges, \perstarget)$.
Moreover, we write games augmented with set $\perslivegroups$ of persistent live groups to refer to the augmented games $\auggame = (\gamegraph,\spec,\assumpPers(\perslivegroups))$.
\end{definition}
Intuitively, $\assumpC(\persedges)$ ensures that edges in $\persedges$ are chosen when possible, as this is only possible for $\p{0}$ vertices in $\perssource$.
Furthermore, \eqref{eq:assumpPers} ensures that 
if $\p{0}$ satisfies the safety constraints as in left side of the implication, i.e., persistently choosing the edges in $\persedges$ from the source vertices $\perssource$, will eventually make progress and reach a vertex in $\perstarget$.
Moreover, once it visits $\perstarget$, $\p{0}$ can choose to not satisfy the safety constraint anymore unless she wants to visit $\perstarget$ again
(see \cite{context-triggeredABCD} for more details).

In particular, such persistent live groups are helpful to $\pz$ only if she satisfies the safety constraints described by it.
Intuitively, this safety part of the live group makes games augmented with such assumptions easier to solve -- $\p{0}$ needs to stick to a particular live group by persistently choosing the corresponding edges until it reaches the corresponding targets. In (normal) live groups, $\p{0}$ can visit the sources of multiple live groups infinitely often, and hence potentially activated multiple ones at the same time.
Furthermore, unlike live groups, $\p{0}$ can utilize persistent live groups in series -- one after another -- as they are safety-type constraint that are not conditioned on visiting vertices infinitely often.
Using this observation, an algorithm to solve a restricted version of augmented games with persistent live groups was already provided in \cite{NilssonOL17}. 
In a recent work, Nayak et al.~\cite{context-triggeredABCD} provide another algorithm for the general case. 
Their algorithm works in polynomial time for reachability objectives but is exponential for parity objectives.
In this section, we will show that using their algorithm for reachability objectives, one can also obtain a quasi-polynomial algorithm for parity objectives.

Before going into the details of the algorithm, let us show a quick remark about persistent live groups in product games. By considering the same scenario as in \cref{sec:product games} but starting with a game graph $\gamegraph$ annotated with a \emph{persistent live group} (instead of live edges as in \cref{sec:product games}), the product of this annotated graph with any other graphs or games (also only potentially augmented with persistent live groups), still results in a product game augmented by \emph{persistent live groups}. This is due to the fact, that the definition of this assumption class already accounts for groups of vertices, which is retained under product constructions.
\begin{remark}
    Persistent live groups assumptions are closed under product (in the sense of  \cref{sec:product games}).
\end{remark}

\subsection{Augmented Reachability Games}
We first consider the augmented reachability game $\auggame=(\gamegraph, \LTLeventually T,\assumpPers(\perslivegroups))$ with persistent live group assumptions. Following \cite{context-triggeredABCD}, the recursive algorithm that solves such an augmented reachability game $\auggame$ is given in \cref{alg:reachPers}. The main idea of the algorithm is to first compute the set of vertices $A$ from which $\p{0}$ can reach $T$ even without the help of any persistent live group assumptions (\cref{alg:reachPers:Attro}) along with the corresponding strategy $\sigma$ for $\p{0}$ (\cref{alg:reachPers:stratA}). Afterwards, the algorithm computes the set of states $B$ from which $\p{0}$ has a strategy (i.e.\ $\sigma_B$) to reach $A$ with the help of a persistent live group (lines~\ref{alg:reachPers:if}-\ref{alg:reachPers:computeB}). If this set $B$ enlarges the winning state set $A$ (\cref{alg:reachPers:B}), we use recursion to solve another such augmented reachability game with target $T:=A\cup B$ (\cref{alg:reachPers:recursion}).

\begin{algorithm}[t]
    \caption{$ \reachPers(\gamegraph, T, \perslivegroups) $}\label{alg:reachPers}
    \begin{algorithmic}[1]
        \Require An augmented reachability game $\auggame = (\gamegraph,\LTLeventually T,\assumpPers(\perslivegroups))$
        \Ensure Winning region and memoryless winning strategy in the augmented game $\auggame$  
        \State Initialize a random $\p{0}$ strategy $\strat$
        \State $A,\strat_A \gets \Attrz{\gamegraph}{T}$\label{alg:reachPers:Attro}
        \State $\strat(v) \gets \strat_A(v)$ for every $v\in A\setminus T$\label{alg:reachPers:stratA}
        \For{$(\perssource,\persedges,\perstarget) \in \perslivegroups$} 
            \If{$(\perssource\setminus A) \cap \pre(A) \neq \emptyset $} \label{alg:reachPers:if}
                \State $B,\strat_B \gets \solve(\gamegraph|_{\persedges}, \spec_B)$ with $\spec_B = \LTLeventually A \vee \LTLalways (\perssource\setminus\perstarget)$\label{alg:reachPers:computeB}
                \If{$B \not\subseteq A$}\label{alg:reachPers:B}
                    \State $\strat(v) \gets \strat_B(v)$ for every $v\in B\setminus A$\label{alg:reachPers:stratB}
                    \State $C,\strat_C\gets \reachPers(\gamegraph, A\cup B, \perslivegroups)$ 
                    \State $\strat(v) \gets \strat_C(v)$ for every $v\in C\setminus (A\cup B)$
                    \State \Return $(C,\strat)$\label{alg:reachPers:recursion}
                \EndIf
            \EndIf
        \EndFor
        \State \Return $A,\strat$ \label{alg:reachPers:end}
    \end{algorithmic}
\end{algorithm}

Within \cref{alg:reachPers}, we use the following notation. Given a game graph $\gamegraph = ( V, E)$ and a persistent live group $(\perssource,\persedges,\perstarget)$, we write $\gamegraph|_{\persedges}$ to denote the restricted game graph $( V, E')$ such that $ E'\subseteq E$ and for every edge $e = ( u, v)\in  E'$, either $e\in\persedges$ or there is no edge in $\persedges$ starting from $ u$.
Furthermore, we use the function $\solve(\gamegraph,\spec)$ to solve the game $(\gamegraph,\spec)$ with $\spec = \LTLeventually A \vee \LTLalways S$ for some $A,S\subseteq V$, which can be done by reducing it to safety game (see \cref{rem:safety} in \cref{appendix:progress-group}).
With this, we can state the result for augmented reachability game given in~\cite{context-triggeredABCD}.
 
\begin{restatable}{proposition}{restateprogressgroup}{\cite[Theorem 2]{context-triggeredABCD}}\label{thm:persGame}
Given an augmented game  $\auggame = (\gamegraph,\spec,\assumpPers(\perslivegroups))$ with game graph $\gamegraph= (V,E)$, specification $\spec = \LTLeventually T$, and persistent live groups $\perslivegroups$, the algorithm $\reachPers(\gamegraph, T, \perslivegroups)$ returns the winning region and a memoryless winning strategy in game $\auggame$.
The algorithm terminates in $\mathcal{O}(\abs{\perslivegroups}\cdot\abs{ V}\cdot\abs{ E})$ time.
\end{restatable}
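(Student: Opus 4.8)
The plan is to prove the statement by induction on the recursion depth of $\reachPers$, establishing in the induction step both \emph{soundness} (every returned vertex is winning for $\pz$, and $\strat$ witnesses this) and \emph{completeness} (no vertex outside the returned set is winning for $\pz$); the runtime bound I would handle separately by a direct accounting.

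For soundness, the base case is the set $A = \attrz{\gamegraph}{T}$ returned at \cref{alg:reachPers:end}: from $A$ the attractor strategy $\strat_A$ forces a visit to $T$ in finitely many steps regardless of the assumption, so these vertices are won. In the recursive case I would argue that the vertices in $B\setminus A$ are won as well. Once $\pz$ commits to the live group $(\perssource,\persedges,\perstarget)$ by following $\strat_B$ in $\gamegraph|_{\persedges}$, every consistent play satisfies $\spec_B = \LTLeventually A \vee \LTLalways(\perssource\setminus\perstarget)$. In the first case the play reaches $A$ and hence $T$. In the second case the play stays in $\perssource$, never reaches $\perstarget$, and — since $\strat_B$ is forced onto $\persedges$ at all $\persedges$-sources, which are $\pz$ vertices — satisfies $\LTLalways(\perssource\wedge\assumpC(\persedges))$ from that point on; this violates $\assumpPers(\perssource,\persedges,\perstarget)$, so $\pz$ wins vacuously. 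Composing, I would show the assembled positional strategy $\strat$ (namely $\strat_A$ on $A\setminus T$, $\strat_B$ on $B\setminus A$, and the recursively computed $\strat_C$ on $C\setminus(A\cup B)$) is winning: along any $\strat$-play satisfying all assumptions, the inductive hypothesis for the call with target $A\cup B$ drives the play into $A\cup B$, whence the $\strat_B$ and $\strat_A$ phases drive it into $A$ and then $T$. The key closure fact to verify here is that $\strat_B$ keeps the play inside $B$ until $A$ is reached, so the phases cannot cycle back.

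The hard part will be completeness, i.e.\ showing that when the loop finishes without recursing and $A$ is returned, every vertex of $V\setminus A$ is winning for $\po$. Here I would build a $\po$ strategy that (i) traps every play in $V\setminus A$ — possible since the complement of a $\pz$ attractor is a $\po$-trap — while (ii) ensuring that \emph{every} persistent live group assumption holds, so $\pz$ cannot win vacuously and $T$ is never reached. Property (ii) is exactly where the two loop-exit conditions enter: for a group with $B\subseteq A$, every vertex of $V\setminus A$ lies outside the winning region $B$ of $\spec_B$ in $\gamegraph|_{\persedges}$, so from there $\po$ can force $\neg\spec_B$, i.e.\ force the play to reach $\perstarget$ or leave $\perssource$ while avoiding $A$, which either satisfies the consequent or falsifies the antecedent of $\assumpPers(\perssource,\persedges,\perstarget)$; for a group with $(\perssource\setminus A)\cap\pre(A)=\emptyset$, the part of the source region usable inside $V\setminus A$ is severed from $A$, and I would argue that $\po$ can likewise prevent a vacuous win without being driven into $A$. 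Interleaving these per-group strategies with the trapping strategy, and arguing that the interleaving stays consistent while satisfying all groups \emph{simultaneously}, is the main technical obstacle and the step I expect to require the most care.

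Finally, for termination and the runtime I would observe that a recursive call is triggered only when $B\not\subseteq A$, so the new target $A\cup B$, and hence the recomputed attractor, strictly enlarges $A$; thus there are at most $\abs{V}$ nested calls. Each call performs one attractor computation and at most $\abs{\perslivegroups}$ loop iterations, each costing an attractor or safety solve in $\mathcal{O}(\abs{E})$, giving $\mathcal{O}(\abs{\perslivegroups}\cdot\abs{E})$ per call and $\mathcal{O}(\abs{\perslivegroups}\cdot\abs{V}\cdot\abs{E})$ overall. Memorylessness of the returned strategy is immediate, since $\strat_A$, $\strat_B$ and (inductively) $\strat_C$ are positional and are glued on the pairwise-disjoint sets $A\setminus T$, $B\setminus A$ and $C\setminus(A\cup B)$.
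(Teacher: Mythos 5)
Your proposal follows the paper's proof of \cref{thm:persGame} almost step for step: the paper also inducts on the number of recursive calls, proves soundness of $A$ via the attractor, proves $B\setminus A\subseteq \win$ by exactly your argument (a $\strat_B$-play in $\gamegraph|_{\persedges}$ that satisfies the disjunct $\LTLalways(\perssource\setminus\perstarget)$ of $\spec_B$ also satisfies $\LTLalways(\perssource\wedge\assumpC(\persedges))$, hence falsifies $\assumpPers(\perssource,\persedges,\perstarget)$ and wins vacuously), glues $\strat_A,\strat_B,\strat_C$ on the same pairwise-disjoint sets, and obtains termination and the $\mathcal{O}(\abs{\perslivegroups}\cdot\abs{V}\cdot\abs{E})$ bound by the same counting. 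Where you and the paper part ways is completeness, the step you flag as your main obstacle: the paper never builds the single interleaved $\po$ strategy you call for. It argues one group at a time -- for a group failing the test in \cref{alg:reachPers:B} (i.e.\ $B\subseteq A$), determinacy of $(\gamegraph|_{\persedges},\spec_B)$ gives $\po$, from every vertex of $V\setminus A\subseteq V\setminus B$, a strategy forcing $\LTLalways\neg A\wedge\LTLeventually\left(\perstarget\cup(V\setminus\perssource)\right)$, which satisfies that group's assumption while avoiding $A$; for a group failing the test in \cref{alg:reachPers:if} it merely remarks that the group ``does not help in reaching $A$''; and it then concludes ``in any case'' that $\pz$ cannot win. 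So the simultaneity problem you identify is real but is passed over in the paper. For groups of the first kind it can be repaired by a round-robin over the per-group counter-strategies: $V\setminus A$ is a trap from which $\pz$ cannot escape, each counter-strategy can be relaunched from any vertex of $V\setminus A$, and each launch resolves its group in finite time (it reaches $\perstarget\cup(V\setminus\perssource)$, or $\pz$ deviates from $\persedges$, resetting that group's activation), so every persistently activated group is served infinitely often.

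For groups of the second kind, however, your plan to ``argue that $\po$ can likewise prevent a vacuous win'' cannot be carried out from the guard alone, and this is a genuine hole -- one the paper's own write-up shares. The condition $(\perssource\setminus A)\cap\pre(A)=\emptyset$ only rules out using the group to \emph{reach} $A$; it does not rule out $\pz$ using the group to win \emph{vacuously}, by producing a play that stays in $\perssource\setminus\perstarget$ and follows $\persedges$ forever, falsifying $\assumpPers$. Concretely, take $V=V_0=\{v,t\}$ with edges $(v,v)$ and $(t,t)$, target $T=\{t\}$, and the single persistent live group $(\perssource,\persedges,\perstarget)=(\{v\},\{(v,v)\},\emptyset)$: the unique play from $v$ violates the assumption, so $v\in\winz$, yet $\reachPers$ skips the group (the guard fails because $\pre(A)=\{t\}$) and returns $A=\attrz{\gamegraph}{T}=\{t\}$. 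Hence completeness is not provable for the algorithm exactly as stated; one must either strengthen or drop the guard in \cref{alg:reachPers:if} so that the vacuous-win disjunct $\LTLalways(\perssource\setminus\perstarget)$ is also checked for these groups (as $\spec_B$ already does for the others), or impose a well-formedness hypothesis on the persistent live groups, presumably inherited from the original source. If you complete your proof, make this assumption (or the repaired guard) explicit rather than promising the argument.
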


We re-state the proof of \cref{thm:persGame} in our notation in \cref{appendix:progress-group}.
Using this result, \cref{problem:main} for such augmented reachability games can be answered as follows.
\begin{corollary}\label{corollary:progress-reach-ptime}
    Reachability games augmented with persistent live group assumptions can be solved in $\PTIME$. 
\end{corollary}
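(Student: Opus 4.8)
The plan is to obtain the corollary as an immediate consequence of \cref{thm:persGame}. That proposition already supplies a concrete algorithm, namely $\reachPers(\gamegraph, T, \perslivegroups)$ from \cref{alg:reachPers}, together with a proof that it correctly returns the winning region and a memoryless winning strategy of the augmented reachability game $\auggame = (\gamegraph, \LTLeventually T, \assumpPers(\perslivegroups))$, and that it terminates within $\bigO(\abs{\perslivegroups}\cdot\abs{V}\cdot\abs{E})$ time. Hence the only thing left to verify is that this running time is polynomial in the size of the problem instance, after which $\PTIME$ membership is immediate.

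So I would argue as follows. The input to the problem consists of the game graph $\gamegraph = (V,E)$, the reachability target $T \subseteq V$, and the set $\perslivegroups$ of persistent live groups, where each group is a triple $(\perssource,\persedges,\perstarget)$ of subsets of $V$ and $E_0$. Consequently the encoding length of any instance is at least $\max(\abs{V},\abs{E},\abs{\perslivegroups})$, so the bound $\abs{\perslivegroups}\cdot\abs{V}\cdot\abs{E}$ from \cref{thm:persGame} is a cubic polynomial in the instance size. This directly places the decision problem in $\PTIME$.

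As a self-contained sanity check of the bound (which is really the content of \cref{thm:persGame} rather than of the corollary itself), I would trace \cref{alg:reachPers}: the attractor computation $\Attrz{\gamegraph}{T}$ runs in linear time, and each call $\solve(\gamegraph|_{\persedges},\spec_B)$ with $\spec_B = \LTLeventually A \vee \LTLalways(\perssource\setminus\perstarget)$ reduces to a safety game and is therefore solvable in polynomial time. The \textbf{for}-loop performs at most $\abs{\perslivegroups}$ such solves, and each recursive call is triggered only when $B\not\subseteq A$, i.e.\ when the provisional winning set $A\cup B$ grows strictly; since this set is monotone and bounded by $V$, the recursion depth is at most $\abs{V}$, and multiplying these factors reproduces the stated bound. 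The main obstacle is essentially absent at the level of the corollary: all the real work---correctness of $\reachPers$ and the termination analysis---is carried out in \cref{thm:persGame}, and the only genuine step here is the observation that a running time given as a product of instance-size parameters is polynomial, so that an explicit complexity bound transfers into membership in the class $\PTIME$.
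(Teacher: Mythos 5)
Your proposal is correct and takes essentially the same route as the paper: there, \cref{corollary:progress-reach-ptime} is stated as an immediate consequence of \cref{thm:persGame}, whose correctness and $\bigO(\abs{\perslivegroups}\cdot\abs{V}\cdot\abs{E})$ running-time bound (polynomial in the instance size) directly yield membership in $\PTIME$. Your additional trace of \cref{alg:reachPers} is harmless but, as you yourself note, merely restates the content of \cref{thm:persGame} rather than adding anything needed for the corollary.
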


\subsection{Augmented Parity Games}\label{sec:progress-groups-parity}
To solve parity games augmented with persistent live groups, we use the fact that most of the algorithms to solve (non-augmented) parity games are based on the attractor functions $\Attrz{\gamegraph}{T}$ and $\attrz{\gamegraph}{T}$ (as defined in \cref{section:prelim}). 
Furthermore, the only difference between the attractor function $\Attrz{\gamegraph}{T}$ and our new function $\reachPers(\gamegraph,T,\perslivegroups)$ from \cref{alg:reachPers} is the utilization of augmented persistent live groups to solve reachability games.
Hence, in many of these algorithms for parity games, one can simply replace every use of $\Attrz{\gamegraph}{T}$ with $\reachPers(\gamegraph,T,\perslivegroups)$ to obtain an algorithm to solve parity games augmented with persistent live groups.
In particular, the authors in \cite{context-triggeredABCD}, obtained an exponential algorithm by using $\reachPers(\gamegraph,T,\perslivegroups)$ in Zielonka's algorithm~\cite{Zielonka98}.
In this section, we show that applying the same technique to the quasi-polynimial algorithm given by Lehtinen et al.~\cite{LehtinenPSW22} and Parys~\cite{quasiZielonka} gives a quasi-polynomial algorithm for such augmented games.

As the algorithm given by Lehtinen et al.~\cite{LehtinenPSW22,quasiZielonka} recursively solves the parity games restricted to a smaller set of vertices, we need to define how the persistent live groups are transformed when we restrict the game graph in such a way.
In addition, we also need to ensure that the transformed persistent live groups still express the same assumption w.r.t.\ the restricted game graph.
Intuitively, one easy way to do this is by restricting all three sets $\perssource$, $\persedges$, $\perstarget$ of a persistent live group to the set of vertices and edges in the restricted game graph.
However, if there is an edge $e=(v,w)$ in $\persedges$, then the constraint $\assumpC(\persedges)$ enforces that edge $e$ is taken from vertex~$v$. If the restricted game graph contains $v$ but not the edge $e$, then to satisfy the constraint $\assumpC(\persedges)$, we need to ensure that vertex $v$ is not visited.
Hence, we need to removed such vertices from $\perssource$ in the transformed persistent live groups. This is formalized below.
\begin{definition}
    Given a game graph $\gamegraph = (V,E)$ augmented with a set $\perslivegroups$ of persistent live groups, and a set $U\subseteq V$, we define the set of persistent live groups restricted to $U$ as $\perslivegroups|_U = \{(\perssource|_U,\persedges|_U,\perstarget|_U) \mid (\perssource,\persedges,\perstarget)\in \perslivegroups\}$ s.t.\
    \begin{align*}
        \perstarget|_U = \perstarget \cap U,\quad
        \persedges|_U = \{(u,v)\in E \mid u,v\in U\},\quad
        \perssource|_U = (\perssource\cap U) \setminus (\source(\persedges)\setminus\source(\persedges')).
    \end{align*}
    Furthermore, the restriction applies as usual to game graph and parity objectives, i.e., $\gamegraph|_U = (U,E\cap U\times U)$ and parity objective $\parity(\priority|_U)$ is defined such that $\priority|_U(u) = \priority(u)$ for all $u\in U$.
\end{definition}
One can show that $\perslivegroups|_U$ indeed captures the same assumption as $\perslivegroups$ w.r.t.\ the game restricted to $U$ as formalized below and proven in \cref{appendix:lemma-prog-restricted}.
\begin{restatable}{lemma}{restateprogressgrouplemma}\label{lemma:prog-restricted}
    Given an augmented parity game $\auggame = (\gamegraph,\spec,\assumpPers(\perslivegroups))$ and a set $U\subseteq V$, let $\play$ be a play in $\gamegraph|_U$.
    Then, $\play$ is winning in augmented game $\auggame|_U =(\gamegraph|_U,\spec|_U,\assumpPers(\perslivegroups|_U))$ if and only if it is winning in augmented game $\auggame$. 
\end{restatable}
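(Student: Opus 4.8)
The plan is to reduce everything to a play-wise, assumption-by-assumption comparison, exploiting that $\play$ never leaves $U$. Recall that a play is winning (for $\pz$) in an augmented game exactly when it satisfies the implication $\assump\Rightarrow\spec$. Hence it suffices to establish two facts for the fixed $U$-play $\play = v_0 v_1\cdots$: first, that $\play\models\spec \iff \play\models\spec|_U$, and second, that $\play\models\assumpPers(\perslivegroups) \iff \play\models\assumpPers(\perslivegroups|_U)$; together these make the implications $\assumpPers(\perslivegroups)\Rightarrow\spec$ and $\assumpPers(\perslivegroups|_U)\Rightarrow\spec|_U$ coincide on $\play$. The first equivalence is immediate: since $\priority|_U$ agrees with $\priority$ on $U$ and $\play$ only visits vertices of $U$, the set of priorities seen infinitely often, and hence the parity outcome, is identical under $\priority$ and $\priority|_U$. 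So the whole argument concentrates on the assumptions. As $\assumpPers(\perslivegroups)$ and $\assumpPers(\perslivegroups|_U)$ are conjunctions indexed by the same persistent live groups (each $(\perssource,\persedges,\perstarget)$ paired with its restriction), it is enough to fix one persistent live group and prove $\play\models\assumpPers(\perssource,\persedges,\perstarget) \iff \play\models\assumpPers(\perssource|_U,\persedges|_U,\perstarget|_U)$.

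Expanding the definition, $\assumpPers(\perssource,\persedges,\perstarget)$ equals $\LTLalways\big(\LTLalways(\perssource\wedge\assumpC(\persedges)) \Rightarrow \LTLeventually\perstarget\big)$, which at $\play$ asserts, for every position $n$, that if the suffix $v_n v_{n+1}\cdots$ always satisfies $\perssource\wedge\assumpC(\persedges)$ then it eventually hits $\perstarget$. I would establish the equivalence of the two formulas position-wise. The consequent is easy: because $\play$ stays in $U$ and $\perstarget|_U = \perstarget\cap U$, the suffix from $n$ satisfies $\LTLeventually\perstarget$ iff it satisfies $\LTLeventually\perstarget|_U$. The heart of the proof is the antecedent, namely showing that for every position $m$ the local trigger ``$v_m\in\perssource$ and $\assumpC(\persedges)$ holds at $m$'' is equivalent to ``$v_m\in\perssource|_U$ and $\assumpC(\persedges|_U)$ holds at $m$''; the outer $\LTLalways$ and the nested $\LTLalways$ then transport this pointwise equivalence to the full formulas.

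I would prove this local equivalence by a case split on the outgoing $\persedges$-edge of $v_m$ (here I use that $\persedges$ assigns at most one continuation edge to each source vertex, as holds in all our constructions). If $v_m$ has no $\persedges$-edge, both $\assumpC$-constraints are vacuous at $m$ and $v_m\in\perssource|_U \iff v_m\in\perssource$, since $v_m\in U$ and $v_m\notin\source(\persedges)$, so the triggers coincide. If $v_m$ has its unique $\persedges$-edge $(v_m,x)$ with $x\in U$, then $(v_m,x)$ survives into $\persedges|_U$, so $v_m\notin\source(\persedges)\setminus\source(\persedges|_U)$, membership in $\perssource|_U$ matches membership in $\perssource$, and both $\assumpC$-constraints demand the same successor $v_{m+1}=x$. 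The only delicate case, and the step I expect to be the main obstacle, is $x\notin U$: here the edge does not survive, so $v_m\in\source(\persedges)\setminus\source(\persedges|_U)$ and thus $v_m\notin\perssource|_U$, making the restricted trigger \emph{false}; on the original side, $\assumpC(\persedges)$ forces $v_{m+1}=x\notin U$, which is impossible for a $U$-play, so the original trigger is \emph{false} as well. This is exactly why the definition subtracts $\source(\persedges)\setminus\source(\persedges|_U)$ from $\perssource\cap U$, and getting this bookkeeping right so that the two antecedents agree at \emph{every} position, including the degenerate ones, is the crux. Once antecedent and consequent are matched position-wise, the persistent-live-group formulas agree on $\play$; conjoining over all groups and combining with the specification equivalence yields the claim.
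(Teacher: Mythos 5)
Your proof is correct and, in substance, takes the same route as the paper's: reduce winning to the implication $\assumpPers\Rightarrow\spec$, observe that the parity outcome is unchanged on plays that stay in $U$, and then match the restricted assumption against the original one group by group, with the subtraction of $\source(\persedges)\setminus\source(\persedges|_U)$ in $\perssource|_U$ doing all the work. The only real difference is packaging: the paper proves the two directions separately, by a suffix-level case split on which disjunct of $\LTLalways(\perssource\wedge\assumpC(\persedges))\Rightarrow\LTLeventually\perstarget$ holds (reach $\perstarget$, leave $\perssource$, or violate $\assumpC(\persedges)$), whereas you prove the stronger pointwise statement that antecedent triggers and consequents coincide at every position, from which both directions follow simultaneously. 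That is a clean but minor strengthening, not a different method.

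What does deserve comment is the hypothesis you flag explicitly, namely that $\persedges$ has at most one outgoing edge per source vertex. This is not guaranteed by \cref{def:perslivegroups}, and it is genuinely load-bearing: if $u\in U$ has $\persedges$-edges to both $x\in U$ and $y\notin U$, then on a $U$-play alternating between $u$ and $x$ (with $u,x\in\perssource$ and $\perstarget$ disjoint from $\{u,x\}$), the constraint $\assumpC(\persedges)$ fails at every visit to $u$ (it demands $\LTLnext x$ and $\LTLnext y$ at once), so $\assumpPers(\perssource,\persedges,\perstarget)$ holds vacuously; yet $(u,x)\in\persedges|_U$ keeps $u\in\source(\persedges|_U)$ and hence $u\in\perssource|_U$, so the restricted antecedent holds forever, $\perstarget|_U$ is never reached, and $\assumpPers(\perssource|_U,\persedges|_U,\perstarget|_U)$ is violated. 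Thus without your hypothesis the pointwise equivalence, and indeed the lemma as literally stated, fails. The paper's own proof tacitly assumes the same property: in its second direction it reads $\neg\assumpC(\persedges)$ as ``the play visits a source of $\persedges$ without taking any $\persedges$-edge from it,'' an inference valid only for single-edge sources. So your explicit caveat is an improvement in rigor rather than a gap on your side; alternatively, one can drop the caveat entirely by defining $\perssource|_U$ as $(\perssource\cap U)\setminus\source(\persedges\setminus\persedges|_U)$ --- which is what the paper's prose motivation for the restriction actually describes --- under which your three-case analysis extends verbatim to multi-edge sources.
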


With this well-defined restrictions, one can replace every use of $\Attrz{\gamegraph}{T}$ with $\reachPers(\gamegraph,T,\perslivegroups)$ in the algorithm given by Lehtinen et al.~\cite{LehtinenPSW22,quasiZielonka} to obtain a quasi-polynomial result for parity games augmented with persistent live groups.
Although the proof of correctness is almost identical to the proof given by Parys~\cite{quasiZielonka}, we have provided the adapted algorithm and its proof of correctness in \cref{appendix:quasi-progress-groups}.

\begin{theorem}\label{theorem:pers-solve_augmented_parity}
    Parity games augmented with persistent live group assumptions can be solved in $\QP$. 
\end{theorem}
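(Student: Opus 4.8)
The plan is to adapt the quasi-polynomial recursive parity solver of Parys~\cite{quasiZielonka} (equivalently Lehtinen et al.~\cite{LehtinenPSW22}) by replacing every player-0 attractor computation with the augmented-reachability solver $\reachPers$, and to propagate the persistent live groups through the recursion using the restriction operator $\perslivegroups|_U$. Concretely, wherever the standard algorithm computes $\Attrz{\gamegraph}{T}$ to a target set $T$ (e.g.\ the top-priority vertices) and then recurses on the complementary sub-game, the adapted algorithm instead computes $\reachPers(\gamegraph, T, \perslivegroups)$ and recurses on the restricted augmented game $\auggame|_U = (\gamegraph|_U, \spec|_U, \assumpPers(\perslivegroups|_U))$ for the appropriate vertex set $U$. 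Since persistent live groups only aid $\pz$, the player-1 attractor steps remain the standard ones; the assumption enters the reachability reasoning solely on $\pz$'s side, exactly as in the exponential Zielonka-based algorithm of~\cite{context-triggeredABCD}.

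First I would argue that this yields a well-formed recursion. By \cref{thm:persGame}, $\reachPers$ returns a winning region together with a memoryless winning strategy for the augmented reachability game in polynomial time, and the complement of its output is a sub-game in which the remaining play is confined; hence restricting to $U$ produces a bona fide smaller augmented parity game. The crucial step is \cref{lemma:prog-restricted}, which states that a play of $\gamegraph|_U$ is winning in $\auggame|_U$ iff it is winning in $\auggame$. This lets me identify the recursive subproblems with genuine restrictions of the original augmented game, so that the winning regions computed inside the recursion remain meaningful when lifted back to $\auggame$.

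The correctness proof then follows Parys's induction almost verbatim, reading ``attractor'' systematically as ``augmented attractor''. I would check that Parys's argument uses the player-0 attractor only through three abstract properties, each of which $\reachPers$ inherits from \cref{thm:persGame}: (i) the returned set is exactly the set of vertices from which $\pz$ can force reaching $T$ \emph{under the assumption}; (ii) $\pz$ has a uniform memoryless strategy realizing this; and (iii) the complement is a $\pz$-closed sub-game on which the recursion may legitimately continue. With these in place, the inductive invariants of the quasi-polynomial algorithm—relating the priorities counted, the precision parameters, and the partially computed winning regions—transfer directly, and the final memoryless strategy for $\pz$ is assembled by stitching the strategies returned by the $\reachPers$ calls with those from the recursive calls, exactly as in the non-augmented proof.

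For the complexity bound, the solver of~\cite{quasiZielonka,LehtinenPSW22} makes a quasi-polynomial number of recursive calls, each performing a polynomial number of attractor computations. Replacing each player-0 attractor by $\reachPers$ multiplies the per-call cost only by the polynomial factor $\mathcal{O}(\abs{\perslivegroups}\cdot\abs{V}\cdot\abs{E})$ of \cref{thm:persGame}, and each restriction $\perslivegroups|_U$ is computable in polynomial time; a polynomial factor times a quasi-polynomial bound stays quasi-polynomial, giving membership in $\QP$. The main obstacle I anticipate is the verification underlying property (iii): confirming that the complement of the $\reachPers$-set remains a trap for $\pz$ \emph{under the assumption}, so that the environment can confine the play there without the persistent live groups rescuing $\pz$. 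Since $\reachPers$, unlike a plain attractor, may enlarge the attracted region using the safety-type constraints, this closure is the delicate point, and it is precisely what \cref{lemma:prog-restricted} together with the condition $\perstarget \subseteq \perssource$ is designed to guarantee.
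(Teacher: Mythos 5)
Your proposal is correct and takes essentially the same route as the paper: the paper's own proof likewise replaces every $\Attrz{\gamegraph}{T}$ in the algorithm of Parys and Lehtinen et al.\ by $\reachPers(\gamegraph,T,\perslivegroups)$ while keeping the standard $\po$ attractors, uses \cref{lemma:prog-restricted} to make the recursive restrictions $\perslivegroups|_U$ legitimate, and then rereads Parys's dominion-based induction with the augmented attractor---your anticipated "closure" obstacle is exactly what the two dominion lemmas in \cref{appendix:quasi-progress-groups} discharge. The complexity argument also coincides: quasi-polynomially many recursive calls, each incurring only the polynomial $\mathcal{O}(\abs{\perslivegroups}\cdot\abs{V}\cdot\abs{E})$ overhead of \cref{thm:persGame}.
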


\bibliographystyle{splncs04}
\newpage
\bibliography{main}

\appendix
\section{Proof of \cref{thm:colive}}\label{appendix:co-live}
\restatecolive*
\begin{proof}
    Let us first show that vertices in $A$ are winning for $\p{1}$. As $A = \attro{\gamegraph}{\wino'}$, $\p{1}$ has a (memoryless) strategy $\strato$ to reach the set $\wino'$ from every vertex in $A$. Furthermore, since $\wino'$ is the winning region for $\p{1}$ in game $(\gamegraph',\spec)$, $\p{1}$ has a (memoryless) strategy $\strato'$ that does not use any co-live edge and is winning from every vertex in $\wino'$. Hence, from every vertex in $A$, $\p{1}$ can use strategy $\strato$ to reach $\wino'$ and then use strategy $\strato'$ to win the game without using the co-live edges anymore. Therefore, $\p{1}$ has a strategy to win the game from $A$ without using the co-live edges in $\colivegroup$ infinitely often. 

    Now, we prove the theorem  using induction on $\abs{V}$. 
    Base case is trivial when $\abs{V} = 0$. 
    For the induction case, assume that the algorithm indeed returns the winning region and a memoryless strategy for games with $\abs{V} < n$.
    Now, suppose $\abs{V}=n$, then, we have two cases, either algorithm terminates in \cref{alg:colive:return1} or in \cref{alg:colive:return2}.

    \paragraph*{Case 1:} Suppose the algorithm terminated and returned $(B,\stratz')$ in \cref{alg:colive:return1}. Then either $B=\emptyset$ or $B=V$. 

    If $B=\emptyset$, then $A = V$. Hence, $\p{1}$ has a winning strategy from every vertex that uses the co-live edges finitely often. Therefore, $\p{0}$'s winning region is empty.

    Now, if $B=V$, then $\winz' = V$. Hence, strategy $\stratz'$ is winning for $\p{0}$ from every vertex in game $(\gamegraph',\spec)$. It is enough to show that $\stratz'$ is also winning for $\p{0}$ from every vertex in game $(\gamegraph,\spec,\assumpcolive(\colivegroup))$. Let $\play$ be a $\stratz'$-play. If $\play$ uses the co-live edges in $\colivegroup$ infinitely often, then by definition, it is winning. If $\play$ uses the co-live edges finitely many times, then there exists a suffix $\play'$ of $\play$ that is compliant with $\stratz'$ and never use any edge in $\colivegroup$. So, $\play'$ is winning in $(\gamegraph',\spec)$ and hence, satisfies the parity objective $\spec$. As parity objectives are tail-properties, $\play$ also satisfies $\spec$ and hence, it is winning.

    \paragraph*{Case 2:} Suppose the algorithm terminated in \cref{alg:colive:return2} and returned $(\winz,\stratz)$. Since $B \neq V$, we have $\abs{B} < \abs{V}$. So, by the induction assumption, $(\winz,\stratz)$ are the winning region and memoryless winning strategy in game $\auggame|_B$.
    Since, every vertex in $A$ is winning for $\p{1}$, the winning region of $\p{0}$ is a subset of $B$. Furthermore, as every vertex in $B\setminus \winz$ is winning for $\p{1}$ in game $\auggame|_B$, it is also winning for $\p{1}$ in $\auggame$. Hence, it is enough to show that $\stratz$ is winning for $\p{0}$ from every vertex in $\winz$. Let $\play$ be a $\stratz$-play starting from some vertex in $\winz$. As $A = \attro{\gamegraph}{\wino'}$ and $\winz\cap A = \emptyset$, $\p{1}$ cannot force a play from $\winz$ to visit $A$. Hence, $\play$ stays inside $B$. As $\stratz$ is a winning strategy for $\auggame|_B$, $\play$ is a winning play. Therefore, $\winz$ is the winning region for $\p{0}$ and $\stratz$ is a (memoryless) winning strategy for her.\qed
\end{proof}

\section{Proof of $\NP$-Completeness Results for Live Groups}\label{appendix:lem:NPinclusion-augmentedparity}\label{appendix:thm:NP-completeness-augmentedreachparity}
Let us first prove the lemmas (\cref{lem:NPinclusion-augmentedparity} and \cref{lem:NPhardness-augmentedreachability}) leading to the $\NP$-completeness result.
\restatelivegrouplemma*
\begin{proof}
    We will give a polynomial time reduction from a parity game augmented with live groups to a Rabin game. As Rabin games are known to be $\NP$-complete~\cite{EmersonJ91,EmersonJ99,Thomas97}, this will imply the desired inclusion result.
    Let $\auggame =(\gamegraph = (V,E),\spec = \parity(\priority), \assumpgrlive(\livegroups) )$ be an augmented parity game with priority function $\priority\colon V\rightarrow [0;d]$.
    As Rabin pairs are defined on vertices and not edges, in order to represent live groups as Rabin pairs, while reducing $\auggame$ to a Rabin game $\game'$, we need to add an additional vertex $e$ for every edge $(u,v)$ that lie in a live group, and edges $ u \to e \to v$ to $\game'$.
    Then, the game graph $G' = (V', E')$ where $V' = V \uplus E$ and $E' = \{ (u,e), (e,v) \mid (u,v) \in E\}$ is large enough to reduce any $\auggame$ into, i.e. even one in which every edge lies in a live group.
    Then we define the Rabin game $\game' = (G', \spec' = \rabin(\Omega_1 \cup \Omega_2))$
    where the Rabin pairs in $\Omega_1 = \{(\source(\livegroupSingle),\livegroupSingle) \mid \livegroupSingle\in\livegroups\}$ represent the live group conditions and
    the Rabin pairs in $\Omega_2 = \{(P_{2i},\cup_{j>2i} P_j)\mid 0\leq 2i\leq d\}$ where $ P_i=\{q\in Q\mid \priority(q)=i \}$ represent the parity conditions.
    For a play $\rho = v_1 \to v_2 \to v_3 \to \ldots $ in $G$, call the play $\rho' = v_1 \to (v_1, v_2) \to v_2 \to (v_2, v_3) \to v_3 \to \ldots$ in $G'$ the projection of $\rho$ to $G'$ and similarly, call $\rho$ the projection of $\rho'$ to $G$. Due to our construction, the projections are well defined on both directions. 
    We will show that $\rho$ is winning in $\auggame$ if and only if $\rho'$ is winning in $\game'$.
    
    Let $\rho'$ be winning in $\game'$. Then there exists a Rabin pair $(F_i, R_i)$ that is satisfied by $\rho'$. If the pair is in $\Omega_1$, then there exists a live group $\livegroupSingle \in \livegroups$ such that $\rho'$ visits a source vertex $u$ for $e = (u,v) \in \livegroupSingle$ infinitely often and visits vertex $e$ only finitely often.
    This implies that in $\rho$, $u$ is visited infinitely often and the edge $e$ is taken only finitely often. Hence, $\rho'$ violates the live group assumption $\livegroupSingle$.
    Otherwise, if the pair $(F_i, R_i) \in \Omega_2$, then for some $i\in [0; d/2]$, $\play$ visits $P_{2i}$ infinitely often and visits $\cup_{j>2i} P_j$ only finitely often. This implies that $\rho$ visits the vertices with priority $2i$ infinitely often and the ones with higher priorities only finitely often. 
    In both cases, $\rho$ is winning in $\auggame$.

    Now let $\rho$ be winning in $\auggame$. We can similarly see that if $\rho$ is won due to the violation of a live group assumption $\livegroupSingle$, then $\rho'$ satisfies a Rabin pair in $\Omega_1$. Otherwise if $\rho$ is won due to the the parity condition, then $\rho'$ satisfies a Rabin pair in $\Omega_2$. Either way, $\rho'$ is winning in $\game'$.\qed
\end{proof}

\restatelivegrouplemmab*

\begin{proof}
 Given the construction of $\auggame^\varphi$ in \cref{subsec:live-group-def} it remains to show that $\varphi$ has a satisfying assignment iff $v_0$ is won in $\auggame^\varphi$.
    \paragraph*{$(\Rightarrow)$}First suppose $\varphi$ has a satisfying assignment. Let $L$ be a set of literals the assignment sets $\true$ to. Clearly, for each $i \in [1; m]$, either $x_i$ or $\neg x_i$ is in $L$. Furthermore, for each clause $C_i$, there exists a literal in $C_i$ that is in $L$.
    Now set the following positional strategy to $\p{0}$: For each vertex $C_i$, take an edge $(C_i, y)$ where $y \in L$. We claim this strategy is winning from $v_0$.
    Observe that in any play $\rho$ compliant with this strategy for each literal $y$, one of $y, \neg y$ is never visited.
    Moreover, there exists at least one literal $y$ visited infinitely often. 
    If $y = x_j$, then the edge $(\neg x'_j, v_0)$ cannot be taken, and thus $\p{1}$ has to take the edge $(x_j, \large{\smiley{}})$ to make $H_j^1$ hold. Similarly, if $y = \neg x_j$, he has to take $(\neg x_j, \large{\smiley{}})$ for $H^2_j$ to hold.
    Either way, $\p{0}$ wins the game.

    \paragraph*{$(\Leftarrow)$}For the other direction, recall from the proof of \cref{lem:NPinclusion-augmentedparity} that every parity (and thus, reachability) game augmented with live groups can be viewed as a Rabin game. In the proof of the converse direction, we will use the fact that Rabin games are half-positional~\cite{EJ99}. That is, if $\p{0}$ has a winning strategy from a vertex in a Rabin game, then it also has a positional one.
    Suppose $\varphi$ has no satisfying assignment. We will show that for each positional strategy of $\p{0}$, there exists a $\p{1}$ strategy such that a play that starts from $v_0$ and is compliant with both strategies is winning for $\p{1}$. 
    Let $\strat: V_0 \to V$ be a positional $\p{0}$ strategy. Then, since the successors of each clause $C_i$ are its literals, $\strat$ cannot send each clause to a literal in a non-conflicting manner, or else $\strat$ would give a satisfying assignment for $\varphi$. That is, there exists $i_1, i_2 \in [1;m]$ such that $\strat(C_{i_1}) = x_j$ and $\strat(C_{i_2}) = \neg x_j$ for some literal $x_j$. %
    Consider the $\p{1}$ strategy that alternates between $(v_0, C_{i_1})$ and $(v_0, C_{i_2})$ whenever $v_0$ is visited and takes the edges $(x_j, x'_j)$ and $(\neg x_j, \neg x'_j)$. A play compliant with both strategies will never reach $\large{\smiley{}}$, only see literals $x_j$ and $\neg x_j$ infinitely often, and thus will only have to satisfy $H^1_j$ and $H^2_j$. Both of these 
    live groups are satisfied due to the edges $(\neg x'_j, v_0)$ and $(x'_j, v_0)$ being taken infinitely often. That is, $\p{0}$ has no winning strategy from $v_0$, and thus $v_0$ is won by $\p{1}$.\qed
\end{proof}

Now, let us prove the $\NP$-completeness result using the lemmas.
\restatenpcompletenesslivegrps*
\begin{proof}
    Observe that reachability games are a special case of parity games: In the reachability game we assign priority $2$ to each vertex in the reachability set and turn them into sink vertices by removing their outgoing edges and adding self-loops. We assign priority $1$ to every other vertex. The resulting game is an equivalent parity game.
    It is easy to see that the same conversion give that reachability games augmented with live groups are a special case of parity games augmented with live groups.
    Consequently, \cref{lem:NPinclusion-augmentedparity} implies that reachability games augmented with live groups lies in $\NP$ and \cref{lem:NPhardness-augmentedreachability} implies that parity games augmented with live groups is $\NP$-hard. Combining these results with the existing ones from \cref{lem:NPinclusion-augmentedparity} and \cref{lem:NPhardness-augmentedreachability} gives us the $\NP$-completeness result for both games.\qed
 \end{proof}

\section{Singleton-Source Live (CNF) Groups Assumptions}\label{appendix:CNFassumptions}\label{appendix:thm:singleton-live-groups}
Let us first prove \cref{thm:singleton-live-groups} showing games augmented with singleton-source live groups can be solved in same time complexity class as their non-augmented versions.
\restatesingletonlivegrps*
\begin{proof}
The proof proceeds by translating a game augmented with singleton-source live groups into a different game augmented with live eges. For live group $\livegroupSingle \in \livegroups$, let $a = \source(\livegroupSingle)$. Then  add one new vertex $a_\livegroupSingle$ (with priority~$0$ in case of a parity specification) to the game as well as \emph{normal edges} $(a_\livegroupSingle, v_i)$ for each $(a, v_i) \in \livegroupSingle$ and replace all the edges $e_i = (a, v_i) \in \livegroupSingle$ with a single live edge $e = (a, a_\livegroupSingle)$. This ensures that in the new game whenever $a$ is taken infinitely often, the unique live outgoing edge $e$ of $a$ will be taken infinitely often. From $a_\livegroupSingle$, $\po$ is free to choose any edge from $\livegroupSingle$ to take. As taking any one of them enables it to satisfy the live group assumption $\livegroupSingle$, both games are equivalent. This construction is illustrated on an example in \cref{fig:boolean-live-groups}.\qed
\end{proof}

As stated before, the idea can be generalized to a class of fairness assumptions expressed by combinations of edges from same source in CNF forms as formalized below.

\begin{definition}\label{def:live-CNF}
    Given a game graph $\gamegraph = (V,E)$, a \emph{\CNFname} $\varphi_v$ for some vertex $v$ is defined as the formula
    \[\varphi_v = \bigwedge_{i\in[1;m]}\bigvee_{j\in[1;n_i]} e_{ij},\]
    where $e_{ij}\in E(v)$ for all $i\in[1;m],j\in[1;n_i]$.
    The assumption represented by a set $\liveCNF = \{\varphi_v\mid v\in V\}$ of \CNFnames is expressed by the following LTL formula
    \begin{equation}\label{eq:assumpliveCNF}
        \assumpCNFlive(\liveCNF) \coloneqq \bigwedge_{v\in V} \LTLalways\LTLeventually v \Rightarrow \LTLalways\LTLeventually \varphi_v.
    \end{equation}
    We write games augmented with \CNFnames $\liveCNF$ to refer to the augmented games $\auggame = (\gamegraph,\spec,\assumpCNFlive(\liveCNF))$.
\end{definition}

One can see that such assumptions can actually be represented by a conjunctions of multiple live groups for each vertex.
So, one can generalize the idea illustrated in \cref{fig:boolean-live-groups} to get similar time complexity results for games augmented with \CNFnames as formalized below.
\begin{theorem}\label{thm:live-CNF}
    Reachability (resp. parity) games augmented with \CNFnames assumptions can be solved in $\PTIME$ (resp. $\QP$).
\end{theorem}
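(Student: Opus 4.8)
The plan is to reduce any game augmented with \CNFnames to one augmented with \emph{singleton-source} live groups, and then appeal directly to \cref{thm:singleton-live-groups}. The key observation is structural: in a \CNFname $\varphi_v = \bigwedge_{i\in[1;m]}\bigvee_{j\in[1;n_i]} e_{ij}$ every edge $e_{ij}$ has the \emph{same} source $v$, so each clause $\bigvee_{j\in[1;n_i]} e_{ij}$ is nothing but the propositional encoding of a live group $\livegroupSingle_{v,i} := \{e_{i1},\dots,e_{in_i}\}$ whose source set $\source(\livegroupSingle_{v,i}) = \{v\}$ is a singleton. (Here I write $m$ and $n_i$ for the parameters of $\varphi_v$, suppressing their dependence on $v$.) Thus the CNF structure is exactly a conjunction, over clauses, of singleton-source live-group assumptions, which is the informal reading announced after \cref{def:live-CNF}.

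First I would make this decomposition precise. Since the antecedent $\LTLalways\LTLeventually v$ of the implication in \eqref{eq:assumpliveCNF} is shared across all clauses of $\varphi_v$, and since at most one outgoing edge of $v$ is traversed per step, the assumption attached to $v$ is read clause-wise:
\begin{equation*}
  \left(\LTLalways\LTLeventually v \Rightarrow \LTLalways\LTLeventually\varphi_v\right)
  \ \equiv\
  \bigwedge_{i\in[1;m]}\left(\LTLalways\LTLeventually v \Rightarrow \LTLalways\LTLeventually\textstyle\bigvee_{j\in[1;n_i]} e_{ij}\right).
\end{equation*}
Each conjunct on the right is precisely the live-group assumption $\LTLalways\LTLeventually\source(\livegroupSingle_{v,i})\Rightarrow\LTLalways\LTLeventually\livegroupSingle_{v,i}$ of \cref{def:live-groups}. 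Collecting these over all vertices and clauses yields a set $\livegroups := \{\livegroupSingle_{v,i}\mid v\in V,\ i\in[1;m]\}$ of live groups with $\abs{\source(\livegroupSingle)} = 1$ for every $\livegroupSingle\in\livegroups$, and hence $\assumpCNFlive(\liveCNF) \equiv \assumpgrlive(\livegroups)$. Consequently $\auggame = (\gamegraph,\spec,\assumpCNFlive(\liveCNF))$ and $(\gamegraph,\spec,\assumpgrlive(\livegroups))$ have the same winning regions and winning strategies. To conclude, I would note that $\abs{\livegroups} = \sum_{v\in V} m_v$ is the total number of clauses in $\liveCNF$, which is linear in the size of the input assumption; the transformation is therefore polynomial and preserves the complexity class. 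Applying \cref{thm:singleton-live-groups} to $(\gamegraph,\spec,\assumpgrlive(\livegroups))$ then gives a $\PTIME$ algorithm for reachability objectives and a $\QP$ algorithm for parity objectives, as claimed.

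The step I expect to require the most care is the clause-wise reading of $\LTLalways\LTLeventually\varphi_v$. In general $\LTLalways\LTLeventually(\alpha\wedge\beta)$ is strictly stronger than $\LTLalways\LTLeventually\alpha\wedge\LTLalways\LTLeventually\beta$, so one cannot naively distribute $\LTLalways\LTLeventually$ over a conjunction; moreover, interpreting $\varphi_v$ literally at a single position is degenerate, since distinct outgoing edges of $v$ are mutually exclusive per step and clauses with disjoint edge sets could never be satisfied simultaneously. The genuinely non-routine part of the argument is therefore to make explicit that a \CNFname assumption \emph{means} the conjunction of its clausal live-group assumptions, and to verify that this interpretation — rather than the literal single-position conjunction — is the intended semantics. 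Once this is fixed, every remaining step is an immediate appeal to \cref{thm:singleton-live-groups}, so no further algorithmic work is needed.
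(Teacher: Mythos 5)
Your first step coincides with the paper's: the clause-wise decomposition of \eqref{eq:assumpliveCNF} into singleton-source live groups is exactly how the paper's proof begins (it rewrites $\assumpCNFlive(\liveCNF)$ as $\assumpgrlive\bigl(\bigcup_{v\in V}\livegroups_v\bigr)$), and you are right that this is a matter of intended semantics rather than a genuine LTL identity, a point the paper glosses over by writing the distribution of $\LTLalways\LTLeventually$ over the conjunction as an equality. The gap is in your second step. Under this semantics, a \CNFname assumption \emph{is} literally a set of singleton-source live groups, with possibly several groups per vertex that may share edges; so your ``reduction'' is the identity map, and the entire proof burden is shifted onto \cref{thm:singleton-live-groups} in exactly the generality that its proof does not establish. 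The construction behind \cref{thm:singleton-live-groups} (\cref{fig:boolean-live-groups}) deletes the edges of a group $\livegroupSingle$ and replaces them by a single live edge into a fresh vertex $a_\livegroupSingle$, then claims a play-by-play equivalence; that claim fails as soon as one edge lies in two clauses, which is the typical CNF situation. Take a vertex $a$ with $\varphi_a = (e_1\vee e_2)\wedge(e_1\vee e_3)$, i.e., $\livegroupSingle_1=\{e_1,e_2\}$ and $\livegroupSingle_2=\{e_1,e_3\}$: in the original game $\po$ satisfies both clause assumptions by taking $e_1$ every time, but in the gadget game each traversal of $e_1$ must be routed through exactly one of $a_{\livegroupSingle_1}$, $a_{\livegroupSingle_2}$, so the play that always routes through $a_{\livegroupSingle_1}$ violates the live edge $(a,a_{\livegroupSingle_2})$ and is won by $\pz$ even though its projection need not be winning for her. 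Plays therefore do not correspond; only winning regions do, and establishing that requires a strategy-level argument.

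That strategy-level argument is precisely what the paper's proof of \cref{thm:live-CNF} supplies and what your proposal omits: it builds a modified gadget game that keeps the original edges alongside the new vertices $v_\livegroupSingle$, invokes half-positionality of Rabin games (via the reduction in \cref{lem:NPinclusion-augmentedparity}) to restrict both games to positional $\pz$ strategies, and, in the direction from the original game back to the gadget game, lifts a play by \emph{alternating} among the gadget vertices of all groups containing an edge taken infinitely often, so that a live-edge assumption is violated in the lift if and only if some live-group assumption is violated in the original play. Without this argument (or an equally careful re-proof of \cref{thm:singleton-live-groups} in the overlapping-clause generality), the appeal to \cref{thm:singleton-live-groups} is circular in substance: under your own reading, the statement you invoke is the very statement you were asked to prove.
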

\begin{proof}
    Consider an augmented game $\auggame = (\gamegraph = (V,E),\spec,\assumpCNFlive(\liveCNF))$ with $\liveCNF = \{\varphi_v \mid v\in V\}$.
    Consider some formula $\varphi_v = \bigwedge_{i\in[1;m]}\vee_{j\in[1;n_i]} e_{ij}$ in $\liveCNF$.
    Then, we can re-write $\LTLalways\LTLeventually \varphi_v$ as
    follows
    \begin{align*}
        \LTLalways\LTLeventually \varphi_v &= \bigwedge_{i\in[1;m]}\LTLalways\LTLeventually\bigvee_{j\in[1;n_i]} e_{ij}\\
        &= \bigwedge_{i\in[1;m]}\LTLalways\LTLeventually \livegroupSingle_i \quad \text{with } \livegroupSingle_i = \{e_{ij}\mid j\in [1;n_i]\}
    \end{align*}
    As $\source(\livegroupSingle_i) = \{v\}$ for each $i\in[1;m]$, by definition, $\LTLalways\LTLeventually v \Rightarrow \LTLalways\LTLeventually \varphi_v$ is equivalent to the live group assumption $\assumpgrlive(\livegroups_v)$ with $\livegroups_v = \{H_i\mid i\in[1;m]\}$.
    Hence, for each vertex $v$, there exists a set $\livegroups_v$ of live groups such that conjunctions of their assumptions represents the \CNFnames assumption, i.e.,
    \[\assumpCNFlive(\liveCNF) = \bigwedge_{v\in V}\assumpgrlive(\livegroups_v) = \assumpgrlive\left(\bigcup_{v\in V}\livegroups_v\right).\]
    Now, using the idea discussed above, we construct an equivalent game augmented with live edges.
    Let $\auggame' = (\gamegraph' = (V',E'),\spec',\assumplive(\livedges))$ be an augmented defined as follows:
    \begin{itemize}
        \item $V_0' = V_0$
        \item $V_1' = V_1\cup \{v_\livegroupSingle \mid \livegroupSingle\in \livegroups_v \text{ for some }v\in V\}$
        \item $E' = E \cup \{(u,v_\livegroupSingle) \mid \livegroupSingle\in\livegroups_v\} \cup \{(v_\livegroupSingle,w)\mid (u,w)\in\livegroupSingle\}$
        \item $\livedges = \{(u,v_\livegroupSingle)\mid \livegroupSingle\in\livegroups_v\}$
        \item for reachability objective, $\spec' = \spec$, and for parity objectives, $\spec'$ uses same priority function as $\spec$ with priority $0$ for all new vertices.
    \end{itemize} 
    
    Notice that in both augmented games $\auggame$ and $\auggame'$, $\pz$ has a positional winning strategy.
    That is because, as mentioned before, the proof of \cref{lem:NPinclusion-augmentedparity} shows that every parity (and thus, reachability) game augmented with live groups can be viewed as a Rabin game, and such property holds for Rabin games~\cite{EJ99}.
    As each live edge is a singleton live group, the same result also holds for games augmented with live edges.
    Hence, we only need to consider positional strategies for $\pz$ in both games.

    Furthermore, as the vertex set of $\pz$ and her edges are same in both games, every positional strategy of $\pz$ in $\auggame$ also acts as a positional strategy of $\pz$ in $\auggame'$ and vice versa.

    Moreover, as $\auggame'$ is a game augmented with live edges, by \cref{thm:live-edges}, one can solve it polynomial (resp. quasi-polynomial) time for reachability (resp. parity) objective.
    Hence, it is enough to show the equivalence between two games in terms of positional strategies of $\pz$ as formalized in the following claim.

    \begin{claim}
        For every vertex $v\in V$, a positional $\pz$ strategy is winning from $v$ in $\auggame$ if and and only if it is winning from $v$ in $\auggame'$.
    \end{claim}
    \paragraph*{$(\Rightarrow)$} Suppose $\strat$ be a positional strategy for $\pz$ that is winning from some vertex $v\in V$ in game $\auggame$.
    Let $\play'$ be a $\strat$-play from $v$ in $\auggame'$.
    We need to show that $\play'$ is winning in $\auggame'$.
    First of all, by the construction, every edge $(u,w)$ in $\auggame$ is either retained in $\auggame'$ or $u$ is connected to $w$ via some new vertex $v_\livegroupSingle$.
    Hence, removing all new vertices (i.e., of the form $v_\livegroupSingle$) from $\play'$ gives us the corresponding $\strat$-play $\play$ in $\auggame$.
    As $\strat$ is winning from $v$, so is the play $\play$.
    If $\play$ satisfies the specification $\spec$, then by construction, it is easy to see that $\play'$ also satisfies the correspoding specification $\spec'$ and hence, is winning.
    Suppose $\play$ doesn't satisfy $\spec$, then it violates the assumption $\assumpgrlive(\livegroups_u)$ for some $u\in V$.
    Hence, $\play$ visits $u$ infinitely often but the edges in some live group $\livegroupSingle\in\livegroups_v$ appears only finitely often in $\play$.
    By construction, $\play'$ also visits $u$ infinitely often and there is an edge $(u,v_\livegroupSingle)$ in $\auggame'$. 
    Furthermore, the edges from $v_\livegroupSingle$ leads to the set $\livegroupSingle(u)$.
    Hence, if edge $(u,v_\livegroupSingle)$ appears in $\play'$ infinitely often, then there exists a vertex $w\in\livegroupSingle(u)$ such that the edge $(v_\livegroupSingle,w)$ appears infinitely often.
    Then, by construction of $\play$, edge $(u,w)$ appears infinitely often, which is a contradiction as $w\in\livegroupSingle(u)$.
    Therefore, $\play'$ visits $u$ infinitely often but the live edge $(u,v_\livegroupSingle)$ only appears finitely often.
    So, $\play'$ violates the live edge assumption, and hence, is winning.

    \paragraph*{$(\Leftarrow)$} For the other direction, let $\strat$ be a positional strategy for $\pz$ that is winning from some vertex $v\in V$ in game $\auggame'$ and let $\play$ be a $\strat$-play in $\auggame$.
    We need to show that $\play$ is winning in $\auggame$.
    By construction, there exists $\strat$-plays in $\auggame'$ such that restricting them to $V$ results in play $\play$.
    Consider one of such play $\play'$ such that for each edge $(u,w)\in \livegroupSingle$ in some live group appearing infinitely often, $\play'$ visits $v_\livegroupSingle$ (from $u$) infinitely often
    (if such an edge $(u,w)$ belongs to multiple live groups $\livegroupSingle$, then $\play'$ visits all their corresponding vertices $v_\livegroupSingle$ infinitely often by alternating between them).
    As $\play'$ is a $\strat$-play in $\auggame'$, either it satisfies the specification $\spec'$ or it violates some live edge assumption $(u,v_\livegroupSingle)$.
    For the former case, as argued before, $\play$ also satisfies the specification $\spec$ and hence, is winning.
    For the later case, $\play'$ visits $u$ infinitely often but visits $v_\livegroupSingle$ finitely often.
    By construction of $\play'$, if some edge $(u,w)\in\livegroupSingle$ appears infinitely often in $\play$, then $\play'$ visits $v_\livegroupSingle$ infinitely often, which is a contradiction.
    Hence, no edges in $\livegroupSingle$ appears infinitely often in $\play$.
    However, as $\play'$ visits $u$ infinitely often, so does $\play$.
    Therefore, $\play$ violates the live group assumption $\assumpgrlive(\livegroups_u)$, and hence, is winning.\qed
\end{proof}

\section{Proof of \cref{thm:livegroups-product-equivalence}}\label{appendix:thm:livegroups-product-equivalence}
\restateproducteq*
\begin{proof}
    We will use the construction given in the proof sketch of~\cref{thm:livegroups-product-equivalence}.
    For the rest of the proof, whenever we mention a play on a game, we will assume it starts from the initial vertex of the game.

    We want to show that the initial state $v_\init$ of $\auggame$ is winning if and only if the initial state $(v^1_0, v^2_0)$ of the product game $\product = \game_{\labeling^1} \times G_{\assumplive(E^\ell), \labeling^2}$ is winning. 
    Instead, we will show the equivalence of plays on $\product$ and $\auggame$, i.e. we will show that a play $\rho$ is winning in $\product$ if and only if its projection $\proj^1(\rho)$ to the game $\game_{\labeling^1}$ (and equivalently, to $\auggame$ is winning in $\auggame$.
    
    This argument is sufficient to show the desired result since \begin{inparaenum}[1)] \item there exists a one-to-one correspondance between the plays of $\auggame$ and $\product$, %
    \item for all $n$, from the $n^{th}$ vertex in all plays the same player choses the next state.\label{inpara:two} \end{inparaenum}
    These two conditions together give us the equivalence of underlying game graphs of the two games.
    The latter follows from both games being alternating and the initial vertices belonging to $\po$ in both games.
    
    To see the former, observe that: \begin{inparaenum}[a)] \item all $\pz$ edges are labeled with $a$ in both games,\label{inpara:a} \item if $v_k \in V^1_1$ then $v'_k \in V^2_1$ (follows from~\cref{inpara:two}),\label{inpara:b} \item $u_0$ is the only $\po$ vertex in $G_{\assumplive(E^\ell), \labeling^2}$ and it has outgoing edges labeled with all actions in $\Sigma$.\label{inpara:c}
       \end{inparaenum} 
    \cref{inpara:a},\ref{inpara:b}$\&$\ref{inpara:c} imply that all action sequences that can be taken from $v^1_0$ in $\game_{\labeling^1}$ can also be taken from $v^2_0$ in $G_{\assumplive(E^\ell), \labeling^2}$; and consequently can be taken from $(v^1_0, v^2_0)$ in the product game $\product$.
 
    Furthermore, since any play $\rho$ in $\product$ requires the existence of a play $\proj^1(\rho)$ in $\game_{\labeling^1}$ by the definition of $E_\product$, there is a one-to-one correspondance between the plays of $\product$ and $\auggame$. %

    Now we proceed to prove the equivalence of plays. 
    Let $\rho = (v_0, v'_0) (v_1, v'_1) \ldots$ be a winning play in $\product$. Then by construction, either $\proj^1(\rho)$ satisfies $\spec$ or $\rho$ violates a live group assumption $H_{(u_0, x_i)} = \{ (w, u_0) \to (v, x_i) \in E_\product \mid v,w \in V^1\}$.
    Again by construction, $(w, u_0) \to (v, x_i) \in E_\product$ if and only if there exists an edge $w \xrightarrow{h_i} v$ in $\game_{\labeling^1}$. In $\game_{\labeling^1}$, only the edges in the live group $H_i$ are labeled with $h_i$, 
    so this implies that $(w, v) \in H_i$. Since the live group $H_{(u_0, x_i)}$ is violated in $\rho$, there exists an edge in $(w, u_0) \to (v, x_i) \in H_{(u_0, x_i)}$ enabled infinitely often but non of the edges in $H_{(u_0, x_i)}$ is taken infinitely often.
    This means that in $\proj^1(\rho)$ the edge $w\to v \in H_i$ is enabled infinitely often but non of the edges in $H_i$ is taken infinitely often, which implies that $\proj^1(\rho)$ violates the group liveness condition in $\auggame$. 
    Which gives us that $\proj^1(\rho)$ is winning in $\auggame$.

    Now let $\rho$ be a play in $\product$ that is winning for $\p{1}$. Then by construction $\proj^1(\rho)$ does not satisfy $\spec$ and it satisfies the group liveness assumptions\\ $\livegroups_\product = \cup_{i \in [1;m]}\livegroupSingle_{(u_0, x_i)}$.
    
    Since $\rho$ satisfies the group liveness assumptions, for all $i \in [1;m]$ if an edge $(u_0, w) \to (x_i, v) \in H_{(u_0, x_i)}$ is enabled infinitely often, then an edge in $H_{(u_0, x_i)}$ is taken infinitely often in $\rho$.
    As before, this translates to the projection $\proj^1(\rho)$ as, if an edge $w \to v \in H_i$ is enabled infinitely often, then an edge from $H_i$ is taken infinitely often. This shows that $\proj^1(\rho)$, the projection of $\rho$ to $\auggame$, both violates $\spec$ and satisfies the assumptions, and is thus winning for $\p{1}$. \qed

\end{proof}

\section{Proof of \cref{thm:persGame}}\label{appendix:progress-group}

First, let us remark that one can solve the game $(\gamegraph,\spec)$ with $\spec = \LTLeventually A \vee \LTLalways S$ for some $A,S\subseteq V$ by reducing it to a safety game as formalized below.

\begin{remark}\label{rem:safety}
Given a game $\game = (\gamegraph = ( V, E), \spec)$ where $\spec = \LTLeventually A \vee \LTLalways S$ for some $A,S\subseteq V$, one can reduce the game to a smaller safety game $(\gamegraph', \spec' =\LTLalways S')$, where $S' = S\cup\{v_A\}$ and $\gamegraph'$ is the game graph obtained from $\gamegraph$ by merging all vertices in $A$ to a single new sink vertex $ v_A$, i.e., all incoming edges to $A$ are retained but $ v_A$ has only one outgoing edge that is $( v_A, v_A)$. In such a game, the winning region is $ V\setminus\attro{\gamegraph'}{V\setminus S'}$, see~\cite{gamesbook}.
\end{remark}

Now, let us prove \cref{thm:persGame}.
\restateprogressgroup*
\begin{proof}
    Suppose $\win$ be the winning region in the augmented game $\auggame$. 
    Using induction on the number of times $\reachPers(\cdot)$ is called, we show that the set returned by the algorithm is indeed $\win$, and the updated  strategy $\strat$ returned by the algorithm is a winning strategy in $\auggame$. 
    
    \paragraph*{Base case:}
    If $\reachPers(\cdot)$ is never called, i.e., the algorithm returned $(A,\strat)$ in \cref{alg:reachPers:end}. Hence, we need to show that $A=\win$.
    
    \paragraph*{$(\Rightarrow)$}First, let us show that $A \subseteq \win$.
    By the definition of attractor function $\Attrz{\gamegraph}{T}$, every $\strat_A$-play from $A$ eventually visits $T$, and hence, satisfies $\spec$ (which is stronger than $\assumpPers(\perslivegroups) \Rightarrow \spec$). Therefore, every vertex in $A$ is trivially winning in $\auggame$, and hence, $A \subseteq \win$.
    
    \paragraph*{$(\Leftarrow)$}Now, for the other direction, suppose $ v$ be a vertex such that $ v\not \in A$. It is enough to show that $ v\not \in \win$.
    As $ v\not \in A = \Attrz{\gamegraph}{T}$, $\p{0}$ can not force the plays to visit~$T$. 
    If $v\not\in \perssource$ for every $(\perssource,\persedges,\perstarget)\in \perslivegroups$, then the persistent live group assumptions are not relevant for vertex $ v$. 
    Now, suppose $ v\in\perssource$ for some $(\perssource,\persedges,\perstarget)\in \perslivegroups$.
    As the algorithm did not reach \cref{alg:reachPers:recursion}, for every persistent live group, one of the conditional statements, the one in \cref{alg:reachPers:if} or the one in \cref{alg:reachPers:B}, is not satisfied.
    If the statement in \cref{alg:reachPers:if} is not satisfied, i.e., $(\perssource\setminus A) \cap \pre(A) = \emptyset $, then there is no edge from $\perssource\setminus A$ to $A$, and hence, this persistent live group constraint does not help in reaching $A$ from $ V\setminus A$ anyway. 
    
    Next, if the statement in \cref{alg:reachPers:B} is not satisfied, then it holds that $B \subseteq A$. Hence, $ v\not \in B$. 
    As $B$ is the winning region for game $(\gamegraph|_{\persedges}, \spec_B)$ and such a game is determined~\cite{gamesbook}, $\p{1}$ has a strategy $\strat_1$ such that every $\strat_1$-play in this game starting from $ v$ satisfies $\neg \spec_B = \LTLalways \neg A \wedge \LTLeventually (\perstarget\cup  V\setminus\perssource)$. Therefore, every $\strat_1$-play trivially satisfies $\assumpPers(\perssource,\persedges,\perstarget)$ without ever reaching~$A$. Hence, if $\p{1}$ sticks to strategy $\strat_1$, $\p{0}$ can not make the plays from $ v$ visit $A \supseteq T$ using this constraint. 
    Therefore, in any case, $\p{0}$ has no strategy that can enforce a play from $ v$ to satisfy $\assumpPers(\perslivegroups) \Rightarrow \LTLeventually T$. Hence, $ v\not \in \win$.

    \paragraph*{$(\text{winning strategy})$}Now, let us show that the returned strategy $\strat$ is indeed a winning strategy in $\auggame$. 
    As $\strat_A$ is the attractor strategy to reach $T$, \cref{alg:reachPers:stratA}, it is easy to verify that every $\strat$-play starting from $A\setminus T$ eventually visits $T$, and hence satisfies $\spec$.
    Therefore, every $\strat$-play from $A$ is winning.

    \paragraph*{Induction case:}
    Suppose the algorithm returned $(C,\strat)$ in \cref{alg:reachPers:recursion} for some $(\perssource,\persedges,\perstarget)\in \perslivegroups$. By induction hypothesis, $C$ is the winning region and $\strat_C$ is a  winning strategy in the augmented game $\auggame_C = (\gamegraph,\spec_C,\assumpPers(\perslivegroups))$ with $\spec_C = \LTLeventually  (A\cup B)$. 
    
    \paragraph*{$(\Leftarrow)$}First, let us show that $\win \subseteq C$.
    By the definition of attractor set $\Attrz{\gamegraph}{\cdot}$, it is easy to see that $T\subseteq A$. So, every play in $\gamegraph$ satisfies $\LTLeventually T \Rightarrow \LTLeventually (A\cup B)$. Therefore, a winning play in augmented game $(\gamegraph, T,\assumpPers(\perslivegroups))$ is also winning in augmented game $(\gamegraph, A\cup B,\assumpPers(\perslivegroups))$. Therefore, $\win \subseteq C$.

    \paragraph*{$(\Rightarrow)$}Now, for the other direction, let us first show that $B \subseteq \win$.
    As $\strat_B$ is a winning strategy in game $\auggame_B$, every $\strat_B$-play $\play$ starting in $B$ satisfies $\spec_B$. 
    By definition of $\spec_B$, either $\play$ satisfies $\LTLeventually A$ or it satisfies $\LTLalways (\perssource\setminus\perstarget)$.
    Furthermore, as $\play$ is a play in $\gamegraph|_{\persedges}$, it satisfies 
    $\LTLalways (\perssource\wedge \assumpC(\persedges))$.
    Hence, if $\play$ satisfies $\assumpPers(\perssource,\persedges,\perstarget)$, then it also satisfies $\LTLeventually \perstarget$.
    Therefore, $\play$ can not satisfy both $\assumpPers(\perssource,\persedges,\perstarget)$ and  $\LTLalways (\perssource\setminus\perstarget)$. As a consequence, $\play$ satisfies $\assumpPers(\perssource,\persedges,\perstarget) \Rightarrow \LTLeventually A$.
    Furthermore, as we know, $A\subseteq \win$. 
    Therefore, $\play$ satisfies $\LTLeventually A \Rightarrow \LTLeventually \win$, and hence, satisfies $\assumpPers(\perssource,\persedges,\perstarget) \Rightarrow \LTLeventually \win$. So, every $\strat_B$-play starting in $B$ satisfies $\assumpPers(\perslivegroups) \Rightarrow \LTLeventually \win$. 
    Then, one can construct a $\p{0}$ strategy $\strat_0$ (i.e., the one that uses $\strat_B$ until the play reaches the winning region~$\win$ of game $\auggame$, and then switches to a winning strategy of game $\auggame$) such that every $\strat_0$-play starting in $B$ satisfies the following
    \[(\assumpPers(\perslivegroups)\Rightarrow \LTLeventually \win) \wedge \LTLalways (\win \wedge \assumpPers(\perslivegroups)\Rightarrow \LTLeventually T),\]
    and hence, satisfies $\assumpPers(\perslivegroups) \Rightarrow \LTLeventually T$. Therefore, $B \subseteq \win$.

    Now, let us the other direction for induction case, i.e., $C\subseteq \win$. 
    As $B\subseteq \win$ and $A\subseteq \win$ as proven by the arguments given in base case, 
    it holds that $A\cup B \subseteq \win$.
    So, every play in $\gamegraph$ satisfies $\LTLeventually (A\cup B) \Rightarrow\LTLeventually \win$.
    Furthermore, as $\strat_C$ is a winning strategy in game $\auggame_C$, every $\strat_C$-play starting in $C$ satisfies $\assumpPers(\perslivegroups)\Rightarrow \LTLeventually (A\cup B)$, and hence, satisfies $\assumpPers(\perslivegroups)\Rightarrow \LTLeventually \win$.
    Then, as in the last paragraph, one can construct a $\p{0}$ strategy $\strat_0$ (i.e., the one that uses $\strat_C$ until the play reaches the winning region~$\win$ of game $\auggame$, and then switches to a winning strategy of game $\auggame$) such that every $\strat_0$-play starting in $C$ satisfies the following
    \[(\assumpPers(\perslivegroups)\Rightarrow \LTLeventually \win) \wedge \LTLalways (\win \wedge \assumpPers(\perslivegroups)\Rightarrow \LTLeventually T).\]
    Hence, every $\strat_0$-play starting in $C$ satisfies $\assumpPers(\perslivegroups)\Rightarrow \LTLeventually T$. Therefore, $C \subseteq \win$.

    \paragraph*{$(\text{winning strategy})$}Now, let us show that the returned strategy $\strat$ in \cref{alg:reachPers:recursion} is also a winning strategy in game $\auggame$.
    As $\strat$ is follows strategy $\strat_C$ for vertices in  $C\setminus (A\cup B)$, every $\strat$-play from $C\setminus (A\cup B)$ eventually visits $A\cup B$ when $\assumpPers(\perslivegroups)$ holds.
    Now, let $\strat_M$ be the updated strategy until \cref{alg:reachPers:stratB}.
    Then, from \cref{alg:reachPers:stratA},\ref{alg:reachPers:stratB}, it is easy to see that $\strat(v) = \strat_M(v)$ for every vertex $v$ in $A\cup B$.
    As $\strat_B$ is a winning strategy in game $\auggame_B$, using \cref{alg:reachPers:stratB} and the discussion above, every $\strat$-play from $B\setminus A$ eventually visits $A$ when $\assumpPers(\perslivegroups)$ holds.
    Then, using arguments of base case, every $\strat$-play from $A\setminus T$ eventually visits $T$.
    Therefore, in total, every $\strat$-play from $C$ eventually visits $T$ when $\assumpPers(\perslivegroups)$ holds.
    Hence, $\strat$ is indeed a winning strategy in game $\auggame$.
    
    \paragraph*{Time complexity:}
    Let $k$ be the number of times $\reachPers(\cdot)$ is called. If $T= V$, then $A =  V$, and hence, $\perssource\setminus A = \emptyset$ for every $(\perssource,\persedges,\perstarget)\in \perslivegroups$, and hence, $\reachPers(\cdot)$ will never be called. Furthermore, if $T\neq  V$, then, by definition of $\Attrz{\gamegraph}{\cdot}$, it holds that $T\subseteq A$. So, in \cref{alg:reachPers:if}, we keep adding at least one vertex to the target for the next call of $\reachPers(\cdot)$. Hence, $k$ can be at most $\abs{ V}$.
    Moreover, in each iteration, we might need to solve game $(\gamegraph|_{\persedges}, \spec_B)$ for each $(\perssource,\persedges,\perstarget)\in \perslivegroups$; and using \cref{rem:safety}, solving such a game can be reduced to computing an attractor function $\attro{\gamegraph}{\cdot}$. As computing such an attractor function takes
    $\mathcal{O}(\abs{ E})$ time~\cite{gamesbook}, the algorithm takes $\mathcal{O}(\abs{\perslivegroups}\cdot\abs{ V}\cdot\abs{ E})$ time in total.\qed
\end{proof}

\section{Quasi-polynomial Algorithm for Parity Games Augmented with Persistent Live Groups}\label{appendix:quasi-progress-groups}\label{appendix:lemma-prog-restricted}
Let us first prove the lemma showing defined restriction to persistent live groups for restricted game graph is equivalent to original one.
\restateprogressgrouplemma*
\begin{proof}
    If $\play$ is winning in $\auggame$, then either it satisfies the parity objective $\spec$ or doesn't satisfy the persistent live group assumptions $\assumpPers(\perslivegroups)$. 
    If it satisfies $\spec$, then it also satisfies $\spec|_U$ as it only visits vertices in $U$.
    Else if it doesn't satisfy persistent live group assumptions, then there exists a persistent live group $(\perssource,\persedges,\perstarget)$ such that $\play$ doesn't satisfy $\assumpPers(\perssource,\persedges,\perstarget)$.
    Hence, there exists a suffix $\play'$ of $\play$ that satisfies $\LTLalways(\perssource\wedge\assumpC(\persedges))\wedge \LTLalways\neg \perstarget$.
    As $\play'$ stays inside $U$ and edges of $\persedges$ used in $\play'$ also belong to $\persedges|_U$, $\play'$ also satisfies $\LTLalways(\perssource|_U\wedge \assumpC(\persedges|_U))\wedge\LTLalways\neg \perstarget|_U$.
    Hence, $\play$ also doesn't satisfy the persistent live group assumptions $\assumpPers(\perslivegroups|_U)$.
    Therefore, in all cases, $\play$ is also winning in $\auggame|_U$.
    
    Now, for the other direction, let $\play$ is not winning in $\auggame$. 
    Hence, $\play$ satisfies the persistent live group assumptions $\assumpPers(\perslivegroups)$ but not parity objective $\spec$. 
    So, $\play$ also doesn't satisfy parity objective $\spec|_U$.
    We only need to show that $\play$ satisfies $\assumpPers(\perslivegroups|_U)$.
    Consider a persistent live group $(\perssource,\persedges,\perstarget)\in\perslivegroups$ and a suffix $\play'$ of $\play$.
    It is enough to show that $\play'$ satisfies $\assumpPers(\perssource|_U,\persedges|_U,\perstarget|_U)$.
    Since the suffix $\play'$ satisfies $\assumpPers(\perssource,\persedges,\perstarget)$, it satisfies one of the following: $\LTLeventually T$ or $\LTLeventually\neg\perssource$ or $\LTLeventually\neg\assumpC(\persedges)$.
    If $\play'$ satisfies $\LTLeventually T$ or $\LTLeventually\neg\perssource$, as $\play'$ always stays inside $U$, it also satisfies $\LTLeventually T|_U$ or $\LTLeventually\neg\perssource|_U$, respectively and hence, it satisfies $\assumpPers(\perssource|_U,\persedges|_U,\perstarget|_U)$.
    Else if $\play'$ satisfies $\LTLeventually\neg\assumpC(\persedges)$, then for some vertex $v\in\source(\persedges)$, $\play'$ visits $v$ but doesn't take any edge in $\persedges$ from $v$.
    If $v\in\source(\persedges|_U)$, then $\play'$ also satisfies $\LTLeventually\neg\assumpC(\persedges|_U)$.
    Otherwise, if $v\not\in\source(\persedges|_U)$, then by definition, $v\not\in\perssource|_U$, and hence, $\play'$ satisfies $\LTLeventually\neg\perssource|_U$.
    Therefore, in all cases, $\play'$ satisfies $\assumpPers(\perssource|_U,\persedges|_U,\perstarget|_U)$.
    So, $\play$ is not winning in $\auggame|_U$.\qed
\end{proof}

\algblockdefx[PROCEDURE]{Procedure}{EndProcedure}[3][]{\textbf{procedure} \textsc{#2}(#3)\ifthenelse{\equal{#1}{}}{}{\Comment{#1}}\\\textbf{begin}}{\textbf{end}}
\algloopdefx{If}[1]{\textbf{if} #1 \textbf{then}}
\algblockdefx[WHILE]{While}{EndWhile}[1]{\textbf{while} #1 \textbf{do begin}}{\textbf{end}}
\algblockdefx[DOWHILE]{DoBegin}{EndDoWhile}{\textbf{do begin}}[1]{\textbf{end while} #1}

Now, we provide a quasi-polynomial algorithm in \cref{alg:quasi-algo-progress} for parity games augmented with persistent live groups.
This algorithm is obtained by replacing every use of $\Attri{\gamegraph}{T}$ with $\AtrI(\gamegraph,T,\perslivegroups)$ in the algorithm given by Parys~\cite{quasiZielonka}.
Here, $\AtrE(\gamegraph,T,\perslivegroups)$ returns the winning region for the augmented reachability game $(\gamegraph,\LTLeventually T,\assumpPers(\perslivegroups))$, which can be computed in polynomial time by the procedure $\reachPers(\gamegraph,T,\perslivegroups)$.
Furthermore, $\AtrO(\gamegraph,T,\perslivegroups)$ is simply the attractor set $\attro{\gamegraph}{T}$ for $\po$.

We also provide the complexity analysis and the proof of correctness for the algorithm which almost same as the proof given by Parys~\cite{quasiZielonka} with some modifications.
\cref{theorem:pers-solve_augmented_parity} follows from the correctness of the algorithm along with its quasi-polynomial time complexity.

\begin{algorithm}[b!]
	\caption{$\solvePers(\gamegraph,\parity(\priority),\perslivegroups)$}\label{alg:quasi-algo-progress}
	\begin{algorithmic}[1]
        \Require An augmented parity game $\auggame = (\gamegraph,\parity(\priority),\assumpPers(\perslivegroups))$ with game graph $\gamegraph = (V,E)$, some persistent live groups $\perslivegroups$, and priority function $\priority: V\rightarrow [0;2d]$ 
        \Ensure Winning region and memoryless winning strategy in the augmented game $\auggame$  
    \Return $\qsolve_0(\auggame,2d,\abs{V},\abs{V})$
    \Procedure[$p_0,p_1$ are ``precision'' parameters]{QSolve$_i$}{$\auggame, h, p_i, p_{1-i}$}
		\If {$\auggame=\emptyset\lor p_i\leq 1$} \State \Return $\emptyset$;\Comment{w.l.o.g. assume that there are no self-loops in $\auggame$}
		\DoBegin
			\State $N_h\gets\{v\in\ V\mid \priority(v)=h\}$;
			\State $\Ha\gets \auggame|_{V\setminus\AtrI(G,N_h,\perslivegroups)}$;
			\State $W_{1-i}\gets \textsc{QSolve}_{1-i}(\Ha,h-1,\lfloor p_{1-i}/2\rfloor,p_i)$;
			\State $\auggame \gets \auggame|_{V\setminus\Atr{1-i}(G,W_{1-i},\perslivegroups)}$
		\EndDoWhile{$W_{1-i}\neq\emptyset$}
        \State $N_h\gets\{v\in\ V\mid \priority(v)=h\}$;
        \State $\Ha\gets \auggame|_{V\setminus\AtrI(G,N_h,\perslivegroups)}$;
        \State $W_{1-i}\gets \textsc{QSolve}_{1-i}(\Ha,h-1,p_{1-i},p_i)$;
        \State $\auggame \gets \auggame|_{V\setminus\Atr{1-i}(G,W_{1-i},\perslivegroups)}$
		\While{$W_1\neq\emptyset$}
        \State $N_h\gets\{v\in\ V\mid \priority(v)=h\}$;
        \State $\Ha\gets \auggame|_{V\setminus\AtrI(G,N_h,\perslivegroups)}$;
        \State $W_{1-i}\gets \textsc{QSolve}_{1-i}(\Ha,h-1,\lfloor p_{1-i}/2\rfloor,p_i)$;
        \State $\auggame \gets \auggame|_{V\setminus\Atr{1-i}(G,W_{1-i},\perslivegroups)}$
		\EndWhile
		\State \Return $\ V$;
	\EndProcedure
	\end{algorithmic}
	\end{algorithm}

\subsection*{Complexity Analysis}

Let us analyze the complexity of our algorithm.
Let $R(h,l)$ be the number of (nontrivial) executions of the $\textsc{QSolve}_0$ and $\textsc{QSolve}_1$ procedures performed during one call to $\textsc{QSolve}_0(\auggame,h,p_0,p_1)$ with 
$\lfloor\log p_0\rfloor+\lfloor\log p_1\rfloor=l$, 
and with $\auggame$ having at most $n$ vertices (where $n$ is fixed).
We only count here nontrivial executions, that is, such that do not leave the procedure in line 4.
Clearly $R(0,l)=R(h,0)=0$.
For $h,l\geq 1$ it holds that
\begin{align}
	R(h,l)\leq 1+n\cdot R(h-1,l-1)+R(h-1,l)\,.\label{eq:1}
\end{align}
Indeed, in $\textsc{QSolve}_0$ after every call to $\textsc{QSolve}_1$ we remove at least one vertex from $\auggame$,
with the exception of two such calls: the last call in line 8, and the last call ever.
In effect, in lines 8 and 18 we have at most $n$ calls to $\textsc{QSolve}_1$ with decreased precision 
(plus, potentially, the $(n+1)$-th call with empty $\auggame$, which is not included in $R(h,l)$),
and in line 13 we have one call to $\textsc{QSolve}_1$ with full precision. 
Notice that $\lfloor\log p_1\rfloor$ (hence also $l$) decreases by $1$ in the decreased-precision call.

Using Inequality~\eqref{eq:1} we now prove by induction that $R(h,l)\leq n^l\cdot\binom{h+l}{l}-1$.
For $h=0$ and for $l=0$ the inequality holds.
For $h,l\geq 1$ we have that 
\begin{align*}
	R(h,l)&\leq 1+n\cdot R(h-1,l-1)+R(h-1,l)\\
	&\leq 1+n\cdot\left(n^{l-1}\cdot\binom{h-1+l-1}{l-1}-1\right)+n^l\cdot\binom{h-1+l}{l}-1\\
	&\leq n^l\cdot\left(\binom{h-1+l}{l-1}+\binom{h-1+l}{l}\right)-1\\
	&=n^l\cdot\binom{h+l}{l}-1\,.
\end{align*}
In effect, $R(h,l)\leq n^l\cdot(h+l)^l$.
Recalling that we start with $l=2\cdot\lfloor\log n\rfloor$, we see that this number is quasi-polynomial in $n$ and $h$.
This concludes the proof, 
since obviously a single execution of the $\textsc{QSolve}_0$ procedure (not counting the running time of recursive calls) costs polynomial time.

\subsection*{Correctness}

We now justify correctness of the algorithm which is mainly based on properties of \emph{dominions}. Given a game $(\gamegraph,\spec)$, a set $U\subseteq V$ is said to be a \emph{dominion} for $\p{i}$ if they can enforce every play from $U$ to stay inside $U$ in addition to being winning for them, i.e., there exists a $\p{i}$ strategy $\strat$ such that every $\strat$-play from $U$ satisfies $\LTLalways U$ and is winning for $\p{i}$.
Dominions for augmented games $(\gamegraph,\spec,\assump)$ are defined w.r.t. the equivalent game $(\gamegraph,\assump\Rightarrow\spec)$.

Now, it is easy to see that winning region $\wini$ of $\p{i}$ is the largest dominion for that player. Using this information, the correctness of the algorithm amounts to proving the following theorem.

\begin{theorem}\label{theorem:quasi-progress-groups}
	Procedure $\textsc{QSolve}_0(\auggame,h,p_0,p_1)$ returns a set $W_0$ such that for every $S\subseteq\ V$,
	\begin{itemize}
	\item	if $S$ is a dominion for $\p{0}$, and $|S|\leq p_0$, then $S\subseteq W_0$, and
	\item	if $S$ is a dominion for $\p{1}$, and $|S|\leq p_1$, then $S\cap W_0=\emptyset$.
	\end{itemize}
\end{theorem}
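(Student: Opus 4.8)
The plan is to follow verbatim the correctness proof of Parys's quasi-polynomial algorithm~\cite{quasiZielonka} and to check that every place where Parys invokes an attractor remains valid once the classical attractors are replaced by the persistent-live-group attractors $\AtrE$ and $\AtrO$. First I would isolate the three structural facts on which the whole argument rests. (P1) By \cref{thm:persGame}, $\AtrE(\gamegraph, T, \perslivegroups)$ is exactly the winning region of the augmented reachability game $(\gamegraph, \LTLeventually T, \assumpPers(\perslivegroups))$, so from it $\pz$ has a memoryless strategy forcing a visit to $T$ whenever $\assumpPers(\perslivegroups)$ holds; by determinacy of this reachability-under-assumptions game (it reduces to a safety game), on the complement $\po$ has a strategy guaranteeing that $\assumpPers(\perslivegroups)$ holds while $T$ is never reached. (P2) As noted in the text, $\AtrO(\gamegraph, T, \perslivegroups) = \attro{\gamegraph}{T}$ is the ordinary $\po$-attractor, so on its complement $\pz$ can trap the play away from $T$ forever. (P3) By \cref{lemma:prog-restricted}, restricting the persistent live groups to a subset $U$ preserves the winning condition for plays confined to $U$, so $\pz$- and $\po$-dominions of a subgame lift to and restrict from the full game exactly as in the unaugmented case.

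With (P1)--(P3) in hand, I would run the same simultaneous induction as Parys, on $h + l$ with $l = \lfloor \log p_0\rfloor + \lfloor \log p_1 \rfloor$, matching the recursion in \cref{alg:quasi-algo-progress}. The base cases $\auggame = \emptyset$ and $p_i \le 1$ return $\emptyset$: the inclusion bullet is vacuous since, under the standing no-self-loop assumption, the only $\pz$-dominion of size at most $1$ is empty, and the exclusion bullet holds because $S \cap \emptyset = \emptyset$. For the inductive step I would treat $\textsc{QSolve}_0$ (the case $\textsc{QSolve}_1$ being symmetric). Here $h$ is even, $N_h$ is the top, good-for-$\pz$ priority, and the recursive calls return purported $\po$-winning sets $W_1$ in the subgames $\Ha = \auggame|_{V \setminus \AtrE(\gamegraph, N_h, \perslivegroups)}$.

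For the inclusion bullet, let $S$ be a $\pz$-dominion with $\abs{S} \le p_0$; I must show $S$ survives every removal $\auggame \gets \auggame|_{V \setminus \AtrO(\gamegraph, W_1, \perslivegroups)}$. Using the induction hypothesis (exclusion bullet) on each recursive call, every computed $W_1$ is a $\po$-dominion of the current subgame, hence by (P3) a $\po$-dominion of the current game; since opposite-player dominions are disjoint and, by (P2), $\pz$ can keep a play inside $S$ away from the $\po$-attractor of a $\po$-dominion, $S$ never meets $\AtrO(\gamegraph, W_1, \perslivegroups)$ and is therefore contained in the returned set. For the exclusion bullet, let $S$ be a $\po$-dominion with $\abs{S} \le p_1$; I must show all of $S$ is removed. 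The argument is Parys's precision accounting: because $h$ is the top even priority, (P1) makes the part of $S$ inside $\AtrE(\gamegraph, N_h, \perslivegroups)$ irrelevant to $\po$'s win, so $S$ meets the subgame $\Ha$ in a $\po$-dominion to which the induction hypothesis (inclusion bullet, in the subgame) applies; the doubling and halving of the precision parameters across the two loops guarantee that successive recursive calls capture $S \cap \Ha$ inside $W_1$, whose $\po$-attractor is then deleted, and iterating removes all of $S$ within the allotted precision. This is precisely the counting already carried out in the complexity analysis, now read as a coverage argument.

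The hard part will be verifying the trap/complement half of (P1) and the dominion-transfer lemmas of (P3) for the persistent setting, since $\AtrE$ is not a plain attractor but the solution of a reachability game under the safety-type assumption $\assumpPers$. Concretely, I must confirm that on the complement of $\AtrE(\gamegraph, N_h, \perslivegroups)$ the opponent can simultaneously keep $\assumpPers(\perslivegroups)$ satisfied and avoid $N_h$ --- so that restricting to $\Ha$ does not spuriously destroy $\po$-dominions --- and that a dominion of $\Ha$ remains a dominion after re-adding $\AtrE(\gamegraph, N_h, \perslivegroups)$. Both reduce to determinacy of the augmented reachability game together with \cref{lemma:prog-restricted}, but checking that the lifted $\po$-strategy never needs an edge that the restricted persistent live groups have invalidated is the delicate point where the $\perssource|_U$-pruning in the definition of $\perslivegroups|_U$ is essential.
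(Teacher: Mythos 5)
Your overall plan---rerun Parys's induction with $\AtrE/\AtrO$ in place of plain attractors, supported by \cref{thm:persGame}, \cref{lemma:prog-restricted}, and dominion-transfer lemmas---is exactly the paper's route, but two of your key steps are false as stated, and one of them skips over the actual crux of the proof. In the exclusion bullet you assert that a $\po$-dominion $S$ ``meets the subgame $\Ha$ in a $\po$-dominion''. This fails: $\Ha$ is obtained by deleting the \emph{opponent's} attractor $\AtrE(G,N_h,\perslivegroups)$, and $\po$'s dominion strategy at a vertex of $S\setminus\AtrE(G,N_h,\perslivegroups)$ may prescribe a move into $S\cap\AtrE(G,N_h,\perslivegroups)$, which no longer exists in $\Ha$; the transfer lemma (the paper's \cref{fact:usun-moj}) only applies when one removes the attractor of the dominion's \emph{own} player. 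The paper instead introduces $Z^i\subseteq S^i\setminus N_h^i$, the set of vertices from which $\po$ can win while keeping the play inside $S^i\setminus N_h^i$ (so never seeing the top even priority $h$), proves the nontrivial claim $(\clubsuit)$ that $Z^i\neq\emptyset$ whenever $S^i\neq\emptyset$ (otherwise $\pz$ could force infinitely many visits to $N_h^i$ without leaving $S^i$ or violating the parity condition, contradicting that $S^i$ is a $\po$-dominion), and only then obtains a $\po$-dominion of $\Ha^i$ via \cref{fact:usun-jego}, using $Z^i\cap N_h^i=\emptyset$. Moreover, the precision accounting is not ``precisely the counting already carried out in the complexity analysis'': it is a size argument about the residual dominion (when the first loop exits with $W_1^{k-1}=\emptyset$, necessarily $|Z^{k-1}|>\lfloor p_1/2\rfloor$; the single full-precision call then captures $Z^k\subseteq W_1^k$, forcing $|S^{k+1}|\leq\lfloor p_1/2\rfloor$, after which half precision suffices), not a count of recursive calls. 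Without the $Z$-construction and $(\clubsuit)$, your exclusion argument does not go through.

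Secondly, in the inclusion bullet you justify $S\cap\AtrO(G,W_1,\perslivegroups)=\emptyset$ by claiming that ``every computed $W_1$ is a $\po$-dominion of the current subgame''. That is also false: the recursive calls run with halved precision, and the theorem guarantees nothing about vertices lying in no small dominion, so $W_1$ may contain vertices that are in fact winning for $\pz$. The conclusion you need is nevertheless available by the route you gesture at: $S\setminus\AtrE(G,N_h,\perslivegroups)$ is a $\pz$-dominion of $\Ha$ of size at most $p_0$ by \cref{fact:usun-moj}, so the induction hypothesis' \emph{exclusion} bullet (applied to $\textsc{QSolve}_1$ by symmetry) gives $S\cap W_1=\emptyset$ directly, and then \cref{fact:usun-jego} shows $S$ survives the removal of $\AtrO(G,W_1,\perslivegroups)$. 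This second gap is thus repairable with the tools you already listed; the first one omits the essential idea of the proof.
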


Notice that in $\auggame$ there may be vertices that do not belong to any dominion smaller than $p_0$ or $p_1$; 
for such vertices we do not specify whether or not they are contained in $W_0$.

Before proving \cref{theorem:quasi-progress-groups}, let us observe two important facts about dominions 
that are important for the proof.

\begin{lemma}\label{fact:usun-moj}
	If $S$ is a dominion for $\p{P}$ in an augmented parity game $\auggame = (G=(V,E),\spec,\assumpPers(\perslivegroups))$, and $X$ is a set of vertices of $\auggame$, then $S\setminus\AtrP(G,X,\perslivegroups)$ is a dominion for $\p{P}$ in augmented game $\auggame|_{V\setminus\AtrP(G,X,\perslivegroups)}$.
\end{lemma}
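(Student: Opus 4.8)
The plan is to combine two structural facts---that the complement of a persistent attractor is a trap, and that a dominion is closed under the opponent's moves---and then transfer the winning condition from the full game to the subgame through \cref{lemma:prog-restricted}. Throughout, write $A := \AtrP(\gamegraph,X,\perslivegroups)$ and $U := V\setminus A$, so that $S\setminus A = S\cap U$ and the claimed subgame is $\auggame|_U$. Let $\strat$ be a $\p{P}$ strategy witnessing that $S$ is a dominion, i.e.\ every $\strat$-play from $S$ stays in $S$ (satisfies $\LTLalways S$) and is winning for $\p{P}$ in $\auggame$.

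First I would establish the \emph{trap property} of $U$ for $\p{P}$: every $\p{P}$-vertex in $U$ has all its outgoing edges in $U$, and every $\p{1-P}$-vertex in $U$ has at least one outgoing edge in $U$. For $P=1$, where $A=\AtrO=\attro{\gamegraph}{X}$, this is the textbook property of the complement of a standard attractor. For $P=0$, where $A=\AtrE$ is the winning region of the augmented reachability game $(\gamegraph,\LTLeventually X,\assumpPers(\perslivegroups))$, I argue by one-step closure: if a $\pz$-vertex $v\in U$ had an edge to some $w\in A$, then $\pz$ could move to $w$ and follow her winning strategy there. Since $\assumpPers(\perslivegroups)$ is a conjunction of $\LTLalways$-formulas, it is closed under taking suffixes, so prepending the single step $v\to w$ preserves the objective $\assumpPers(\perslivegroups)\Rightarrow\LTLeventually X$; hence $v\in A$, a contradiction. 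The $\p{1-P}$ clause is symmetric, and is needed only to guarantee that $\auggame|_U$ has no deadlocks.

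Next I would record the complementary closure of the dominion $S$: every $\p{1-P}$-vertex in $S$ has all outgoing edges in $S$, since otherwise $\p{1-P}$ could leave $S$ along a $\strat$-play starting from that vertex, contradicting $\LTLalways S$. With both properties in hand, I claim that $\strat$, restricted to the $\p{P}$-vertices of $S\cap U$, is a legal strategy in $\auggame|_U$ under which every play from $S\cap U$ stays in $S\cap U$. Indeed, at a $\p{P}$-vertex $v\in S\cap U$ the move $\strat(v)$ lies in $S$ (dominion) and in $U$ (trap), so it stays in $S\cap U$ and the edge survives the restriction; at a $\p{1-P}$-vertex $v\in S\cap U$, every edge of $\auggame|_U$ out of $v$ lies in $U$ by construction and in $S$ by closure of $S$, hence in $S\cap U$, and at least one such edge exists by the trap property.

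It remains to handle the winning condition. Each such play stays in $S$ and is consistent with $\strat$, so it is winning for $\p{P}$ in the full game $\auggame$; since it also stays in $U$, it is a play on $\gamegraph|_U$, and \cref{lemma:prog-restricted} then gives that it is winning for $\p{P}$ in $\auggame|_U$ as well. This exhibits $S\setminus A = S\cap U$ as a dominion for $\p{P}$ in $\auggame|_U$, as required. The only genuinely non-routine point is the trap property in the $P=0$ case; this is precisely where the safety-flavoured, suffix-closed nature of $\assumpPers(\perslivegroups)$ is used, after which the transfer of the winning condition is immediate from \cref{lemma:prog-restricted}.
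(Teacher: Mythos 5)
Your proof is correct and follows essentially the same route as the paper's: restrict the dominion strategy to the subgame $\auggame|_U$, use the one-step closure of the persistent attractor (and the dominion's closure under the opponent's moves) to show the strategy stays legal and plays remain in $S\cap U$, then invoke \cref{lemma:prog-restricted} to transfer the winning condition from $\auggame$ to $\auggame|_U$. The only difference is one of detail: you explicitly justify the trap property of $V\setminus\AtrP(G,X,\perslivegroups)$ via suffix-closure of $\assumpPers(\perslivegroups)$ and address the no-deadlock issue at opponent vertices, points the paper asserts only parenthetically.
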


\begin{proof}
	Denote $S'=S\setminus\AtrP(G,X,\perslivegroups)$ and $\auggame'=\auggame|_{V\setminus\AtrP(G,X,\perslivegroups)}$.
	By definition, $\p{P}$ has a strategy $\strat$ such that every $\strat$-play from from $S$ always stays in $S$ and is winning for that player  w.r.t. $\auggame$.
	The strategy $\strat$ remains valid in $\auggame'$, because every vertex $u$ of $\p{P}$ that remains in $\auggame'$ has the same successors in $\auggame'$ as in $\auggame$
	(conversely: if some of successors of $u$ belongs to $\AtrP(G,X,\perslivegroups)$, then $u$ also belongs to $\AtrP(G,X,\perslivegroups)$).	
	Hence, every $\strat$-play from $S'$ in $\auggame'$ always stays in $S'$ and is winning for $\p{P}$ w.r.t. $\auggame$, which implies it is also winning $\p{P}$ w.r.t. $\auggame'$ by \cref{lemma:prog-restricted}. 
	Therefore, $S'$ is also a dominion for $\p{P}$ in~$\auggame'$.
\qed
\end{proof}

\begin{lemma}\label{fact:usun-jego}
	If $S$ is a dominion for $\p{P}$ in an augmented parity game $\auggame = (G=(V,E),\spec,\assumpPers(\perslivegroups))$, and $X$ is a set of vertices of $\auggame$ such that $S\cap X=\emptyset$, 
	then $S$ is a dominion for $\p{P}$ in augmented game $\auggame|_{V\setminus\AtrPbar(G,X,\perslivegroups)}$ (in particular $S\subseteq V\setminus\AtrPbar(G,X,\perslivegroups)$).
\end{lemma}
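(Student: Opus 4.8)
The plan is to mirror the proof of \cref{fact:usun-moj}, but since we now delete the \emph{opponent's} persistent attractor rather than player $\p{P}$'s own, the transfer of the dominion strategy is only sound once we know that $S$ lies entirely outside $\AtrPbar(G,X,\perslivegroups)$. I would therefore split the argument into two parts: first establishing the parenthetical disjointness $S\cap\AtrPbar(G,X,\perslivegroups)=\emptyset$, and then showing that the witnessing dominion strategy for $\p{P}$ survives the restriction and stays winning.

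For the disjointness I would argue by contradiction. Suppose some $v\in S$ also lies in $\AtrPbar(G,X,\perslivegroups)$, so $\p{1-P}$ can force a play from $v$ into $X$ via some strategy $\sigma$. Let $\strat$ be the dominion strategy of $\p{P}$, which keeps every $\strat$-play from $S$ inside $S$ and winning for $\p{P}$; since $S\cap X=\emptyset$, such plays never reach $X$. The unique play from $v$ compliant with both $\sigma$ and $\strat$ then stays in $S$, hence avoids $X$, yet $\sigma$ was supposed to force $X$ --- the desired contradiction. The only care needed is in distinguishing the two meanings of $\AtrPbar$: when $1-P=1$ this is the plain attractor $\attro{G}{X}$, whose strategy reaches $X$ in finitely many steps regardless of assumptions, so the contradiction is immediate. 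When $1-P=0$, the set $\AtrPbar=\AtrE(G,X,\perslivegroups)$ only guarantees $\assumpPers(\perslivegroups)\Rightarrow\LTLeventually X$ along $\sigma$-plays; here I would additionally use that a dominion play for $\p{P}=\po$ is winning for $\po$, i.e.\ satisfies $\neg(\assumpPers(\perslivegroups)\Rightarrow\spec)$ and in particular satisfies $\assumpPers(\perslivegroups)$, so that $\sigma$'s guarantee does force $\LTLeventually X$, again contradicting confinement to $S$.

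Once $S\subseteq V\setminus\AtrPbar(G,X,\perslivegroups)$ is known, I would transfer the dominion strategy as in \cref{fact:usun-moj}. Writing $\auggame'=\auggame|_{V\setminus\AtrPbar(G,X,\perslivegroups)}$, I note that at every $\p{P}$-vertex of $S$ the strategy $\strat$ picks a successor in $S\subseteq V\setminus\AtrPbar(G,X,\perslivegroups)$, and at every $\p{1-P}$-vertex of $S$ \emph{all} successors already lie in $S$ (otherwise $\p{1-P}$ could leave $S$, contradicting that $S$ is a dominion); hence no edge used along $\strat$-plays from $S$ is deleted by the restriction, and $\strat$ is a legal strategy in $\auggame'$. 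Consequently every $\strat$-play from $S$ in $\auggame'$ is also a $\strat$-play from $S$ in $\auggame$, so it stays in $S$ and is winning for $\p{P}$ in $\auggame$; since it is a play in $G|_{V\setminus\AtrPbar(G,X,\perslivegroups)}$, \cref{lemma:prog-restricted} yields that it is winning for $\p{P}$ in $\auggame'$ as well. Thus $S$ is a dominion for $\p{P}$ in $\auggame'$.

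The main obstacle I anticipate is the $1-P=0$ branch of the disjointness argument: there one must not treat $\AtrE$ as a hard reachability set, but instead combine the assumption-conditioned reachability guarantee of the persistent attractor strategy with the fact that any player-$\po$ dominion play satisfies $\assumpPers(\perslivegroups)$. Everything else is a routine adaptation of \cref{fact:usun-moj} together with \cref{lemma:prog-restricted}.
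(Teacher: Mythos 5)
Your proof is correct and follows essentially the same route as the paper's: it first establishes $S\cap\AtrPbar(G,X,\perslivegroups)=\emptyset$ by playing the dominion strategy of $\p{P}$ against the opponent's attractor strategy, and then transfers the dominion strategy to the restricted game using \cref{lemma:prog-restricted}. Your explicit case distinction on $1-P$ --- in particular, using that a $\po$-dominion play satisfies $\assumpPers(\perslivegroups)$, so the conditional guarantee $\assumpPers(\perslivegroups)\Rightarrow\LTLeventually X$ of the persistent attractor strategy still forces $\LTLeventually X$ --- spells out precisely what the paper compresses into its citation of \cref{thm:persGame}, so it is a welcome clarification rather than a deviation.
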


\begin{proof}
	Denote $\auggame'=\auggame|_{V\setminus\AtrPbar(G,X,\perslivegroups)}$.
	Suppose that there is some vertex $v\in S\cap\AtrPbar(G,X,\perslivegroups)$.
	On the one hand, $\p{P}$ can guarantee that, while starting from $v$, the play stays in $S$ (by the definition of a dominion);
	on the other hand, $\p{1-P}$ can force to reach the set $X$ (by the definition of an attractor and \cref{thm:persGame}), which is disjoint from $S$.
	Thus such a vertex $v$ could not exist, we have $S\subseteq V\setminus\AtrPbar(G,X,\perslivegroups)$.
	
	Moreover, as $S$ remains unchanged in $\auggame'$,
	it is easy to see that, $\p{P}$ has a strategy $\strat$ in $\auggame'$ such that every $\strat$-play from $S$ stays is inside $S$ and is winning for that player.
\qed
\end{proof}

We are now ready to prove \cref{theorem:quasi-progress-groups}.

\begin{proof}[of \cref{theorem:quasi-progress-groups}]
	We prove the theorem by induction on $h$.
	Consider some execution of the procedure.
	By $\auggame^i, N_h^i, \Ha^i, W_1^i$, we denote values of the variables $\auggame, N_h, \Ha, W_1$ just after the $i$-th call to $\textsc{QSolve}_1$ in one of the lines 8, 13, 18;
	in lines 9, 14, 19 we create $\auggame^{i+1}$ out of $\auggame^i$ and $W_1^i$.
	Furthermore, by $\gamegraph^i,V^i,E^i,\spec^i,\perslivegroups^i,\assump^i$, we denote the corresponding game graph, vertex set, edge set, specification, set of persistent live groups, persistent live group assumptions, respectively of game $\auggame^i$.
	In particular $\auggame^1$ equals the original game $\auggame$, and at the end we return $V^{m+1}$, where $m$ is the number of calls to $\textsc{QSolve}_1$.

	Concentrate on the first item of the theorem: fix an $\p{0}$'s dominion $S$ in $\auggame$ (i.e., in $\auggame^1$) such that $|S|\leq p_0$.
	Assume that $S\neq\emptyset$ (for $S=\emptyset$ there is nothing to prove).
	Notice first that a nonempty dominion has at least two vertices (by assumption there are no self-loops in $\auggame$, hence every play has to visit at least two vertices),
	thus, because $S\subseteq\ V$ and $|S|\leq p_0$, we have that $\auggame\neq\emptyset$ and $p_0>1$.
	It means that the procedure does not return in line 4.
	We thus need to prove that $S\subseteq V^{m+1}$.
	
	We actually prove that $S$ is a dominion for $\p{0}$ in $\auggame^i$ for every $i\in [1;m+1]$, meaning in particular that $S\subseteq V^i$.
	This is shown by an internal induction on $i$.
	The base case ($i=1$) holds by assumption.
	For the induction step, consider some $i\in [1;m]$.
	By the induction assumption $S$ is a dominion for $\p{0}$ in $\auggame^i$, and we need to prove that it is a dominion for $\p{0}$ in $\auggame^{i+1}$.
	
	Consider $S^i=S\cap (V^i\setminus \AtrE(\gamegraph^i,N_h^i,\perslivegroups^i))$.
	Because $S^i=S\setminus\AtrE(G^i,N_h^i,\perslivegroups^i)$, by \cref{fact:usun-moj} the set $S^i$ is a dominion for $\p{0}$ in $\Ha^i=\auggame^i|_{V^i\setminus\AtrE(G^i,N_h^i,\perslivegroups^i)}$, and obviously $|S^i|\leq|S|\leq p_0$.
	By the assumption of the external induction (which can be applied to $\textsc{QSolve}_1$, by symmetry) 
	it follows that $S^i\cap W_1^i=\emptyset$, so also $S\cap W_1^i=\emptyset$ (because $W_1^i$ contains only vertices of $\auggame^i$, while $S\setminus S^i$ contains no vertices of $\auggame^i$).
	Thus, by \cref{fact:usun-jego} the set $S$ is a dominion for $\p{0}$ in $\auggame^{i+1}=\auggame^i|_{V^i\setminus\AtrO(G^i,W_1^i,\perslivegroups^i)}$.
	This finishes the proof of the first item.

	Now we prove the second item of the theorem.
	To this end, fix some $\p{1}$'s dominion $S$ in $\auggame$ such that $|S|\leq p_1$.
	If $p_0\leq 1$, we return $W_0=\emptyset$ (line 4), so clearly $S\cap W_0=\emptyset$.
	The interesting case is when $p_0\geq 2$.
	Denote $S^i=S\cap V^i$ for all $i\in [1;m+1]$; we first prove that $S^i$ is a dominion for $\p{1}$ in $\auggame^i$.
	This is shown by induction on $i$.
	The base case of $i=1$ holds by assumption, because $\auggame^1=\auggame$ and $S^1=S$.
	For the induction step, assume that $S^i$ is a dominion for $\p{1}$ in $\auggame^i$, for some $i\in [1;m]$.
	By definition $\auggame^{i+1}=\auggame^i|_{V^i\setminus\AtrO(G^i,W_1^i,\perslivegroups^i)}$ and $S^{i+1}=S^i\setminus\AtrO(G^i,W_1^i,\perslivegroups^i)$,
	so $S^{i+1}$ is a dominion for $\p{1}$ in $\auggame^{i+1}$ by \cref{fact:usun-moj}, which finishes the inductive proof.
	
	For $i\in [1;m]$, let $Z^i$ be the set of vertices (in $S^i\setminus N_h^i$) from which $\p{1}$ has a strategy $\strat^i$ such that every $\strat^i$-play from $Z^i$ satifies $\assump^i\wedge\neg\spec^i\wedge\LTLalways (S^i\setminus N_h^i)$ in $\auggame^i$
	(that is, where $\p{1}$ can win without seeing priority $h$---the highest even priority).
	Let us observe that if $S^i\neq\emptyset$ then $Z^i\neq\emptyset$ ($\clubsuit$).
	Indeed, suppose to the contrary that $Z^i=\emptyset$, 
	and consider an $\p{1}$'s strategy $\strat^i_2$ such that every $\strat^i_2$-play from some vertex $v_0\in S^i$  satifies $\assump^i\wedge\neg\spec^i\wedge\LTLalways S^i$ in $\auggame^i$.
	Because $v_0\not\in Z^i$, this strategy cannot ensure $\LTLalways (S^i\setminus N_h^i)$, 
	so $\p{0}$, while playing against this strategy, can reach a vertex $v_1$ in $N_h^i$ (as she cannot violate the parity condition nor leave $S^i$).
	For the same reason, because $v_1\not\in Z^i$, $\p{0}$ can continue and reach a vertex $v_2$ in $N_h^i$.
	Repeating this forever, $\p{0}$ gets priority $h$ (which is even and is the highest priority) infinitely many times, resulting in a $\strat^i_2$-play that satisfies $\spec^i$ and hence, contradicting the assumptions about the strategy~$\strat^i_2$.
	
	Observe also that from vertices of $Z^i$, $\p{1}$ can actually satisfy $\assump^i\wedge\neg\spec^i\wedge\LTLalways Z^i$ in $\auggame^i$,
	using the strategy $\strat^i$ that allows him to satisfy $\assump^i\wedge\neg\spec^i\wedge\LTLalways (S^i\setminus N_h^i)$.
	Indeed, if a $\strat^i$-play $\play$ enters some vertex $v\in Z^i$, then from this vertex $v$, $\p{1}$ can still satisfy $\assump^i\wedge\neg\spec^i\wedge\LTLalways (S^i\setminus N_h^i)$,
	which means that these vertices belongs to $Z^i$.
	It follows that $Z^i$ is a dominion for $\p{1}$ in $\auggame^i$.
	Moreover, because $Z^i\cap N_h^i=\emptyset$, from \cref{fact:usun-jego} we have that $Z^i$ is a dominion for $\p{1}$ in $\Ha^i=\auggame^i|_{V^i\setminus\AtrE(G^i,N_h^i,\perslivegroups^i)}$.
	
	Let $k$ be the number of the call to $\textsc{QSolve}_1$ that is performed in line 13 (calls number $1,\dots,k-1$ are performed in line 8, and calls number $k+1,\dots,m$ are performed in line 18).
	Recall that $W_1^i$ is the set returned by a call to $\textsc{QSolve}_1(\Ha^i,h-1,p_1^i,p_0)$, where $p_1^k=p_1$, and $p_1^i=\lfloor\frac{p_1}{2}\rfloor$ if $i\neq k$.
	From the assumption of the external induction, if $|Z^i|\leq\lfloor\frac{p_1}{2}\rfloor$ or if $i=k$ (since $Z^i\subseteq S^i\subseteq S$ and $|S|\leq p_1$, clearly $|Z^i|\leq p_1$), 
	we obtain that $Z^i\subseteq W_1^i$ ($\spadesuit$).
	
	We now prove that $|S^{k+1}|\leq\lfloor\frac{p_1}{2}\rfloor$.
	This clearly holds if $S^{k-1}=\emptyset$, because $S^{k+1}\subseteq S^k\subseteq S^{k-1}$.
	Suppose thus that $S^{k-1}\neq\emptyset$.
	Then $Z^{k-1}\neq\emptyset$, by ($\clubsuit$).
	On the other hand, $W_1^{k-1}=\emptyset$, because we are just about to leave the loop in lines 5-10 (the $k$-th call to $\textsc{QSolve}_1$ is in line 13).
	By ($\spadesuit$), if $|Z^{k-1}|\leq\lfloor\frac{p_1}{2}\rfloor$, then $Z^{k-1}\subseteq W_1^{k-1}$, which does not hold in our case.
	Thus $|Z^{k-1}|>\lfloor\frac{p_1}{2}\rfloor$.
	Because $W_1^{k-1}=\emptyset$, we simply have $\auggame^k=\auggame^{k-1}$, and $S^k=S^{k-1}$, and $Z^k=Z^{k-1}$.
	Using ($\spadesuit$) for $i=k$, we obtain that $Z^k\subseteq W_1^k$, and because $S^{k+1}=S^k\setminus\AtrO(G^k,W_1^k,\perslivegroups^k)\subseteq S^k\setminus W_1^k\subseteq S^k\setminus Z^k$
	we obtain that $|S^{k+1}|\leq|S^k|-|Z^k|\leq p_1-(\lfloor\frac{p_1}{2}\rfloor+1)\leq\lfloor\frac{p_1}{2}\rfloor$, as initially claimed.
	
	If $k=m$, we have $Z^m\subseteq W_1^m$ by ($\spadesuit$).
	If $k+1\leq m$, we have $S^m\subseteq S^{k+1}$ (our procedure only removes vertices from the game) and $Z^m\subseteq S^m$, 
	so $|Z^m|\leq\lfloor\frac{p_1}{2}\rfloor$ by the above paragraph,
	and also $Z^m\subseteq W_1^m$ by ($\spadesuit$).
	Because after the $m$-th call to $\textsc{QSolve}_1$ the procedure ends, we have $W_1^m=\emptyset$, so also $Z^m=\emptyset$, and thus $S^m=\emptyset$ by ($\clubsuit$).
	We have $S^{m+1}\subseteq S^m$, so $S^{m+1}=S\cap V^{m+1}=\emptyset$.
	This is exactly the conclusion of the theorem, since the set returned by the procedure is $V^{m+1}$.\qed
\end{proof}

\end{document}